\newtheorem{theorem}{Theorem}
\newtheorem{lemma}[theorem]{Lemma}
\newtheorem{claim}[theorem]{Claim}
\newtheorem{fact}[theorem]{Fact}
\newtheorem{observation}[theorem]{Observation}
\newtheorem{corollary}[theorem]{Corollary}
\theoremstyle{definition}
\theoremstyle{remark}
\Crefname{fact}{Fact}{Facts}
\Crefname{claim}{Claim}{Claims}
\Crefname{observation}{Observation}{Observations}
\DeclareMathOperator{\E}{\mathbb{E}}
\newcommand{\R}{\mathbb{R}}
\newcommand{\Z}{\mathbb{Z}}
\newcommand{\calS}{\mathcal{S}}
\newcommand{\calC}{\mathcal{C}}
\newcommand{\calE}{\mathcal{E}}
\newcommand{\calD}{\mathcal{D}}
\newcommand{\Exp}{\mathsf{Exp}}
\DeclareMathOperator{\argmin}{argmin}
\newcommand{\vcirc}{v^{\circ}}
\newcommand{\ucirc}{u^{\circ}}
\newcommand{\vhatcirc}{\widehat{v}^{\circ}}
\newcommand{\xhat}{\widehat{x}}
\newcommand{\zhat}{\widehat{z}}
\newcommand{\rhat}{\widehat{r}}
\newcommand{\prechat}{\widehat{\prec}}
\newcommand{\Bhat}{\widehat{B}}
\newcommand{\Chat}{\widehat{C}}
\newcommand{\Ehat}{\widehat{E}}
\newcommand{\Ghat}{\widehat{G}}
\newcommand{\Uhat}{\widehat{U}}
\newcommand{\Vhat}{\widehat{V}}
\newcommand{\That}{\widehat{T}}
\newcommand{\Zhat}{\widehat{Z}}
\newcommand{\vup}[1]{v^{(#1)}}
\newcommand{\rup}[1]{r^{(#1)}}
\newcommand{\xup}[1]{x^{(#1)}}
\newcommand{\Tup}[1]{T^{(#1)}}
\newcommand{\Zup}[1]{Z^{(#1)}}
\newcommand{\Gup}[1]{G^{(#1)}}
\newcommand{\Vup}[1]{V^{(#1)}}
\newcommand{\rhatup}[1]{\widehat{r}^{(#1)}}
\newcommand{\Thatup}[1]{\widehat{T}^{(#1)}}
\newcommand{\rstar}{r^\star}
\newcommand{\floor}[1]{\left\lfloor #1 \right\rfloor}
\title{Online Rounding for Set Cover under Subset Arrivals\footnote{Supported by NCN grant number 2020/39/B/ST6/01641.}}
\author{Jarosław Byrka\footnote{Institute of Computer Science, University of Wrocław. Email: \texttt{jby@cs.uni.wroc.pl}.} \and Yongho Shin\footnote{Institute of Computer Science, University of Wrocław. Email: \texttt{yongho@cs.uni.wroc.pl}.}}
\date{}
\begin{document}
\maketitle

\begin{abstract}
A \emph{rounding scheme} for set cover has served as an important component in design of approximation algorithms for the problem, and there exists an $H_s$-approximate rounding scheme, where $s$ denotes the maximum subset size, directly implying an approximation algorithm with the same approximation guarantee.
A rounding scheme has also been considered under some \emph{online} models, and in particular, under the \emph{element arrival} model used as a crucial subroutine in algorithms for \emph{online set cover}, an $O(\log s)$-competitive rounding scheme is known [Buchbinder, Chen, and Naor, SODA 2014].
On the other hand, under a more general model, called the \emph{subset arrival} model, only a simple $O(\log n)$-competitive rounding scheme is known, where $n$ denotes the number of elements in the ground set.

In this paper, we present an $O(\log^2 s)$-competitive rounding scheme under the subset arrival model, with one mild assumption that $s$ is known upfront.
Using our rounding scheme, we immediately obtain an $O(\log^2 s)$-approximation algorithm for \emph{multi-stage stochastic set cover}, improving upon the existing algorithms [Swamy and Shmoys, SICOMP 2012; Byrka and Srinivasan, SIDMA 2018] when $s$ is small enough compared to the number of stages and the number of elements.
Lastly, for set cover with $s = 2$, also known as \emph{edge cover}, we present a $1.8$-competitive rounding scheme under the edge arrival model.
\end{abstract}

% for no number on title page
\addtocounter{page}{-1}
\thispagestyle{empty}
\newpage

\section{Introduction} \label{sec:intro}
Given a set system $(U, \calS)$ of $n := |U|$ elements and $m := |\calS|$ subsets with a cost function $c: \calS \to \R_{\geq 0}$, the \emph{set cover} problem asks to find a collection of subsets $\calC \subseteq \calS$ that covers the ground set, i.e., $\bigcup_{S \in \calC} S = U$, at minimum total cost $\sum_{S \in \calC} c(S)$.
As one of the most notable NP-hard problems~\cite{karp2009reducibility}, this classic problem has been intensively studied in the literature~\cite{johnson1973approximation, lovasz1975ratio, chvatal1979greedy, rajagopalan1993primal, lund1994hardness, feige1998threshold, dinur2014analytical}, and an abundance of related positive and negative results has significantly influenced the field of approximation algorithms and combinatorial optimization~\cite{vazirani2001approximation}.

\emph{Linear programming (LP) rounding} is a standard algorithmic technique in design of approximation algorithms~\cite{ williamson2011design}, and set cover is also not an exception~\cite{kolliopoulos2005approximation, srinivasan2006extension}.
The technique consists of two steps: the algorithm first computes an optimal \emph{fractional} solution to an \emph{LP relaxation} of a problem, and then it \emph{rounds} this solution into an \emph{integral} solution with a small loss in cost.
In this paper, we separately model a \emph{rounding scheme} for set cover as an algorithm that takes a feasible fractional set cover $x \in \R^{\calS}_{\geq 0}$ as input and outputs an integral set cover; we say a rounding scheme is \emph{$\rho$-approximate} if the (expected) cost of its output is within a factor $\rho$ from the cost of the input fractional set cover $x$.
As folklore, an $H_s$-approximate rounding scheme for set cover is known, where $s := \max_{S \in \calS} |S|$ denotes the maximum subset size in the set system, and $H_k$ denotes the $k$-th harmonic number for any $k \in \Z_{\geq 1}$; this rounding scheme immediately implies an $H_s$-approximation algorithm for set cover.
See also Buchbinder et al.~\cite{buchbinder2018simplex} for a randomized $(1 + \ln s)$-approximate rounding scheme.

Rounding schemes have also been studied under various \emph{online} models, and a couple of online rounding scheme models for set cover have been investigated in the literature~\cite{alon2003online, buchbinder2009online, buchbinder2014competitive, byrka2018approximation}.
In this case, a rounding scheme is now given a fractional set cover $x \in \R^{\calS}_{\geq 0}$ that is revealed one-by-one or updated over time and outputs an integral set cover by irrevocably selecting some subsets if needed at each time; we say a rounding scheme is \emph{$\rho$-competitive} under an online model if the (expected) cost of its output is within a factor $\rho$ from the cost incurred by the final $x$.

One prominent online model of rounding schemes for set cover, which we call the \emph{element arrival} model, is defined as follows.
The ground set system $(U, \calS)$ and cost $c$ are known upfront, while a subset $U' \subseteq U$ of elements are given one at a time in an online manner.
Upon the arrival of each $u \in U'$, the fractional solution $x \in \R^{\calS}_{\geq 0}$ given as input to the rounding scheme is updated to fully cover $u$ without ever decreasing any entry of $x$.
When $U'$ are all revealed, the rounding scheme must output a solution $\calC \subseteq \calS$ covering $U'$, i.e., $\bigcup_{S \in \calC} S \supseteq U'$.
Obviously, this rounding scheme model originates from the \emph{online set cover} problem~\cite{alon2003online}.
A general approach of designing a competitive algorithm for this problem is to combine an algorithm for the fractional setting and a rounding scheme under the element arrival model.
Now, there exists an $O(\log d \, \log s)$-competitive algorithm for online set cover, where $d$ denotes the maximum number of subsets to which an element belongs, which is attained by combining a fractional $O(\log d)$-competitive algorithm~\cite{alon2003online, buchbinder2009online} and a $(1 + \ln s)$-competitive rounding scheme~\cite{buchbinder2014competitive}.

Another online rounding scheme model present in the literature~\cite{byrka2018approximation} is called the \emph{subset arrival} model in this paper.
Here, only the ground element set $U$ is known in advance, while each subset $S \in \calS$ (and its cost as well) is revealed one-by-one in an online manner.
When a subset $S$ is revealed, its solution value $x_S \geq 0$ is also given to the rounding scheme, and at this moment, the scheme must decide to select $S$ or not, which cannot be revoked at a later time point.
At termination, the final $x$ is guaranteed to be a feasible fractional set cover, and the rounding scheme must output a feasible integral set cover.
Note that this model generalizes the element arrival model since, whenever an entry $x_S$ of $x$ gets increased by $\Delta$ in the element arrival model, this can be regarded in the subset arrival model as an arrival of a new subset containing the same elements as $S$ with its solution value equal to the increment $\Delta$.

Due to its generality, this rounding scheme model can capture a wider range of problem settings that incorporate uncertainty in the set cover problem.
In particular, it can be utilized even when the cost of a subset may increase over time, which is impossible through the element arrival model.
A celebrated problem setting where this type of uncertainty is introduced is \emph{multi-stage stochastic set cover}~\cite{shmoys2006approximation, swamy2012sampling}.
In this problem, we are given a distribution on possible arrivals of the ground elements revealed in pieces over stages, while the cost of each subset may inflate over stages.
The objective is to construct a solution covering all the arrived elements realized from the distribution at the end while minimizing the total expected cost.
Swamy and Shmoys showed in their seminal paper~\cite{swamy2012sampling} that, given an offline $\rho$-approximate rounding scheme, there exists a $(k\,\rho + \varepsilon)$-approximation algorithm for this problem for any constant $\varepsilon > 0$, where the number of stages is a constant $k$, immediately yielding a $(k \, H_s + \varepsilon)$-approximation algorithm due to the aforementioned offline rounding scheme.
Later, Byrka and Srinivasan~\cite{byrka2018approximation} argued that a $\rho$-competitive rounding scheme under the subset arrival model implies a $(\rho + \varepsilon)$-approximation algorithm for the problem, and presented a simple rounding scheme of competitive factor $O(\log n)$, independent from the number of stages.

Note that, for the offline and online element arrival models, there exist rounding schemes with factors logarithmic in $s := \max_{S \in \calS} |S|$.
On the other hand, for the more general subset arrival model, no rounding scheme with a competitive factor as a function of $s$ has been known.
We can easily construct an $s$-competitive rounding scheme under this model as follows: for each element, the scheme independently samples a threshold uniformly at random from $[0, 1)$, and upon the arrival of a subset $S$, the scheme selects $S$ if there exists an element $u \in S$ such that the total fractional amount by which $u$ has been covered so far exceeds the threshold of $u$ itself for the first time.
However, given the existence of rounding schemes with factors logarithmic in $s$ under the other two models, this linear-factor-competitive rounding scheme appears unsatisfactory, motivating us toward the following question:
\begin{quote}
    \emph{Does there exist a rounding scheme with a competitive factor (poly)logarithmic in $s$ under the online subset arrival model?}
\end{quote}

% contributions
\paragraph{Our Contributions}
We answer this question in the affirmative with one mild assumption that the maximum subset size $s := \max_{S \in \calS} |S|$ is known upfront:
\begin{restatable}{theorem}{thmsetarr} \label{thm:intro:setarr}
    If $s := \max_{S \in \calS} |S|$ is known from the beginning, there exists a randomized $O(\log^2 s)$-competitive rounding scheme under the subset arrival model that always outputs a feasible set cover $\calC \subseteq \calS$ such that
    $ \E \left[\, \sum_{S \in \calC} c(S) \right] \leq O(\log^2 s) \cdot \sum_{S \in \calS} c(S) \, x_S. $
\end{restatable}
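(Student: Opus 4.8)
Since an $H_s$-approximate rounding already exists \emph{offline}, the whole difficulty is the online, irrevocable commitment; the plan is to combine a cheap randomized selection rule that covers every element \emph{with high probability} with a deterministic ``backbone'' that makes the output feasible with certainty, organised around a dyadic decomposition of each element's coverage so that an element-arrival–style $O(\log s)$ rounding can be invoked. I would first make a few harmless reductions: if $x_S>\tfrac12$ just take $S$ (charging at most $2\,c(S)\,x_S$), so assume $x_S\le\tfrac12$; treating each arriving $(S,x_S)$ as a batch of many tiny copies — selecting any copy means selecting $S$, which costs $c(S)$ once — assume every $x_S\le 1/(2s)$; and stop updating an element once its coverage reaches $1$. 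Fix a scaling parameter $\lambda=\Theta(\log s)$ (this is the only use of ``$s$ known upfront''), so that any rule selecting $S$ with probability at most $\lambda\,x_S$ has expected cost $O(\log s)\cdot\sum_{S}c(S)\,x_S$.

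Next, as each $S$ arrives I would record, for every $u\in S$ whose current coverage is below $\tfrac12$, how much of the increment $x_S$ lies inside the coverage window $[0,\tfrac12)$, and round that amount down to a power of two. This groups the relevant elements of $S$ and lets one define \emph{online} an auxiliary element-arrival instance whose ground elements are the events ``the coverage of $u$ first reaches $\tfrac12$'', whose sets are these power-of-two groups (selecting a group corresponds to selecting the real $S$), whose maximum set size is still at most $s$, and whose fractional optimum is $O(1)\cdot\sum_S c(S)\,x_S$ — because each window $[0,\tfrac12)$ is fully traversed, the group values covering a fixed ``$u$ reaches $\tfrac12$'' sum to at least a constant. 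Running the memoryless (threshold-per-set) part of the known $O(\log s)$-competitive element-arrival rounding on this instance can be executed faithfully online, since the auxiliary value of every group is determined the instant the corresponding real $S$ arrives; it pays $O(\log s)\cdot\sum_S c(S)\,x_S$ and, by scaling with $\lambda$, covers each real element except with probability $s^{-\Omega(1)}$. For the remaining tiny failure probability I would add a backbone: give each $u$ an independent $\theta_u$ uniform in $[\tfrac34,1)$ and, whenever an arriving $S$ pushes $u$'s coverage across $\theta_u$ while $u$ is still uncovered, select $S$. This covers every element deterministically; and since $\theta_u$ being crossed forces the pre-arrival coverage of $u$ above $\tfrac12$, the element ``$u$ reaches $\tfrac12$'' has already been presented to the auxiliary rounding, so it is covered except with probability $s^{-\Omega(1)}$ independently of $\theta_u$, making the backbone select $S$ with probability $O(x_S)\cdot s^{-\Omega(1)}$ and contribute only $O(s^{-1})\cdot\sum_S c(S)\,x_S$. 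Assembled with linearity of expectation this yields a feasible cover of expected cost polylogarithmic in $s$.

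The step I expect to be the main obstacle — and the reason the honest bound is $O(\log^2 s)$ rather than $O(\log s)$ — is exactly where subset arrivals depart from element arrivals. The element-arrival rounding one wants to borrow reacts to a \emph{newly presented} element by possibly selecting a set revealed earlier (its cleanup/fallback step that guarantees feasibility there), which is impossible here because a real subset's decision is irrevocable at its arrival; so one must either commit to real subsets ``speculatively'' when they arrive and argue the waste is bounded in expectation — the natural place the extra $\log s$ enters — or, as sketched above, rely only on the threshold-based portion and re-establish the coverage guarantee from scratch, in which case controlling the joint cost of the randomized part and the deterministic backbone, and verifying that no element can slip between them as coverage accumulates, becomes the delicate core of the proof. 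Equivalently phrased, the crux is beating the trivial union bound over the $|S|\le s$ elements of a selected set: a selected subset must be ``charged'' to only $\mathrm{polylog}(s)$ of its elements while coverage is still provided deterministically for all of them, and reconciling this coordination requirement with irrevocability is where the technical weight lies; everything else reduces to the preprocessing above and routine expectation bookkeeping.
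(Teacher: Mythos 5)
Your plan is a genuinely different route from the paper's (the paper runs a single exponential-clock scheme in which each arriving subset's clock $\Exp(x_v)$ is compared, for every still-uncovered neighbor $u$, against a \emph{simulated} clock of rate $\frac{r_u}{\alpha}-x_v$ with $\alpha=\Theta(\log s)$, marking deterministically when $x_v\ge r_u/\alpha$; feasibility is automatic, and the $O(\log^2 s)$ bound comes from coupling reductions to irreducible $v$-complete pseudo-instances plus an integral estimate giving $\Pr[v\in C]\le \tfrac{7\alpha}{3}H_{|N(v)|}x_v$). However, your proposal has a genuine gap at its core, and it is exactly the step you yourself flag as ``the main obstacle.'' The randomized half of your scheme is justified by invoking ``the memoryless (threshold-per-set) part of the known $O(\log s)$-competitive element-arrival rounding'' and crediting it with two properties simultaneously: (a) it can be executed with an irrevocable per-set decision at the set's arrival, and (b) it covers each element except with probability $s^{-\Omega(1)}$ while paying only $O(\log s)\cdot\sum_S c(S)x_S$. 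The known element-arrival scheme has neither property in the form you need: it decides \emph{when an element arrives}, by taking the minimum exponential clock among all increments covering that element -- including increments that, under subset arrivals, are revealed only after the irrevocable decision on the current set -- and its guarantee is certain coverage of arrived elements with per-set selection probability $O(\log s)\,x_S$, not a ``fails with probability $s^{-\Omega(1)}$'' guarantee that could be transplanted. Your dyadic/auxiliary-instance construction does not circumvent this, because the auxiliary ``elements'' (the events ``coverage of $u$ first reaches $\tfrac12$'') materialize only after the real subsets that must cover them have already demanded decisions; so there is no moment at which the borrowed rounding's selection rule can legally act. Since you explicitly defer the resolution of this (``commit speculatively and argue the waste is bounded'' or ``rely only on the threshold-based portion and re-establish the coverage guarantee from scratch''), the argument that the randomized part is simultaneously cheap and whp-covering -- the load-bearing lemma of the whole plan -- is missing rather than proved.

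Two further points would need care even if that lemma were supplied. First, your backbone accounting needs the per-element failure probability to be strictly smaller than $1/s$ (with room to spare), because the backbone's selection probability for an arriving $S$ is a union bound over up to $s$ endpoints, each contributing (window width $O(x_S)$) times (failure probability); with an unspecified ``$s^{-\Omega(1)}$'' the backbone could a priori cost $s^{1-\epsilon}x_S$ per set, so the constant inside $\lambda=\Theta(\log s)$ must be pinned down against the constant in the exponent, which the proposal never does. Second, the independence you implicitly use -- that the event ``$u$ is uncovered by the randomized part when its coverage crosses $\theta_u$'' is independent of $\theta_u$ -- holds only if the randomized part makes no use of the thresholds; this is true for, say, plain independent sampling boosted by $\lambda$, but it is another property you would have to state and verify for whatever ``threshold-per-set portion'' you end up defining. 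In short, the skeleton (cheap whp-coverage rule plus deterministic threshold backbone, with $s$ known only through the scaling parameter) is a legitimate alternative strategy to the paper's simulated-clock scheme, but as written the proof of its key component is absent, so the proposal does not establish the theorem.
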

\noindent
We remark that, contrary to the scheme of Byrka and Srinivasan~\cite{byrka2018approximation}, our rounding scheme always outputs a feasible set cover.

In the multi-stage stochastic set cover, the ground set system is known in advance. 
Hence, our rounding scheme immediately yields the following approximation algorithm for the problem as a corollary due to Byrka and Srinivasan~\cite{byrka2018approximation}:
\begin{theorem}
    There exists a randomized $O(\log^2 s)$-approximation algorithm for multi-stage stochastic set cover.
\end{theorem}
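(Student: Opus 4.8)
The plan is to build the rounding scheme hierarchically according to the "level" of coverage, exploiting the fact that $s$ is known upfront so that we can afford roughly $\log s$ levels. First I would partition the running coverage of each element into geometric scales: say element $u$ is at level $\ell$ once the total fractional mass $\sum_{S \ni u} x_S$ covering it so far lies in the range $[2^{-\ell}, 2^{-\ell+1})$, for $\ell$ ranging from $0$ up to $\Theta(\log s)$. The key structural observation is that an element never needs to be covered "from scratch" more than $O(\log s)$ times before its fractional mass reaches $1$, because the contribution of any single subset to any single element is at most $1$, and the total is exactly $1$ at the end; so a level-based argument only loses an $O(\log s)$ factor in total. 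Within each level, I would run a filtered copy of a one-level subset-arrival rounding primitive — essentially the known $O(\log s)$-competitive element-arrival scheme of Buchbinder, Chen, and Naor repurposed so that each subset $S$ is "presented" to level $\ell$ only with the portion of its mass that is relevant to elements currently at that level, with appropriately rescaled solution values. Composing the $O(\log s)$ levels, each contributing an $O(\log s)$ overhead, yields the $O(\log^2 s)$ competitive factor.

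The second main step is to argue feasibility: at termination the final $x$ is a feasible fractional cover, so every element $u$ has $\sum_{S \ni u} x_S \geq 1$, which means $u$ passes through all levels down to level $0$; the bottom level must therefore place a fully covering subset into $\calC$ for every element. To make this robust (and to obtain the claim that the scheme always outputs a feasible cover, unlike the scheme of Byrka and Srinivasan), I would add a deterministic safety net: if upon the arrival of the final subset containing $u$ some element remains uncovered, the scheme is forced to select a subset covering it; I would then charge the cost of this safety net against the fractional mass, arguing that the probability of reaching that state is already accounted for by the per-level guarantees so it only adds a constant factor. Here one has to be careful that selecting a "rescue" subset is cheap in expectation — this follows because an element only needs rescuing if all the randomized per-level attempts failed, an event of probability $O(1/\mathrm{polylog})$ per unit of fractional mass, so the expected rescue cost is dominated by the main bound.

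For the competitive-factor bookkeeping, I would fix a subset $S$ and show $\Pr[S \in \calC] \leq O(\log^2 s)\, x_S$ by a union bound over the $O(\log s)$ levels and, within each level, over the at most $O(\log s)$ "phases" of the element-arrival primitive; in each phase the selection probability of $S$ is $O(x_S)$ times the mass of $S$ relevant at that level and phase, and summing the relevant masses over all levels and phases telescopes back to $O(x_S)$ up to the $O(\log^2 s)$ factor. Linearity of expectation then gives $\E[\sum_{S \in \calC} c(S)] \leq O(\log^2 s) \sum_{S \in \calS} c(S) x_S$ as required. Finally, the corollary for multi-stage stochastic set cover is immediate: the ground set system is known in advance in that problem, so \Cref{thm:intro:setarr} applies verbatim, and plugging $\rho = O(\log^2 s)$ into the reduction of Byrka and Srinivasan~\cite{byrka2018approximation} (which turns a $\rho$-competitive subset-arrival rounding scheme into a $(\rho + \varepsilon)$-approximation algorithm) yields the stated $O(\log^2 s)$-approximation.

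The hard part will be making the "rescaling" of solution values across levels consistent, so that each level sees a legitimate fractional instance for the element-arrival primitive while the total mass charged is not inflated by more than a constant per level; in particular, an element crossing from level $\ell$ to level $\ell-1$ must hand off its coverage cleanly, and a single subset arrival can simultaneously affect many elements sitting at different levels, so one subset arrival must be dispatched as several (suitably truncated) virtual arrivals across levels without double-counting its mass. Getting this dispatch rule right — and verifying that the per-level primitive's competitive analysis still goes through when it is fed these truncated virtual arrivals rather than genuine ones — is where the real work lies.
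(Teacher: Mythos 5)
Your closing step is precisely the paper's proof of this theorem: in multi-stage stochastic set cover the ground set system is known in advance, so an $O(\log^2 s)$-competitive rounding scheme under subset arrivals (\Cref{thm:intro:setarr}) combined with the Byrka--Srinivasan reduction immediately gives the claimed approximation. If you intend to invoke \Cref{thm:intro:setarr} as already established, nothing more is needed and your last paragraph alone suffices. The difficulty is that the bulk of your proposal instead tries to rebuild that rounding scheme by a level-based construction quite different from the paper's (which penalizes simulated exponential clocks by a factor $\alpha = \Theta(\ln s)$), and that construction has genuine gaps.

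Concretely: (1) the claim that $O(\log s)$ geometric levels suffice is unsupported --- the number of scales an element's accumulated (or remaining) mass traverses is governed by the granularity of the values $x_S$, not by the maximum subset size $s$; an element may be covered entirely by subsets of value $1/n$ with $n \gg \mathrm{poly}(s)$, so either you keep $\omega(\log s)$ levels or you truncate and must still handle all the mass beyond the truncation point, and ``each subset contributes at most $1$'' gives no bound in terms of $s$. (2) Running the element-arrival primitive inside a level ignores irrevocability: by the time an element's coverage crosses a level boundary, the subsets that supplied that level's mass have already arrived and been irrevocably accepted or rejected, whereas the element-arrival scheme is allowed to choose among those subsets at the moment the element completes. (3) The rescue/safety-net accounting is the actual crux and is only asserted: the forced selection fires exactly when a subset of possibly tiny value $x_S$ completes an element, so you need the probability that the element is still uncovered at that moment to be at most $O(\log^2 s)\,x_S$; a failure probability of ``$O(1/\mathrm{polylog})$ per unit of fractional mass'' is far too weak when $x_S \ll 1/\mathrm{polylog}(s)$. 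The paper resolves precisely this by making the uncovered probability decay like $(r^{(v)}_u)^{\alpha}$ in the remaining mass with $\alpha = \Theta(\log s)$ (\Cref{lem:setarr:exp}), which is what lets the deterministic marking step be charged to $x_S$. You yourself flag the dispatch/rescaling consistency as ``where the real work lies''; those unresolved points are the substance of \Cref{thm:intro:setarr}, so as written the proposal proves the theorem only if it leans entirely on that theorem as given, and does not provide a valid alternative derivation of it.
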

\noindent
Observe that the approximation ratio is independent from both the size of the ground set and the number of stages, implying an improvement over the previous algorithms~\cite{swamy2012sampling, byrka2018approximation} when $s$ is small enough.

Lastly, we consider the problem where $s = 2$, also known as \emph{edge cover}.
Note that, in the edge cover problem, the set system is equivalent to a graph where each element and subset correspond to a vertex and an edge, respectively.
We previously argued the existence of a simple 2-competitive rounding scheme for edge cover under the edge arrival model.
We present an improved 1.8-competitive rounding scheme.
\begin{theorem} \label{thm:intro:edgecvr}
    Given a graph $G = (U, E)$ with cost $c:E \to \R_{\geq 0}$ and a fractional edge cover $x \in \R^E_{\geq 0}$ where each edge $e \in E$ and its solution value $x_e$ arrive in an online manner, there exists a randomized 1.8-competitive rounding scheme under the edge arrival model.
\end{theorem}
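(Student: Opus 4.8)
The plan is to improve on the elementary $2$-competitive threshold scheme recalled in the introduction. In that scheme each vertex $v$ draws an independent threshold $\theta_v\sim U[0,1)$, the incident edges of $v$ are given consecutive sub-intervals of $[0,1)$ of length at most their $x$-value in order of arrival, and an edge is added the instant it pushes the cumulative $x$-coverage of one of its endpoints past that endpoint's threshold. This always returns a feasible cover, since $\sum_{e\ni v}x_e\ge 1$ means $v$'s intervals cover $[0,1)$, and $\E[\mathrm{cost}]=\sum_e c_e\Pr[e\in\calC]$ with $\Pr[e\in\calC]=\Pr[A_u(e)]+\Pr[A_v(e)]-\Pr[A_u(e)\wedge A_v(e)]$, where $A_w(e)$ denotes ``$e$ is responsible for $w$''. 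Since $\Pr[A_w(e)]\le x_e$, the first two terms contribute at most $2\sum_e c_e x_e$, so the only route to beating $2$ is to make the two endpoints of an edge responsible for it simultaneously with probability a constant fraction of $x_e$, at least in aggregate.

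The first ingredient I would use is to \emph{couple} the endpoints' choices: rather than drawing the $\theta_v$ independently, realize the ``which interval contains $\theta_v$'' decisions through a single shared uniform coin $z_e$ per edge, so that for $e=(u,v)$ the events ``$u$ wants $e$'' and ``$v$ wants $e$'' become as positively correlated as their (distinct) remaining budgets permit. A short calculation then shows that, when the fractional solution is vertex-tight, $\Pr[e\in\calC]=x_e\bigl(1+\max(\sigma_u(e),\sigma_v(e))\bigr)$, where $\sigma_w(e)$ is the total $x$-weight of $w$'s edges arriving before $e$; the non-tight case only helps, via interval capping. This already yields a good bound on many instances (for example cliques, where $\max(\sigma_u,\sigma_v)$ averages well below $1$), but the coefficient degrades toward $2$ exactly on edges that arrive \emph{late at both of their endpoints} while both endpoints are tight, so these ``critical'' edges need a separate idea.

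For such critical edges I would add a second, complementary mechanism and analyze them apart: an edge $e$ that is late at a tight vertex $u$ has small $x_e$ relative to the fractional mass already accumulated at $u$, so one would rather cover $u$ (and $v$) by an already-present incident edge than by $e$. Concretely, I would run concurrently a process that is more eager to add \emph{cheap} edges early (e.g.\ a randomized matching whose acceptance probability for $e$ is boosted above a baseline $\lambda x_e$ when $e$ is cheap, and equals $1$ for zero-cost edges), use the coupled threshold scheme only as the feasibility fallback on the vertices still uncovered, and combine the two expectations using the independence of their randomness. Tuning the single trade-off parameter so that both the threshold-dominated and the matching-dominated edges are charged at rate $\le 1.8$ is what pins down the constant. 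The main obstacle, and the bulk of the work, is precisely the analysis of the critical edges: one must show, via a global charging argument rather than an edge-by-edge bound, that across \emph{every} adversarial graph and arrival order the $c_e x_e$-mass still charged near rate $2$ is dominated by the savings the other mechanism extracts elsewhere, and one must re-verify that the combined scheme always outputs a feasible cover.
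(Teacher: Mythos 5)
Your write-up is a plan rather than a proof, and its two load-bearing steps are exactly the ones left unestablished. First, the coupling claim $\Pr[e\in\calC]=x_e\bigl(1+\max(\sigma_u(e),\sigma_v(e))\bigr)$ is asserted from a ``short calculation'' but never justified, and it is not innocent: each vertex has a single threshold shared by all of its incident edges, so per-edge shared coins must be made consistent with the per-vertex marginals, and the clean formula implicitly assumes that the events ``$u$ is still unclaimed'' and ``$v$ is still unclaimed'' at $e$'s arrival are independent. Parallel edges and short cycles correlate the two endpoints' histories (indeed, on a triangle with $x\equiv\frac12$ one cannot even perfectly correlate all three pairs simultaneously), so this needs a real argument, not a formula. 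More importantly, you yourself concede that even granting the formula the bound degrades to $2x_e$ on edges arriving late at two tight endpoints, and the mechanism you propose for those ``critical'' edges --- a concurrently run, cost-aware eager process plus a ``global charging argument'' and a tuned trade-off parameter --- is never specified or analyzed; you explicitly defer ``the bulk of the work.'' Since that charging argument is precisely where the constant $1.8$ would have to come from (and where feasibility of the combined scheme would have to be re-verified, and where cost-scaling issues in a cost-dependent acceptance rule would have to be handled), the proposal does not establish the theorem.

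For contrast, the paper's scheme never looks at costs and obtains a per-endpoint bound directly: after splitting edges so that $x_e\le\tau$, each vertex draws its threshold from a specially chosen \emph{increasing} density $g(z)=\frac{\ln c}{p}\cdot\frac{1}{p+(1-p)c^{-z}}$ (with $p=0.46$), and when the threshold falls in $e$'s interval the vertex marks $e$ unconditionally with probability $p$ and only-if-still-uncovered with probability $1-p$. The probability that the vertex is still uncovered is bounded by $e^{-p\,g(0)\,\ell_v}$ using only the neighbors' \emph{unconditional} markings (which are independent across neighbors), and $g$ is engineered so that $g(\ell)\bigl(p+(1-p)e^{-p\,g(0)\,\ell}\bigr)$ is the constant $\frac{\ln c}{p}<0.9$, giving $\Pr[v\text{ marks }e]\lesssim 0.9\,x_e$ and hence $\Pr[e\in C]<1.8\,x_e$ edge by edge --- no global charging is needed. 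In other words, the adaptivity (``mark only if uncovered'') plays the role you assign to your second mechanism, and the non-uniform threshold density plays the role you assign to the coupling; if you want to salvage your route, the missing global charging lemma is the piece you must actually prove.
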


% technical overview
\paragraph{Technical Overview}
In our main result for general set cover, we exploit \emph{exponential clocks} which have been successfully adopted in design of rounding schemes under the offline and element arrival models~\cite{buchbinder2018simplex, buchbinder2014competitive}.
To facilitate a better understanding of our main result, we first present in \Cref{sec:off} an offline $H_s$-approximate rounding scheme using exponential clocks.
Intuitively speaking, given a set system $(U, \calS)$ and a fractional set cover $x \in \R^{\calS}_{\geq 0}$, the rounding scheme assigns to each subset $S \in \calS$ an independent exponential clock of rate $x_S$; the scheme then iterates over every element $u \in U$ and selects a subset $S \ni u$ whose clock ``rings'' the earliest among the clocks of the subsets to which $u$ belongs.
It is easy to see that the scheme always outputs a feasible set cover.
For the approximate factor, we instead show that, for every subset $S \in \calS$, the probability of $S$ being selected is bounded from above by $H_{|S|} \, x_S$, implying the $H_s$-approximation.
Here we remark that this rounding scheme is essentially equivalent to Buchbinder et al.'s $(1 + \ln s)$-approximate rounding scheme~\cite{buchbinder2018simplex}; we indeed provide a slightly tighter analysis of their rounding scheme.

As a side remark, we demonstrate in \Cref{sec:eltarr} that this offline rounding scheme smoothly extends to the online element arrival model.
This extension is possible using a similar approach to \emph{time-expanded graphs}~\cite{ford1958constructing, kohler2002time} since, in this online model, the input fractional solution $x \in \R^{\calS}_{\geq 0}$ is guaranteed to always fully cover every arrived element $u \in U'$, and hence, the scheme can select a subset containing $u$ through the same procedure as in the offline rounding scheme.
This gives an $H_s$-competitive rounding scheme under the element arrival model, slightly improving upon $(1 + \ln s)$ due to Buchbinder et al.~\cite{buchbinder2014competitive}.

Following the above strategy used for the offline rounding scheme, one may easily imagine a natural adaptation to the subset arrival model defined as follows:
whenever a subset $S \in \calS$ and its solution value $x_S$ are fed, the rounding scheme assigns an exponential clock of rate $x_S$ to $S$ and irrevocably selects $S$ at this moment if there exists an element $u \in S$ where the clock of $S$ rings the earliest among the clocks of the subsets containing $u$.
However, here we immediately face the problem that the rounding scheme needs to irrevocably decide to select the current subset before knowing the clocks of the subsets that have not yet arrived.

Our remedy for this challenge is to \emph{simulate} the exponential clocks of the future subsets.
Since the input fractional solution $x \in \R^{\calS}_{\geq 0}$ is guaranteed to be a feasible set cover at the end, due to the well-known fact about the minimum of independent exponential distributions, we can simulate the clocks of the subsets containing $u$ that arrive after $S$ for any $u \in S$ by sampling from an exponential distribution of rate $r$, where $r$ denotes the remaining amount for $u$ to be fully covered at this moment.
We thus modify the rounding scheme so that, at the arrival of $S$, it iterates over every yet-uncovered $u \in S$ with simulating the clock of $u$ at the moment, and selects $S$ if the clock of $S$ rings earlier than this simulated clock.
Observe that this modified rounding scheme still outputs a feasible set cover at termination.

Unfortunately, this rounding scheme turns out to be $\Theta(s)$-competitive.
This linear dependency in $s$ is caused by the discrepancy between the simulated and actual clocks.
To be more precise, it is possible that, for some $S \in \calS$ and $u \in S$, the clock of $S$ rings earlier than any actual clocks, but later than the simulated clock of $u$.
This negative event influences the selection of the subsets that arrive later, resulting in that the probability of the last-arrived subset being selected would be inflated by a factor $\Theta(s)$ from its fractional value.

To mitigate this discrepancy, we penalize the rates of simulated clocks by a factor $\alpha$ in order to incentivize the rounding scheme to select subsets that are fed early in their arrival order.
We show that, by choosing $\alpha = \Theta(\ln s)$ using the knowledge of $s := \max_{S \in \calS} |S|$ provided in advance, the rounding scheme selects any subset $S \in \calS$ with probability at most $O(\log^2 s) \, x_S$, leading to the claimed $O(\log^2 s)$-competitive rounding scheme.
The formal description of this rounding scheme and its analysis can be found in \Cref{sec:setarr}.

In \Cref{sec:edge}, we present our $1.8$-competitive rounding scheme for edge cover under edge arrivals.
To obtain this scheme, instead of optimizing the previous rounding scheme to edge cover, we begin with the aforementioned simple 2-competitive rounding scheme.
Note that this simple scheme is \emph{non-adaptive}; the scheme inserts a subset (i.e., edge) whenever the condition is met by any element in the subset (i.e., endpoint) no matter whether the element has already been covered by a previous subset.
We obtain our improved rounding scheme by allowing adaptation to some extent with a carefully chosen threshold distribution for each element.

% related work
\paragraph{Further Related Work}
Recently, there has been an increase of interest in \emph{online rounding schemes} due to their success in various online/stochastic optimization problems, including online/stochastic matching~\cite{gamlath2019online, fahrbach2022edge,  blanc2022multiway, gao2022improved, shin2021making, chen2024stochastic, buchbinder2023lossless, naor2025online}, online edge coloring~\cite{cohen2019tight, saberi2021greedy, kulkarni2022online, blikstad2024online, blikstad2025deterministic}, and online scheduling~\cite{lattanzi2020online} for example.
The readers are also referred to the study of \emph{online contention resolution schemes}~\cite{feldman2021online, lee2018optimal, fu2022oblivious, zhao2025universal} and a broad range of their applications in Bayesian/stochastic optimization therein.

Besides set cover~\cite{buchbinder2018simplex, buchbinder2014competitive}, an exponential clock has proven its usefulness in other combinatorial optimization problems including, e.g., multi-way cut~\cite{ge2011geometric, buchbinder2018simplex, sharma2014multiway} and facility location and clustering~\cite{an2017dynamic, gupta2023price} since comparing values of such variables allows compact and elegant descriptions of various randomized algorithms.

\section{Preliminaries}

\paragraph{Neighbor Set and $v$-Complete Bipartite Graph}
Given a graph $G = (V, E)$ and a vertex $v \in V$, we denote by $N_G(v) := \{u \in V \mid (u, v) \in E\}$ the set of $v$'s neighbors.
If the graph $G$ is clear from the context, we may omit the subscript $G$ and only write $N(v)$.
For a bipartite graph $G = (U \cup V, E)$ and a vertex $v \in V$, we say $G$ is \emph{$v$-complete} if $N_G(v) = U$. The notion of $v$-completeness will be useful in bounding the probability that a subset is selected into the output of our rounding schemes.

\paragraph{Bipartite Graph Representation}
The set cover problem is typically defined upon a set system $(U, \calS)$ where $\calS \subseteq 2^U$.
However, in this paper, it is instructive to allow two subsets of possibly different costs to contain the same set of elements.
Hence, for clear presentation, we adopt a \emph{bipartite graph} to alternatively represent the problem.
We are given a bipartite graph $G = (U \cup V, E)$ and a cost function $c$ on $V$.
To keep the intuition, each vertex on the $U$ side is called an \emph{element} vertex, and each vertex on the $V$ side is called a \emph{subset} vertex.
Each edge $(u, v) \in E$ then indicates that the element corresponding to $u$ belongs to the subset corresponding to $v$.
The objective of the problem is rephrased as finding a collection $C \subseteq V$ of subset vertices satisfying $\bigcup_{v \in C} N(v) = U$ at minimum total cost $\sum_{v \in C} c(v)$.

\paragraph{Offline Rounding Scheme}
The standard LP relaxation for set cover formulated using the bipartite graph representation is as follows:
\begin{align*}
    \text{minimize } & \sum_{v \in V} c(v) \, x_v \\
    \text{subject to } & \sum_{v \in N(u)} x_v \geq 1, & \forall u \in U, \\
    & x_v \geq 0, & \forall v \in V.
\end{align*}
A \emph{rounding scheme} for offline set cover is an algorithm that takes as input a feasible fractional solution $x \in \R_{\geq 0}^{V}$ to the LP relaxation and outputs a feasible (integral) set cover $C \subseteq V$.
We say a randomized rounding scheme is \emph{$\rho$-approximate} when the expected cost of the output is within a factor $\rho$ from the cost incurred by the input solution, i.e., $\E[ \sum_{v \in C} c(v)] \leq \rho \cdot \sum_{v \in V} c(v) \, x_v$.

\paragraph{Rounding Schemes under Online Models}
In this paper, we discuss rounding schemes under the two main \emph{online} models --- the \emph{element arrival} and \emph{subset arrival} models.
To define rounding schemes under the online models, it is instructive to regard the setting as a game between a rounding scheme and an adversary.
Under both models, the adversary maintains a fractional solution $x \in \R^V_{\geq 0}$ eventually feasible to the LP relaxation at termination.
We say the rounding scheme is \emph{$\rho$-competitive} when the expected cost of the rounding scheme's final output is within a factor $\rho$ from the cost incurred by the adversary at the end for any instance and choice of adversary, i.e., $\E[\sum_{v \in C} c(v)] \leq \rho \cdot \sum_{v \in V} c(v) \, x_v$, where $C \subseteq V$ denotes the final output of the rounding scheme and $x \in \R^V_{\geq 0}$ denotes the final solution maintained by the adversary.

We now explain the differences between the two online models.
In the element arrival model, the rounding scheme is aware of the ground bipartite graph $G = (U \cup V, E)$ while the adversary decides in secret a subset $U' \subseteq U$ of element vertices.
Throughout the execution, the adversary maintains a vector $x \in \R_{\geq 0}^V$, initially~$\mathbf{0}$, that is visible to the rounding scheme.
At each timestep, the adversary reveals to the rounding scheme an element vertex $u \in U'$ and increases some entries of $x$ so that $\sum_{v \in N(u)} x_v \geq 1$ at this moment.
As the rounding scheme has full knowledge about the ground bipartite graph, it can select any subset vertex $v \in V$ at any time; however, $v$ cannot be discarded in a later timestep once it is selected.
After all element vertices in $U'$ are fed, the collection $C \subseteq V$ of the subset vertices selected by the rounding scheme must cover $U'$, i.e., $\bigcup_{v \in C} N(v) \supseteq U'$.

On the other hand, in the subset arrival model, the adversary decides a \emph{feasible} fractional solution $x \in \R^V_{\geq 0}$ to the LP relaxation with respect to the bipartite graph $G = (U \cup V, E)$.
At the very beginning, the rounding scheme is only aware of the element vertices $U$, while the subset vertices $V$ and their related information are fed one at a time in an online fashion.
More precisely, at each timestep, the adversary reveals to the rounding scheme a subset vertex $v \in V$ along with its neighbors $N(v) \subseteq U$ and its solution value $x_v$.
At this moment, the rounding scheme must irrevocably decide to select $v$ or not.
That is, if the rounding scheme decides to select $v$, it cannot discard $v$ anymore; otherwise, if it decides not, $v$ cannot be selected later.
After the adversary feeds all subset vertices $V$, the collection $C \subseteq V$ of the subset vertices selected by the rounding scheme must be feasible, i.e., $\bigcup_{v \in C} N(v) = U$.

\paragraph{Exponential Clocks}
Recall that an exponential clock is simply a random variable sampled from an \emph{exponential distribution}.
For any $\lambda > 0$, we denote by $\Exp(\lambda)$ the exponential distribution of rate $\lambda$.
Recall that the domain of $\Exp(\lambda)$ is $\R_{\geq 0}$ and that the probability density function $f_{\Exp}$ and cumulative distribution function $F_{\Exp}$ are respectively defined as
\[
    f_{\Exp}(z) = \lambda e^{-\lambda z}
    \;\;\text{and}\;\;
    F_{\Exp}(z) = 1 - e^{-\lambda z}.
\]
We may slightly abuse the notation and use $\Exp(\lambda)$ to denote a random variable sampled from the exponential distribution of rate $\lambda > 0$.
We define $\Exp(0) := \infty$ for convenience.
Following are well-known facts of the exponential distribution.
\begin{fact} \label{fact:pre:exp1}
    For any $\lambda \geq \lambda' > 0$ and $z > 0$, we have
    \( \Pr[\Exp(\lambda) < z] \geq \Pr[\Exp(\lambda') < z]. \)
\end{fact}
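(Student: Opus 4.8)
The plan is to read the claim directly off the cumulative distribution function recorded just above the statement. Since $\Pr[\Exp(\lambda) < z] = F_{\Exp}(z) = 1 - e^{-\lambda z}$ for the rate-$\lambda$ exponential (and likewise with $\lambda$ replaced by $\lambda'$), proving the fact amounts to verifying the elementary inequality $1 - e^{-\lambda z} \geq 1 - e^{-\lambda' z}$, equivalently $e^{-\lambda z} \leq e^{-\lambda' z}$. First I would note that the hypotheses $z > 0$ and $\lambda \geq \lambda'$ give $\lambda z \geq \lambda' z > 0$; then, since $t \mapsto e^{-t}$ is (strictly) decreasing on $\R$, this yields $e^{-\lambda z} \leq e^{-\lambda' z}$, and the claim follows after subtracting from $1$. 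The whole argument is two lines of routine calculation, so there is essentially nothing to expand.

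A more conceptual alternative, which I would mention in a sentence, is a coupling: drawing a single $q$ uniformly from $[0,1)$, the value $-\ln(1-q)/\lambda$ is distributed as $\Exp(\lambda)$ and is pointwise nonincreasing in $\lambda$; hence under this coupling the event $\{\Exp(\lambda) < z\}$ contains $\{\Exp(\lambda') < z\}$, immediately giving the inequality on probabilities. This view also matches how exponential clocks are used later in the paper, where monotone dependence of ``ringing early'' on the rate is exactly the property being exploited.

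The only point requiring any care — and it is minor — is the degenerate convention $\Exp(0) := \infty$ adopted in the preliminaries: if a rate were $0$ the CDF formula would not apply. But the hypothesis $\lambda' > 0$ (hence also $\lambda > 0$) rules this out, so no special-casing is needed. I do not anticipate any genuine obstacle here; the statement is a direct consequence of the definition of the exponential distribution together with monotonicity of the exponential function.
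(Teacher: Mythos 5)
Your proof is correct: the paper states this as a well-known fact without proof, and your CDF computation ($1 - e^{-\lambda z} \geq 1 - e^{-\lambda' z}$ via monotonicity of $t \mapsto e^{-t}$) is exactly the standard argument being implicitly invoked. The coupling remark and the observation that $\lambda' > 0$ excludes the $\Exp(0) := \infty$ convention are fine but not needed.
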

\begin{fact} \label{fact:pre:exp2}
    For some $\lambda_1$, \ldots, $\lambda_k > 0$, if $X_1$, \ldots, $X_k$ independently follow $\Exp(\lambda_1)$, \ldots, $\Exp(\lambda_k)$, respectively, we have 
    \begin{enumerate}
        \item $\min\{X_1, \ldots, X_k\} \sim \Exp(\lambda_1 + \cdots + \lambda_k)$ and
        \item $\Pr[\min\{ X_1, \ldots, X_k \} = X_i] = \frac{\lambda_i}{\lambda_1 + \cdots + \lambda_k}$ for every $i \in \{1, \ldots, k\}$.
    \end{enumerate}
\end{fact}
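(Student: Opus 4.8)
The plan is to prove both parts by direct computation with survival (complementary CDF) functions, using only independence of the $X_i$ together with the explicit form $F_{\Exp}(z) = 1 - e^{-\lambda z}$ recorded above; no additional machinery is needed.

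For the first part, I would work with $\Pr[\min\{X_1,\ldots,X_k\} > z]$ rather than with the CDF directly. Since $\{\min_i X_i > z\} = \bigcap_{i=1}^k \{X_i > z\}$ and the $X_i$ are independent, this probability factors as $\prod_{i=1}^k \Pr[X_i > z] = \prod_{i=1}^k e^{-\lambda_i z} = e^{-(\lambda_1 + \cdots + \lambda_k)z}$ for every $z \geq 0$. Hence $\Pr[\min_i X_i \leq z] = 1 - e^{-(\lambda_1 + \cdots + \lambda_k)z}$, which is exactly $F_{\Exp}(z)$ with rate $\lambda_1 + \cdots + \lambda_k$. A distribution supported on $\R_{\geq 0}$ is determined by its CDF, so $\min\{X_1,\ldots,X_k\} \sim \Exp(\lambda_1 + \cdots + \lambda_k)$, as claimed.

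For the second part, I would first note that, since each $X_i$ is absolutely continuous, every tie event $\{X_i = X_j\}$ with $i \neq j$ has probability zero; therefore $\{\min\{X_1,\ldots,X_k\} = X_i\}$ agrees, up to a null event, with $\{X_i < X_j \text{ for all } j \neq i\}$. I would then condition on the value of $X_i$, integrating its density against the probability that every other $X_j$ exceeds it:
\[
\Pr[X_i < X_j \text{ for all } j \neq i] = \int_0^\infty \lambda_i e^{-\lambda_i t} \prod_{j \neq i} e^{-\lambda_j t} \, dt = \lambda_i \int_0^\infty e^{-(\lambda_1 + \cdots + \lambda_k)t} \, dt = \frac{\lambda_i}{\lambda_1 + \cdots + \lambda_k},
\]
where the middle step again uses independence of the $X_j$. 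As a consistency check, summing this over $i \in \{1,\ldots,k\}$ yields $1$, matching the fact that almost surely the minimum is attained by exactly one index.

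There is essentially no obstacle here; the one point deserving a word of care is the handling of ties in the second part, which is why I would explicitly invoke absolute continuity to discard the corresponding null event before carrying out the integration. Both computations are elementary once one phrases the first in terms of survival functions and opens the second by conditioning on $X_i$.
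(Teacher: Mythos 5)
Your proof is correct; both the survival-function computation for the minimum and the conditioning argument (with the tie events dismissed by absolute continuity) are sound, and the consistency check summing to $1$ is a nice touch. The paper states this as a well-known fact without proof, so there is nothing to compare against: your argument is exactly the standard one that would be expected.
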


\paragraph{Other Notation}
For a vector $x \in \R_{\geq 0}^V$ and $V' \subseteq V$, let us denote by $x|_{V'}$ the \emph{restriction} of $x$ to $V'$, i.e., $x|_{V'} \in \R_{\geq 0}^{V'}$ such that $(x|_{V'})_v = x_v$ for every $v \in V'$.
Let $s := \max_{v \in V} |N(v)|$ denote the maximum number of neighbors of a subset vertex, i.e., the maximum subset size in the corresponding set system.
For any positive integer $k$, let $H_k := \sum_{\ell = 1}^k \frac{1}{\ell}$ denote the $k$-th harmonic number.

\section{A Warm-Up: Offline Rounding Scheme and Extension to Element Arrivals}
In this section, we present an offline $H_s$-approximate rounding scheme, and then show that the offline scheme extends to the element arrival model.
Even though these schemes are well-known in the literature (see, e.g., \cite{buchbinder2018simplex, buchbinder2014competitive}), we provide full analyses of these schemes since some ideas used in this section will carry over to the subset arrival model considered in the next section.

\subsection{Offline Rounding Scheme} \label{sec:off}
\paragraph{Scheme Description}
Our offline rounding scheme is described in \Cref{alg:off}. 
Recall that the ground set system is represented by a bipartite graph $G = (U \cup V, E)$, and the scheme is also given a feasible solution $x \in \R^V_{\geq 0}$ to the LP relaxation with respect to $G$.

\begin{algorithm}
    \caption{Offline $H_s$-approximate rounding scheme} \label{alg:off}
    \KwIn{A set system represented by $G = (U \cup V, E)$ and a feasible solution $x \in \R^V_{\geq 0}$ to the LP relaxation with respect to $G$}
    \KwOut{A set cover $C \subseteq V$}
    $C \gets \emptyset$\;
    \For{each subset vertex $v \in V$}{
        Sample $Z_v \sim \Exp(x_v)$ independently \;
    }
    \For{each element vertex $u \in U$ in an arbitrary order}{
        $C \gets C \cup \argmin_{v \in N(u)} \{Z_v\}$\; \label{line:off05}
    }
    \Return{$C$}\;
\end{algorithm}

\paragraph{Analysis}
We now analyze the rounding scheme.

\begin{lemma} \label{lem:off:feas}
    $C$ is a set cover, i.e., $\bigcup_{v \in C} N(v) = U$.
\end{lemma}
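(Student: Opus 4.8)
The plan is to show that for every element vertex $u \in U$, the subset vertex selected on \Cref{line:off05} during the iteration for $u$ indeed contains $u$, so that $u$ ends up covered. This is essentially immediate from the algorithm's structure, but I would take care to verify that the $\argmin$ is well-defined and that the selected vertex is a neighbor of $u$.

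First I would fix an arbitrary $u \in U$. Since $x$ is a feasible solution to the LP relaxation, we have $\sum_{v \in N(u)} x_v \geq 1 > 0$, which in particular means $N(u) \neq \emptyset$; hence $\argmin_{v \in N(u)} \{Z_v\}$ is taken over a nonempty finite set and is well-defined (ties, which occur with probability zero, can be broken arbitrarily). Let $v^\star := \argmin_{v \in N(u)} \{Z_v\}$ be the vertex added to $C$ in the iteration for $u$. By construction $v^\star \in N(u)$, so $u \in N(v^\star)$ and therefore $u \in N(v^\star) \subseteq \bigcup_{v \in C} N(v)$.

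Since $u$ was arbitrary, this shows $U \subseteq \bigcup_{v \in C} N(v)$. The reverse inclusion $\bigcup_{v \in C} N(v) \subseteq U$ holds trivially because $N(v) \subseteq U$ for every $v \in V$ in the bipartite graph $G = (U \cup V, E)$. Combining the two inclusions gives $\bigcup_{v \in C} N(v) = U$, i.e., $C$ is a set cover.

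There is essentially no obstacle here — the only point that requires any thought is confirming $N(u) \neq \emptyset$ so that the $\argmin$ on \Cref{line:off05} is meaningful, and this is precisely where LP feasibility of $x$ is used. The randomness in the $Z_v$'s is irrelevant to feasibility: the claim holds for \emph{every} realization of the exponential clocks.
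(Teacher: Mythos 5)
Your proof is correct and follows essentially the same argument as the paper: the scheme selects, for each element vertex $u \in U$, a subset vertex from $N(u)$ on \Cref{line:off05}, so every element is covered. Your additional care in noting that LP feasibility guarantees $N(u) \neq \emptyset$ (making the $\argmin$ well-defined) is a fine, if implicit, detail that the paper's one-line proof leaves unstated.
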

\begin{proof}
     Observe that the scheme selects (at least) one subset vertex from $N(u)$ for every element vertex $u \in U$.
\end{proof}

It thus remains to show that $\E[\sum_{v \in C} c(v)] \leq H_s \cdot \sum_{v \in V} c(v) \cdot x_v$.
To this end, we will prove the following lemma.

\begin{lemma} \label{lem:off:main}
    For every subset vertex $v \in V$, $\Pr[v \in C] \leq H_{|N(v)|} \cdot x_v.$
\end{lemma}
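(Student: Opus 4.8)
The plan is to fix a subset vertex $v \in V$ and bound $\Pr[v \in C]$ by conditioning on the realized value $Z_v = z$ of $v$'s exponential clock, then integrating against the density of $\Exp(x_v)$. Write $N(v) = \{u_1, \dots, u_k\}$ where $k = |N(v)|$. The vertex $v$ is added to $C$ precisely when, for at least one $u_i \in N(v)$, we have $Z_v = \min_{w \in N(u_i)} Z_w$, i.e.\ $v$ wins the competition at some neighboring element vertex. So I would first estimate, for a single element vertex $u_i$, the probability that $v$ beats every other subset vertex in $N(u_i)$.

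The key computation: for a fixed element $u_i$, the other clocks $\{Z_w : w \in N(u_i) \setminus \{v\}\}$ are independent exponentials, so by the first part of \Cref{fact:pre:exp2} their minimum follows $\Exp(\mu_i)$ where $\mu_i := \sum_{w \in N(u_i) \setminus \{v\}} x_w$. Feasibility of $x$ at $u_i$ gives $x_v + \mu_i \geq 1$, hence $\mu_i \geq 1 - x_v$. Conditioned on $Z_v = z$, the probability that $v$ wins at $u_i$ is $\Pr[\Exp(\mu_i) > z] = e^{-\mu_i z} \leq e^{-(1-x_v) z}$, using monotonicity (or directly the CDF). Now I would take a union bound over the $k$ neighbors of $v$: conditioned on $Z_v = z$, the probability that $v$ wins at some neighbor is at most $\min\{1,\ k e^{-(1-x_v)z}\}$. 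Integrating against the density $x_v e^{-x_v z}$ of $\Exp(x_v)$ over $z \in [0,\infty)$ gives
\[
\Pr[v \in C] \;\leq\; \int_0^\infty x_v e^{-x_v z} \cdot \min\{1,\ k e^{-(1-x_v)z}\}\, dz.
\]
Splitting the integral at the threshold $z^\star$ where $k e^{-(1-x_v)z} = 1$, i.e.\ $z^\star = \frac{\ln k}{1 - x_v}$ (handling $x_v = 1$ separately, where the bound is trivial since then $\Pr[v \in C] \leq 1 = x_v$), one bounds the two pieces and, after simplification, obtains a bound of the form $x_v \cdot (\text{something} \leq H_k)$. In fact the clean way to get exactly $H_k$ is to avoid the crude union bound and instead order the neighbors; the sharper route is to bound $\Pr[v\text{ wins at some }u_i \mid Z_v=z]$ by noting that we only need $v$ to be the unique minimum at \emph{one} element, and to integrate more carefully — but I expect the union-bound-plus-threshold argument already yields $O(\log k) \cdot x_v$, and a slightly more careful accounting (summing a harmonic-type series that arises from the $k$ neighbors contributing $\tfrac{x_v}{x_v + (\text{partial sums})}$-type terms via \Cref{fact:pre:exp2}) closes the gap to the exact $H_{|N(v)|}$.

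The main obstacle is getting the constant exactly right to land on $H_k$ rather than merely $O(\log k)$: the naive union bound over the $k$ neighbors is lossy. The correct approach is likely to process the element vertices of $N(v)$ one at a time and, using the second part of \Cref{fact:pre:exp2}, observe that conditioned on $v$ not having won yet, the probability $v$ wins at the next element contributes a term like $\frac{x_v}{x_v + \mu}$ with $\mu$ the relevant competing rate, and that summing these over the $k$ neighbors telescopes into $H_k \cdot x_v$ after using $x_v + \mu_i \geq 1$. So the delicate part is the inductive/telescoping bookkeeping that converts "$v$ has $k$ chances to win, each against a rate $\geq 1 - (\text{something})$" into the harmonic sum; everything else (feasibility, the exponential facts, the feasibility of $C$) is routine.
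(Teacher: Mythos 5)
Your per-element bound is fine, but the proposal as written does not prove the stated lemma: it has a genuine gap exactly where you flag uncertainty. The union-bound route gives $\Pr[v\in C]\leq x_v\bigl(1+\tfrac{\ln k}{1-x_v}\bigr)$-type bounds, i.e.\ $O(\log k)\,x_v$ with a loss both in the constant and in a $\tfrac{1}{1-x_v}$ factor, whereas the lemma claims exactly $H_{|N(v)|}\,x_v$. Your proposed repair --- process the elements of $N(v)$ one at a time and sum terms of the form $\tfrac{x_v}{x_v+\mu}$ via \Cref{fact:pre:exp2} --- does not close this gap as described, for two reasons. First, the competing subset vertices at different elements of $N(v)$ may overlap, so the events ``$v$ wins at $u_i$'' (conditioned on $Z_v=z$) are not driven by disjoint collections of clocks; the paper spends real work eliminating this: it first restricts to the $v$-complete subinstance (\Cref{obs:off:subinst}) and then proves, via a coupling/exchange argument (\Cref{lem:off:reduce}, Appendix), that the worst case is an \emph{irreducible} instance in which every other subset is a singleton, which is what makes the minima at distinct elements independent. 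Second, even in that independent case the additive accounting you sketch fails: with $\lambda=1-x_v$ one gets $\Pr[v\in C]=1-\prod_{j=1}^{k}\tfrac{j\lambda}{j\lambda+x_v}$, and bounding this by $\sum_{j}\tfrac{x_v}{j\lambda+x_v}$ overshoots $H_k x_v$ whenever $x_v>0$ and $k\geq 2$ (e.g.\ $k=2$, $x_v=\tfrac12$ gives $\tfrac56>\tfrac34$). The multiplicative structure must be kept: the paper writes
\[
\Pr[v\in C]\;\leq\; x_v\int_0^\infty\bigl(1-(1-e^{-(1-x_v)z})^k\bigr)e^{-x_v z}\,dz
\]
(\Cref{lem:off:factor}) and then shows the integrand is pointwise decreasing in $x_v$ (\Cref{lem:off:upbnd}), so the integral is dominated by its value at $x_v=0$, which equals $H_k$ after the substitution $t=1-e^{-z}$. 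Neither the reduction to the irreducible instance nor this monotonicity step (or a substitute for them) appears in your sketch, and without them the ``telescoping into $H_k$'' is an unproven hope rather than an argument.
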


Observe that \Cref{lem:off:feas,lem:off:main} together immediately imply the following main theorem.

\begin{theorem} \label{thm:off:main}
    \Cref{alg:off} is an $H_s$-approximate rounding scheme for offline set cover.
    More precisely, the expected cost of the scheme's output is at most $\sum_{v \in V} c(v) \, H_{|N(v)|} \, x_v$.
\end{theorem}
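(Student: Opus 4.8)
Since \Cref{lem:off:feas} already establishes feasibility, the theorem will follow from \Cref{lem:off:main} by linearity of expectation: $\E[\sum_{v\in C}c(v)]=\sum_{v\in V}c(v)\,\Pr[v\in C]\le\sum_{v\in V}c(v)\,H_{|N(v)|}\,x_v\le H_s\sum_{v\in V}c(v)\,x_v$, using $c,x\ge 0$ and monotonicity of the harmonic numbers in the last step. So the real work is to prove \Cref{lem:off:main}. Fix a subset vertex $v$ with $N(v)=\{u_1,\dots,u_k\}$. I would first dispose of the degenerate cases — $x_v\ge 1$ makes $H_kx_v\ge 1$ vacuous, and $x_v=0$ gives $Z_v=\Exp(0)=\infty$, so $v$ is never an $\argmin$ — leaving $0<x_v<1$. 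The crucial reformulation is that $v\in C$ iff $v=\argmin_{w\in N(u_i)}Z_w$ for some $i$, i.e.\ (the $\argmin$ being almost surely unique) iff $Z_v<Y_i$ for some $i$, where $Y_i:=\min_{w\in N(u_i)\setminus\{v\}}Z_w$. Feasibility of $x$ at $u_i$ gives $\sum_{w\in N(u_i)}x_w\ge 1$, so $N(u_i)\setminus\{v\}\neq\emptyset$ and, by \Cref{fact:pre:exp2}, $Y_i\sim\Exp(\lambda_i)$ with $\lambda_i:=\sum_{w\in N(u_i)\setminus\{v\}}x_w\ge 1-x_v>0$.

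The plan is to condition on $Z_v=t$: as $Z_v$ is independent of all other clocks, $\Pr[v\in C\mid Z_v=t]=\Pr[\max_i Y_i>t]=1-\Pr\!\big[\bigcap_i\{Y_i\le t\}\big]$. The main obstacle lives exactly here: a plain union bound over the $k$ elements only yields $\Pr[v\in C]\le k\,x_v$, losing the logarithmic factor, so one must exploit the positive dependence among the events $\{Y_i\le t\}=\{\exists\,w\in N(u_i)\setminus\{v\}:Z_w\le t\}$. Each of these is an increasing event in the independent indicators $(\mathbf 1[Z_w\le t])_{w\neq v}$, so the FKG inequality gives $\Pr[\bigcap_i\{Y_i\le t\}]\ge\prod_i(1-e^{-\lambda_it})\ge(1-e^{-(1-x_v)t})^k$, where the last inequality uses $\lambda_i\ge 1-x_v$ (\Cref{fact:pre:exp1}). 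Hence $\Pr[v\in C\mid Z_v=t]\le 1-(1-e^{-(1-x_v)t})^k$.

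It then remains to integrate this bound against the density $x_ve^{-x_vt}$ of $Z_v$. I would use the telescoping identity $1-(1-y)^k=\sum_{i=1}^k y(1-y)^{i-1}$ with $y=e^{-(1-x_v)t}$, swap sum and integral, and observe that $x_ve^{-x_vt}\cdot e^{-(1-x_v)t}=x_ve^{-t}$ while $(1-e^{-(1-x_v)t})^{i-1}\le(1-e^{-t})^{i-1}$ since $1-x_v\le 1$; the substitution $u=1-e^{-t}$ then evaluates $\int_0^\infty x_ve^{-t}(1-e^{-t})^{i-1}\,dt=x_v/i$, and summing over $i=1,\dots,k$ gives $\Pr[v\in C]\le H_k\,x_v=H_{|N(v)|}\,x_v$, i.e.\ \Cref{lem:off:main}. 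Besides the FKG step, the only thing needing care is keeping this last computation tight: bounding $1-e^{-x_vt}\le x_vt$ prematurely, say, would yield a weaker $H_k/(1-x_v)$ in place of $H_k$, so I would carry the exact telescoping through the integral rather than linearizing.
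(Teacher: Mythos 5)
Your argument is correct, and for the key estimate (\Cref{lem:off:main}) it takes a genuinely different route from the paper. The paper first restricts to the $v$-complete subinstance (\Cref{obs:off:subinst}) and then, via the swapping/coupling argument of \Cref{lem:off:reduce} (proved in the appendix), reduces to an \emph{irreducible} instance in which the sets $N(u_i)\setminus\{v\}$ partition $V\setminus\{v\}$, so the events $\{Y_i\le t\}$ become exactly independent and the product bound of \Cref{lem:off:factor} is an identity-plus-monotonicity computation; the integral is then handled by showing the integrand is decreasing in $y$ and evaluating at $y=0$ (\Cref{lem:off:upbnd}). You instead stay on the original instance and replace the whole reduction by one application of the Harris/FKG inequality: the events $\{Y_i\le t\}$ are increasing in the independent indicators $\mathbf 1[Z_w\le t]$, so their intersection probability dominates the product, which is exactly the inequality the paper obtains only after making the instance irreducible. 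Your telescoping evaluation $1-(1-y)^k=\sum_i y(1-y)^{i-1}$ then yields $H_k\,x_v$ directly at $y=x_v$ without the monotonicity lemma. What each approach buys: yours is shorter and self-contained for the offline theorem (and your handling of the degenerate cases $x_v=0$, $x_v\ge1$ and of $N(u_i)\setminus\{v\}\neq\emptyset$ under $x_v<1$ is fine); the paper's reduction machinery is heavier here but is deliberately set up as the template that carries over to the subset-arrival analysis (\Cref{lem:set:vcomp,lem:set:irre}), where the adaptive marking rule makes a direct correlation-inequality argument much less transparent.
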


\paragraph{Proof of \Cref{lem:off:main}}
To prove this lemma, we fix a subset vertex $v \in V$ and construct a subinstance induced by the element and subset vertices that influence the event of $v \in C$.
In particular, let $\Vup{v} := \{ v' \in V \mid N(v) \cap N(v') \neq \emptyset \}$ be the subset vertices sharing an element vertex with $v$ (including $v$ itself).
Let $\Gup{v}$ be the subgraph of $G$ induced by $N(v) \cup \Vup{v}$, and let $\xup{v} := x|_{\Vup{v}}$ be the restriction of $x$ to $\Vup{v}$.
Observe that $\Gup{v}$ is $v$-complete and that $\xup{v}$ is a feasible solution to the LP relaxation for set cover with respect to $\Gup{v}$.
The below observation demonstrates that we can instead consider the execution of \Cref{alg:off} given $(\Gup{v}, \xup{v})$ when analyzing the probability of $v \in C$ (see \Cref{fig:off:vcomp}).
One can easily show the observation by, e.g., a standard coupling technique.

\begin{observation} \label{obs:off:subinst}
    Let $C(G, x)$ and $C(\Gup{v}, \xup{v})$ be the outputs of \Cref{alg:off} given $(G, x)$ and $(\Gup{v}, \xup{v})$, respectively.
    We have $\Pr[v \in C(G, x)] = \Pr[v \in C(\Gup{v}, \xup{v})]$.
\end{observation}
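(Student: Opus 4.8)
The plan is to exhibit an explicit coupling of the two executions of \Cref{alg:off} and to argue that, on the coupled probability space, the events $\{v \in C(G, x)\}$ and $\{v \in C(\Gup{v}, \xup{v})\}$ coincide pointwise; the equality of probabilities then follows.

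First I would record the key structural fact that for every element vertex $u \in N(v)$ we have $N_{\Gup{v}}(u) = N_G(u)$. Indeed, any $G$-neighbor $v'$ of such a $u$ satisfies $u \in N(v) \cap N(v')$, hence $v' \in \Vup{v}$, so $v'$ survives in the induced subgraph $\Gup{v}$; conversely $\Gup{v}$ only deletes vertices, so no new neighbor appears. Next I would observe that $v$ is placed into the output exactly when it wins an $\argmin$: $v \in C(G, x)$ iff there is an element vertex $u$ with $v = \argmin_{v' \in N_G(u)}\{Z_{v'}\}$, and winning forces $v \in N_G(u)$, i.e.\ $u \in N(v)$. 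The same characterization holds for $C(\Gup{v}, \xup{v})$, because the element vertex set of $\Gup{v}$ is precisely $N(v)$ and, by the previous fact, the competing set $N(u)$ for each such $u$ is the same in both instances. Consequently the indicator of $\{v \in C\}$ in either instance is a function solely of the family $\{Z_{v'} : v' \in \Vup{v}\}$, through the finitely many comparisons ``$v$ beats every vertex of $N_G(u)$'' ranging over $u \in N(v)$. (The degenerate case $N(v) = \emptyset$ is trivial, as $v$ is then never selected in either instance.)

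I would then set up the coupling as follows: sample one family $(Z_{v'})_{v' \in \Vup{v}}$ of independent clocks with $Z_{v'} \sim \Exp(x_{v'})$, feed it to the $\Gup{v}$-execution (this is legitimate since $(\xup{v})_{v'} = x_{v'}$ for $v' \in \Vup{v}$), and in the $G$-execution reuse these same values for $v' \in \Vup{v}$ while drawing fresh independent clocks $Z_{v'} \sim \Exp(x_{v'})$ for $v' \in V \setminus \Vup{v}$. In both runs the clocks then have their prescribed marginal laws and are mutually independent, so this is a valid coupling. By the previous paragraph, on every sample point the two executions agree on whether $v$ is selected, whence $\Pr[v \in C(G, x)] = \Pr[v \in C(\Gup{v}, \xup{v})]$. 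I would also note in passing that ties among finitely many exponential clocks occur with probability zero, so the $\argmin$'s are almost surely well defined and the pointwise argument is unaffected.

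The only step needing slight care — and the one I would flag as the main (though minor) obstacle — is justifying that the subset vertices outside $\Vup{v}$, equivalently the element vertices outside $N(v)$, truly cannot influence the membership of $v$ in $C$: one must verify that in the full-instance ``for each $u \in U$'' loop, $v$ can be added only while processing some $u \in N(v)$, and that for those $u$ the set $N_G(u)$ lies entirely within $\Vup{v}$, so no clock outside the coupled family is ever consulted in deciding $\{v \in C\}$. Once this is pinned down, the coupling closes the argument immediately.
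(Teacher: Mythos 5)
Your proposal is correct and follows exactly the route the paper intends: the paper omits the proof of \Cref{obs:off:subinst}, remarking only that it follows "by a standard coupling technique," and your argument is a careful instantiation of that coupling — sharing the clocks on $\Vup{v}$, noting $N_{\Gup{v}}(u) = N_G(u)$ for $u \in N(v)$, and observing that $v$ can only be selected while processing some $u \in N(v)$, so the two executions agree pointwise on the event $\{v \in C\}$. Nothing further is needed.
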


Let us thus assume from this point that the input bipartite graph $G = (U \cup V, E)$ is $v$-complete.
We further identify a class of $v$-complete graphs that ``maximizes'' the probability of $v \in C$.
We say a $v$-complete bipartite graph $G = (U \cup V, E)$ is \emph{irreducible} 
if every subset vertex other than $v$ is adjacent with exactly one element vertex (see \Cref{fig:off:irre}). Observe that an irreducible $v$-complete bipartite graph corresponds to a simple set system where one subset (corresponding to $v$) contains the entire ground set while the other subsets are all singletons.

\begin{restatable}{lemma}{lemoffreduce} \label{lem:off:reduce}
    For any $v$-complete $G = (U \cup V, E)$ and $x \in \R^V_{\geq 0}$ feasible to the LP relaxation with respect to $G$, there exists an irreducible $v$-complete $\Ghat = (U \cup \Vhat, \Ehat)$ and $\xhat \in \R^{\Vhat}_{\geq 0}$ feasible to the LP relaxation with respect to $\Ghat$ such that $\xhat_v = x_v$ and
    \[\Pr[v \in C(G, x)] \leq \Pr[v \in C(\Ghat, \xhat)],\]
    where $C(G, x)$ and $C(\Ghat, \xhat)$ denote the outputs of \Cref{alg:off} given $(G, x)$ and $(\Ghat, \xhat)$, respectively.
\end{restatable}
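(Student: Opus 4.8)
The plan is to pin down exactly which randomness governs the event $v \in C$, to rewrite $\Pr[v \in C(G,x)]$ as an integral involving one \emph{independent} exponential clock per element vertex, and then to invoke a correlation inequality to argue that ``decoupling'' the clocks seen by different element vertices only increases this probability. Throughout we may assume $x_v > 0$, since otherwise $Z_v \sim \Exp(0) = \infty$ almost surely, so $\Pr[v \in C(G,x)] = 0$ and \emph{any} irreducible $v$-complete $(\Ghat,\xhat)$ with $\xhat_v = x_v$ produced by the construction below satisfies the inequality. Fix $v$ and a $v$-complete $G$. For each element vertex $u \in U$ set $\lambda_u := \sum_{v' \in N(u) \setminus \{v\}} x_{v'}$ and $W_u := \min_{v' \in N(u) \setminus \{v\}} Z_{v'}$; by \Cref{fact:pre:exp2} the clocks $\{Z_{v'}\}_{v' \neq v}$ are independent, $W_u \sim \Exp(\lambda_u)$, and feasibility of $x$ at $u$ gives $x_v + \lambda_u \geq 1$. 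Since the positive-rate clocks are continuous, with probability one $v \in C$ holds exactly when some $u \in U$ has $Z_v < W_u$ (recall $N(v) = U$ by $v$-completeness), whence
\[
\Pr[v \in C(G,x)] = \int_0^\infty f_{\Exp(x_v)}(z)\cdot\Pr\bigl[\exists u \in U:\ W_u > z\bigr]\,dz .
\]

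Next I would write down the target instance. Let $\Ghat = (U \cup \Vhat, \Ehat)$ have subset vertices $\Vhat := \{v\} \cup \{v_u : u \in U\}$ with $v$ adjacent to all of $U$ and each $v_u$ adjacent only to $u$, and set $\xhat_v := x_v$, $\xhat_{v_u} := \lambda_u$. Then $\Ghat$ is irreducible and $v$-complete, $\xhat_v = x_v$, and $\xhat$ is feasible since at each $u$ we have $\xhat_v + \xhat_{v_u} = x_v + \lambda_u = \sum_{v' \in N_G(u)} x_{v'} \geq 1$. In $\Ghat$ the per-element clocks $Z_{v_u} \sim \Exp(\lambda_u)$ are mutually independent, $v \in C(\Ghat,\xhat)$ iff $Z_v < Z_{v_u}$ for some $u$, and so
\[
\Pr[v \in C(\Ghat,\xhat)] = \int_0^\infty f_{\Exp(x_v)}(z)\left(1 - \prod_{u \in U}\bigl(1 - e^{-\lambda_u z}\bigr)\right) dz .
\]
Comparing integrands, the lemma reduces to the pointwise claim that for every $z \geq 0$,
\[
\Pr\bigl[\exists u \in U:\ W_u > z\bigr] \;\leq\; 1 - \prod_{u \in U}\bigl(1 - e^{-\lambda_u z}\bigr) \;=\; 1 - \prod_{u \in U}\Pr[W_u \leq z] ,
\]
i.e., that $\Pr[\forall u:\ W_u \leq z] \geq \prod_u \Pr[W_u \leq z]$: the events $\{W_u \leq z\}$ are positively correlated.

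This last step is where I expect the only real work. The point is that each $W_u$ is a coordinatewise non-decreasing function of the independent family $(Z_{v'})_{v' \neq v}$ (raising any clock can only raise a minimum), so each $\{W_u \leq z\}$ is a \emph{decreasing} event in these independent coordinates; by the Harris/FKG inequality for product measures a finite collection of decreasing events is positively associated, giving $\Pr\bigl[\bigcap_{u} \{W_u \leq z\}\bigr] \geq \prod_u \Pr[W_u \leq z]$ by induction on $|U|$. Integrating against $f_{\Exp(x_v)}$ then yields $\Pr[v \in C(G,x)] \leq \Pr[v \in C(\Ghat,\xhat)]$, as desired. (A more hands-on alternative, which avoids quoting FKG, is to perform the reduction one subset vertex at a time: replace a subset vertex $w \neq v$ that is adjacent to at least two element vertices by private singleton copies of value $x_w$, one at each of its element vertices, and show each such split does not decrease $\Pr[v \in C]$ by the same association argument applied to the single coordinate $Z_w$; iterating over all non-$v$ subset vertices reaches an irreducible instance. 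I would present whichever turns out shorter.) Minor care is needed only for measure-zero tie events in the characterization of $\{v\in C\}$ and for degenerate cases such as $\lambda_u = 0$ (forced by feasibility to occur only when $x_v \geq 1$), both of which are handled by the conventions $\Exp(0) = \infty$ and $e^{-0\cdot z} = 1$.
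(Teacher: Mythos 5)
Your proposal is correct, but it proves the lemma by a genuinely different route than the paper. You build the irreducible instance in one shot (one singleton $v_u$ of rate $\lambda_u = \sum_{v' \in N(u)\setminus\{v\}} x_{v'}$ per element $u$), condition on $Z_v = z$, and reduce everything to the pointwise inequality $\Pr[\forall u:\ W_u \le z] \ge \prod_u \Pr[W_u \le z]$, which you get from the Harris/FKG positive-association inequality because each $W_u = \min_{v' \in N(u)\setminus\{v\}} Z_{v'}$ is a monotone function of the independent clocks; the boundary cases ($x_v=0$, $\lambda_u=0$, ties) are handled correctly, and feasibility of $\xhat$ plus $\xhat_v = x_v$ are immediate. (Merging all of $u$'s non-$v$ singletons into one of rate $\lambda_u$ is harmless by \Cref{fact:pre:exp2}, so your target instance is equivalent to the paper's.) The paper instead proceeds incrementally, as in your parenthetical alternative: it repeatedly detaches a single edge $(\ucirc,\vcirc)$ from a reducible subset vertex, adds a fresh singleton copy at $\ucirc$, and shows monotonicity of $\Pr[v\in C]$ by an explicit coupling that maps each ``bad'' realization $(z,\zhat)$ (where $v$ is selected before the split but not after) to a ``good'' swapped realization $(\zhat,z)$, exploiting that $Z_{\vcirc}$ and $\Zhat_{\vhatcirc}$ are i.i.d. Your argument is shorter and conceptually crisp --- it isolates the real phenomenon, namely that the coverage minima $W_u$ are positively associated and hence dominated by independent copies --- at the cost of quoting Harris/FKG. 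The paper's elementary one-edge-at-a-time swap is longer but self-contained, and it is the version that transfers to the subset-arrival setting (\Cref{lem:set:irre}), where the marking rule also involves the history-dependent thresholds $\Tup{v'}_u$ and a clean FKG formulation is far less apparent.
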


\noindent
We will revisit the notion of irreducibility in the next section and formally prove \Cref{lem:set:irre}, an adaptation of \Cref{lem:off:reduce} to the subset arrival model, in \Cref{sec:set:irre}, so we omit the proof of this lemma here.
We instead defer the proof to \Cref{app:def:off}.

\def\figuugap{1}
\def\figuvgap{2.5}
\def\figvoffset{0.23}
\def\figusize{0.1}
\def\figvsize{0.1}

\tikzset{
element/.style={draw, rectangle, minimum height=\figusize, minimum width=\figusize},
subset/.style={draw, circle, minimum height=\figvsize, minimum width=\figvsize}
}

\begin{figure}
    \centering
    \begin{subfigure}[t]{0.32\textwidth}
        \centering
        \begin{tikzpicture}
            % elements
            \draw (0, -0*\figuugap) node[element] (u0) {};
            \draw (0, -1*\figuugap) node[element] (u1) {};
            \draw (0, -2*\figuugap) node[element] (u2) {};
            \draw (-0.2*\figuvgap, -2*\figuugap) node (dummybalance) {$\phantom{v}$};
            \draw (0, -3*\figuugap) node[element] (u3) {};
            \draw (0, -4*\figuugap) node[element] (u4) {};
        
            % subsets
            \draw (\figuvgap, -0*\figuugap) node[subset] (v0) {};
            \draw (\figuvgap, -1*\figuugap) node[subset] (v1) {};
            \draw (\figuvgap, -2*\figuugap) node[subset] (v2) {};
            \draw (\figuvgap, -2*\figuugap) node (v2label) {$v$};
            \draw (\figuvgap, -3*\figuugap) node[subset] (v3) {};
            \draw (\figuvgap, -4*\figuugap) node[subset] (v4) {};

            % x-values
            \draw (-0.25*\figuvgap, -2*\figuugap) node (dummyx) {$\phantom{0.5}$};
            \draw (1.25*\figuvgap, -0*\figuugap) node (v0x) {$0.2$};
            \draw (1.25*\figuvgap, -1*\figuugap) node (v1x) {$0.3$};
            \draw (1.25*\figuvgap, -2*\figuugap) node (v2x) {$0.5$};
            \draw (1.25*\figuvgap, -3*\figuugap) node (v3x) {$0.5$};
            \draw (1.25*\figuvgap, -4*\figuugap) node (v4x) {$0.8$};
            
            % edges
            \draw (v0) -- (u0);
            \draw (v0) -- (u1);

            \draw (v1) -- (u1);
            \draw (v1) -- (u2);

            \draw (v2) -- (u1);
            \draw (v2) -- (u2);
            \draw (v2) -- (u3);

            \draw (v3) -- (u2);
            \draw (v3) -- (u3);
            \draw (v3) -- (u4);

            \draw (v4) -- (u0);
            \draw (v4) -- (u4);
        \end{tikzpicture}
        \caption{Initial instance} \label{fig:off:init}
    \end{subfigure}
    \begin{subfigure}[t]{0.32\textwidth}
        \centering
        \begin{tikzpicture}
            % elements
            \draw (0, -0*\figuugap) node (u0) {};
            \draw (0, -1*\figuugap) node[element] (u1) {};
            \draw (0, -2*\figuugap) node[element] (u2) {};
            \draw (0, -3*\figuugap) node[element] (u3) {};
            \draw (0, -4*\figuugap) node (u4) {};
        
            % subsets
            \draw (\figuvgap, -0*\figuugap) node[subset] (v0) {};
            \draw (\figuvgap, -1*\figuugap) node[subset] (v1) {};
            \draw (\figuvgap, -2*\figuugap) node[subset] (v2) {};
            \draw (\figuvgap, -2*\figuugap) node (v2label) {$v$};
            \draw (\figuvgap, -3*\figuugap) node[subset] (v3) {};
            \draw (\figuvgap, -4*\figuugap) node (v4) {};

            % x-values
            \draw (-0.25*\figuvgap, -2*\figuugap) node (dummyx) {$\phantom{0.5}$};
            \draw (1.25*\figuvgap, -0*\figuugap) node (v0x) {$0.2$};
            \draw (1.25*\figuvgap, -1*\figuugap) node (v1x) {$0.3$};
            \draw (1.25*\figuvgap, -2*\figuugap) node (v2x) {$0.5$};
            \draw (1.25*\figuvgap, -3*\figuugap) node (v3x) {$0.5$};
            % \draw (1.2*\figuvgap, -4*\figuugap) node (v4x) {$0.8$};
            
            % edges
            \draw (v0) -- (u1);

            \draw (v1) -- (u1);
            \draw (v1) -- (u2);

            \draw (v2) -- (u1);
            \draw (v2) -- (u2);
            \draw (v2) -- (u3);

            \draw (v3) -- (u2);
            \draw (v3) -- (u3);
        \end{tikzpicture}
        \caption{$v$-complete graph due to \Cref{obs:off:subinst}} \label{fig:off:vcomp}
    \end{subfigure}
    \begin{subfigure}[t]{0.32\textwidth}
        \centering
        \begin{tikzpicture}
            % elements
            \draw (0, -0*\figuugap) node (u0) {};
            \draw (0, -1*\figuugap) node[element] (u1) {};
            \draw (0, -2*\figuugap) node[element] (u2) {};
            \draw (-0.2*\figuvgap, -2*\figuugap) node (dummybalance) {$\phantom{v}$};
            \draw (0, -3*\figuugap) node[element] (u3) {};
            \draw (0, -4*\figuugap) node (u4) {};
        
            % subsets
            \draw (\figuvgap, -0*\figuugap) node[subset] (v0) {};
            \draw (\figuvgap, -1*\figuugap+\figvoffset) node[subset] (v101) {};
            \draw (\figuvgap, -1*\figuugap-\figvoffset) node[subset] (v102) {};
            \draw (\figuvgap, -2*\figuugap) node[subset] (v2) {};
            \draw (\figuvgap, -2*\figuugap) node (v2label) {$v$};
            \draw (\figuvgap, -3*\figuugap+\figvoffset) node[subset] (v301) {};
            \draw (\figuvgap, -3*\figuugap-\figvoffset) node[subset] (v302) {};
            \draw (\figuvgap, -4*\figuugap) node (v4) {};

            % x-values
            \draw (-0.25*\figuvgap, -2*\figuugap) node (dummyx) {$\phantom{0.5}$};
            \draw (1.25*\figuvgap, -0*\figuugap) node (v0x) {$0.2$};
            \draw (1.25*\figuvgap, -1*\figuugap+\figvoffset) node (v101x) {$0.3$};
            \draw (1.25*\figuvgap, -1*\figuugap-\figvoffset) node (v102x) {$0.3$};
            \draw (1.25*\figuvgap, -2*\figuugap) node (v2x) {$0.5$};
            \draw (1.25*\figuvgap, -3*\figuugap+\figvoffset) node (v301x) {$0.5$};
            \draw (1.25*\figuvgap, -3*\figuugap-\figvoffset) node (v302x) {$0.5$};
            % \draw (1.2*\figuvgap, -4*\figuugap) node (v4x) {$0.8$};
            
            % edges
            \draw (v0) -- (u1);

            \draw (v101) -- (u1);
            \draw (v102) -- (u2);

            \draw (v2) -- (u1);
            \draw (v2) -- (u2);
            \draw (v2) -- (u3);

            \draw (v301) -- (u2);
            \draw (v302) -- (u3);
        \end{tikzpicture}
        \caption{Irreducible $v$-complete graph due to \Cref{lem:off:reduce}} \label{fig:off:irre}
    \end{subfigure}
    \caption{An illustration of the reduction to an irreducible $v$-complete graph under the offline model. The element and subset vertices are represented by squares and circles, respectively. The solution value of each subset vertex is on the right side.}
    \label{fig:off}
\end{figure}

Due to \Cref{lem:off:reduce}, we can further assume that $G$ is irreducible as well.
Let $k := |N(v)| = |U|$.
For every element vertex $u \in N(v)$, let $N'(u) := N(u) \setminus \{v\}$; observe that $\{ N'(u) \}_{u \in U}$ partition $V \setminus \{v\}$, i.e., for any two distinct $u, u' \in U$, $N'(u) \cap N'(u') = \emptyset$.

\begin{lemma} \label{lem:off:factor}
We have 
\[
    \Pr[v \in C] \leq x_v \cdot \int_0^\infty (1 - (1 - e^{-(1 - x_v) z})^k) \cdot e^{-x_v z} dz.
\]
\end{lemma}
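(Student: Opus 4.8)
The plan is to condition on the exponential clock $Z_v$ of $v$ and exploit the partition structure of the irreducible instance to treat each element vertex independently. Recall that, having reduced to an irreducible $v$-complete graph via \Cref{obs:off:subinst} and \Cref{lem:off:reduce}, we have $U = N(v)$ with $k = |U|$, and that the sets $N'(u) = N(u) \setminus \{v\}$, for $u \in U$, partition $V \setminus \{v\}$. I would first dispose of the degenerate case $x_v = 0$: then $Z_v = \Exp(0) = \infty$, while feasibility of $x$ at each $u \in U$ forces some $v' \in N(u)$ with $x_{v'} > 0$, hence a finite $Z_{v'}$, so $v$ is almost surely never the earliest clock in any $N(u)$ and $\Pr[v \in C] = 0$, matching the (vanishing) right-hand side. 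We may therefore assume $0 < x_v < 1$ (the case $x_v \ge 1$ is not needed: there $\Pr[v \in C] \le 1 \le x_v \le H_{|N(v)|}\, x_v$, so \Cref{lem:off:main} holds directly). In particular every $N'(u)$ is now nonempty, since otherwise feasibility at $u$ would give $x_v \ge 1$.

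For each $u \in U$, set $Y_u := \min_{v' \in N'(u)} Z_{v'}$, and let $\lambda_u := \sum_{v' \in N'(u)} x_{v'}$. By \Cref{fact:pre:exp2}, $Y_u \sim \Exp(\lambda_u)$, and feasibility of $x$ at $u$ (together with $v$-completeness, so $v \in N(u)$) gives $\lambda_u = \sum_{v' \in N(u)} x_{v'} - x_v \ge 1 - x_v > 0$. The crucial point is that, because the $N'(u)$ are pairwise disjoint and disjoint from $\{v\}$, the random variables $Z_v$ and $\{Y_u\}_{u \in U}$ are mutually independent. Moreover, $v \in C$ if and only if $v$ is the earliest clock in $N(u)$ for at least one $u$, i.e., if and only if $Z_v \le Y_u$ for some $u \in U$; since ties have probability zero, we may write this as $Z_v < Y_u$ for some $u$. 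Conditioning on $Z_v = z$ and using the independence of the $Y_u$,
\[
    \Pr[v \in C \mid Z_v = z] \;=\; 1 - \prod_{u \in U} \Pr[Y_u \le z] \;=\; 1 - \prod_{u \in U} \bigl(1 - e^{-\lambda_u z}\bigr).
\]
Now $\lambda_u \ge 1 - x_v > 0$ and \Cref{fact:pre:exp1} (equivalently, monotonicity of $t \mapsto 1 - e^{-tz}$) give $1 - e^{-\lambda_u z} \ge 1 - e^{-(1 - x_v) z} \ge 0$, so $\prod_{u \in U}(1 - e^{-\lambda_u z}) \ge (1 - e^{-(1 - x_v) z})^k$ and hence $\Pr[v \in C \mid Z_v = z] \le 1 - (1 - e^{-(1 - x_v) z})^k$. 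Integrating against the density $f_{\Exp}(z) = x_v e^{-x_v z}$ of $Z_v$ yields the claimed inequality.

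The argument is short, but two points deserve care, and the first is really the heart of it. First, the factorization into a product over $u$ is valid only because irreducibility makes the per-element minima $Y_u$ independent; in a general $v$-complete graph a single subset vertex can lie in several $N'(u)$, the $Y_u$ become correlated, and this step fails --- which is precisely why the reduction of \Cref{lem:off:reduce} is invoked beforehand. Second, one must handle the degenerate rates consistently (clocks of rate zero equal to $\infty$; almost-sure uniqueness of the $\argmin$ in Line~\ref{line:off05}, which holds since feasibility forbids every clock in an $N(u)$ from being infinite) and keep track of the standing assumption $x_v < 1$. Everything else is a routine conditioning computation.
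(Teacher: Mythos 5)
Your proof is correct and follows essentially the same route as the paper's: condition on $Z_v = z$, use the partition $\{N'(u)\}_{u \in U}$ of $V \setminus \{v\}$ to factor the conditional probability into a product of per-element events, bound each factor via \Cref{fact:pre:exp2} and \Cref{fact:pre:exp1} using feasibility ($\lambda_u \geq 1 - x_v$), and then integrate against the density of $Z_v$. The only difference is your extra bookkeeping for the degenerate cases ($x_v = 0$, $x_v \geq 1$, ties), which the paper leaves implicit.
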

\begin{proof}
For some $z \geq 0$, let us condition on that $Z_v = z$.
Observe that $v$ enters $C$ if and only if there exists $u \in U$ such that $\min_{v' \in N'(u)} \{ Z_{v'} \} \geq z$.
Since $\{Z_{v'}\}_{v' \in V \setminus \{v\} }$ are independent from $Z_v$ and $\{N'(u)\}_{u \in U}$ constitute a partition of $V \setminus \{v\}$, we can deduce that
\begin{align*}
    \Pr[v \in C \mid Z_v = z] 
    & = 1 - \prod_{u \in U} \Pr \left[ \min_{v' \in N'(u)} \{Z_{v'}\} < z \right]
    % \\ &
    \stackrel{(a)}{=} 1 - \prod_{u \in U} \Pr \left[ \textstyle  \Exp \left( \sum_{v' \in N'(u)} x_{v'} \right) < z \right] 
    \\ &
    \stackrel{(b)}{\leq} 1 - \prod_{u \in U} \Pr \left[ \textstyle  \Exp \left( 1 - x_v \right) < z \right] 
    %\\ &
    = 1 - \left(1 - e^{-(1-x_v) z} \right)^k,
\end{align*}
where $(a)$ follows from \Cref{fact:pre:exp2}, and $(b)$ follows from \Cref{fact:pre:exp1} with the fact that $\sum_{v' \in N'(u)} x_{v'} \geq 1 - x_v$ due to feasibility of $x$.
The proof of this lemma then immediately follows by unconditioning $Z_v$.
\end{proof}

The following lemma determines the approximate factor of \Cref{alg:off}.
\begin{lemma} \label{lem:off:upbnd}
For any $y \in [0, 1]$, we have
\[
\int_0^\infty (1 - (1 - e^{-(1 - y) z})^k) \cdot e^{-y z} dz \leq H_k.
\]
\end{lemma}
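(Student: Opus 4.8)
The plan is to regard the left-hand side as a function of $y$, namely
\[
  I(y) \;:=\; \int_0^\infty \bigl(1 - (1 - e^{-(1-y)z})^k\bigr)\, e^{-yz}\, dz, \qquad y \in [0,1],
\]
to evaluate $I(0)$ exactly and then to show that $I$ is non-increasing on $[0,1]$; together these give $I(y) \le I(0) = H_k$ for all $y \in [0,1]$, which is the claim.

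First I would compute the boundary value. Substituting $t = 1 - e^{-z}$ (so that $dz = dt/(1-t)$ and the limits $z = 0, \infty$ become $t = 0, 1$), one gets
\[
  I(0) = \int_0^\infty \bigl(1 - (1-e^{-z})^k\bigr)\,dz = \int_0^1 \frac{1 - t^k}{1 - t}\, dt = \int_0^1 \bigl(1 + t + \cdots + t^{k-1}\bigr)\, dt = H_k,
\]
so the inequality is in fact tight at $y = 0$.

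Next I would prove that $I$ is non-increasing on $[0,1]$ by differentiating under the integral sign. Writing $h = h(z,y) := 1 - e^{-(1-y)z} \in [0,1]$ (so $e^{-(1-y)z} = 1 - h$), the $y$-derivative of the integrand simplifies: the factor multiplying $z\, e^{-yz}$ becomes $k h^{k-1}(1-h) - (1 - h^k)$, and hence
\[
  I'(y) = \int_0^\infty z\, e^{-yz}\, \phi(h)\, dz, \qquad \text{where } \phi(t) := k\,t^{k-1} - (k-1)\,t^k - 1 .
\]
The crucial elementary observation is that $\phi(t) \le 0$ for every $t \in [0,1]$: indeed $\phi(1) = 0$ and $\phi'(t) = k(k-1)\,t^{k-2}(1-t) \ge 0$ on $[0,1]$, so $\phi$ is non-decreasing and therefore $\phi(t) \le \phi(1) = 0$. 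Since $z\, e^{-yz} \ge 0$, this forces $I'(y) \le 0$, so $I$ is non-increasing and $I(y) \le I(0) = H_k$.

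The only routine technical point is justifying the differentiation under the integral and the continuity of $I$ at the endpoints $y \in \{0,1\}$: on any compact subinterval $[\delta, 1-\delta] \subset (0,1)$ the absolute value of the $y$-derivative of the integrand is bounded by $z\, e^{-\delta z}$ (using $|\phi| \le 1$ on $[0,1]$), which is integrable in $z$, so Leibniz's rule applies; continuity at $0$ and $1$ follows from dominated convergence using the crude bound $1 - (1-a)^k \le k a$ for $a \in [0,1]$. I expect no real obstacle beyond this bookkeeping — the single idea in the argument is noticing that the derivative of $I$ collapses to an integral against the one-variable function $\phi$, whose sign is immediate from $\phi(1)=0$ and monotonicity.
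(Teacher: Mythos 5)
Your proof is correct: the derivative computation $\partial_y\bigl[(1-h^k)e^{-yz}\bigr] = z\,e^{-yz}\bigl(k h^{k-1}(1-h)-(1-h^k)\bigr)$ with $h=1-e^{-(1-y)z}$ is right, $\phi(t)=k t^{k-1}-(k-1)t^k-1$ is indeed $\le 0$ on $[0,1]$ by your monotonicity argument, the boundary evaluation $I(0)=H_k$ matches the paper's substitution $t=1-e^{-z}$, and the domination bounds you sketch ($|\phi|\le 1$ and $1-(1-a)^k\le ka$) do justify Leibniz's rule and continuity at the endpoints. The overall strategy is the same as the paper's --- tightness at $y=0$ plus monotonicity in $y$ --- but the mechanism differs: the paper proves the \emph{pointwise} statement that for each fixed $z$ the integrand equals $e^{-z}\frac{1-t^k}{1-t}=e^{-z}(1+t+\cdots+t^{k-1})$ with $t=1-e^{-(1-y)z}$ decreasing in $y$, so it never differentiates the integral and needs no integrability bookkeeping, whereas you differentiate $I(y)$ itself and therefore must invoke dominated convergence and Leibniz. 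Note that your own computation already yields the stronger pointwise fact: since $\partial_y F(z,y)=z e^{-yz}\phi(h)\le 0$ for every fixed $z$, the one-variable mean value theorem gives $F(z,y)\le F(z,0)$ directly, and integrating this pointwise inequality recovers the paper's argument and lets you delete the entire ``routine technical point'' paragraph. So each route works; the paper's (and the streamlined version of yours) is slightly more elementary, while your sign analysis of $\phi$ is a perfectly good, if marginally heavier, substitute for the geometric-series observation.
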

\begin{proof}
Let $f(z, y)$ be the function to be integrated in the left-hand side, i.e., 
\[
f(z, y) :=  (1 - (1 - e^{-(1 - y) z})^k) \cdot e^{-y z}.
\]
We claim that, for any fixed $z \geq 0$, $f(z, y)$ is decreasing with respect to $y$ on $[0, 1]$.
Indeed, when we substitute $t := 1 - e^{-(1 - y)z}$, $f(z, y)$ can be rephrased by
\[
e^{-z} \cdot \frac{1 - t^k}{1 - t} = e^{-z} \cdot (1 + t + \cdots + t^{k - 1})
\]
for $t \in [0, 1 - e^{-z}]$.
Note that this function is increasing with respect to $t$ on $[0, 1-e^{-z}]$, implying that $f(z, y)$ is decreasing with respect to $y$ on $[0, 1]$ as claimed.
We thus have
\[
\int_0^\infty f(z, y) dz \leq \int_0^\infty f(z, 0) dz = \int_0^\infty (1 - (1 - e^{-z})^k) dz = \int_0^1 \frac{1 - t^k}{1 - t} dt = H_k,
\]
where we substitute $t := 1 - e^{-z}$ in the second equality.
\end{proof}

Combining \Cref{obs:off:subinst} and \Cref{lem:off:reduce,lem:off:upbnd,lem:off:factor} completes the proof of \Cref{lem:off:main}.

\paragraph{Remark on Rounding a Half-Integral Solution}
If the given fractional solution $x$ is half-integral, Lemma~\ref{lem:off:factor} implies the following approximation factor.
\begin{lemma} \label{lem:off:halfint}
If $x_v = \frac{1}{2}$, we have
$
\Pr [v \in C] \leq \frac{2k}{k+1} \cdot x_v.
$ 
\end{lemma}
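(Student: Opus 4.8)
The plan is to derive the bound directly from \Cref{lem:off:factor} by substituting $x_v = \tfrac12$ and evaluating the resulting integral in closed form; no new ideas beyond the warm-up analysis are needed. Concretely, \Cref{lem:off:factor} already gives
\[
\Pr[v \in C] \leq x_v \cdot \int_0^\infty \bigl(1 - (1 - e^{-(1 - x_v) z})^k\bigr) \cdot e^{-x_v z}\, dz,
\]
and plugging in $x_v = \tfrac12$ turns the integral into $\int_0^\infty (1 - (1 - e^{-z/2})^k)\, e^{-z/2}\, dz$.

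The key computation is the substitution $t := 1 - e^{-z/2}$, under which $e^{-z/2}\, dz = 2\, dt$, the limits $z = 0$ and $z \to \infty$ map to $t = 0$ and $t = 1$, and the integrand factor $1 - (1 - e^{-z/2})^k$ becomes $1 - t^k$. Hence
\[
\int_0^\infty \bigl(1 - (1 - e^{-z/2})^k\bigr)\, e^{-z/2}\, dz = 2\int_0^1 (1 - t^k)\, dt = 2\left(1 - \frac{1}{k+1}\right) = \frac{2k}{k+1}.
\]
Combining this with the bound from \Cref{lem:off:factor} and $x_v = \tfrac12$ yields $\Pr[v \in C] \leq \tfrac12 \cdot \tfrac{2k}{k+1} = \tfrac{k}{k+1} = \tfrac{2k}{k+1}\cdot x_v$, as claimed.

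I do not anticipate a genuine obstacle here: unlike \Cref{lem:off:upbnd}, where monotonicity in $y$ had to be invoked to reduce to the $y = 0$ case before integrating, here the value $y = x_v = \tfrac12$ is fixed, so the integral is evaluated outright — the only thing to be careful about is bookkeeping the factor of $2$ from the change of variables (the $e^{-z/2}$ in the integrand is exactly the differential factor, rather than producing a $\tfrac{1}{1-t}$ as in the $H_k$ computation). One may optionally remark that, since the inequality in \Cref{lem:off:factor} comes only from \Cref{fact:pre:exp1}, the bound $\tfrac{2k}{k+1}\cdot x_v$ is attained (up to that step) by the irreducible $v$-complete instance in which every singleton subset also has value $\tfrac12$, explaining why this is the natural half-integral factor.
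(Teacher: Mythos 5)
Your proposal is correct and follows exactly the paper's argument: apply \Cref{lem:off:factor} with $x_v = \tfrac12$ and evaluate the integral via the substitution $t := 1 - e^{-z/2}$, which gives $2\int_0^1 (1-t^k)\,dt = \tfrac{2k}{k+1}$. The change-of-variables bookkeeping (the factor $2\,dt = e^{-z/2}\,dz$) and the final combination with $x_v = \tfrac12$ are all as in the paper, so nothing further is needed.
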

\begin{proof}
By Lemma~\ref{lem:off:factor}, we have
\begin{align*}
\Pr [ v \in C ]
& \leq x_v \cdot \int_0^\infty (1 - (1 - e^{-z/2})^k) \cdot e^{-z/2} dz 
%\\ &
= x_v \cdot \int_0^1 2 (1 - t^k) dt
\\ &
= \frac{2k}{k+1} \cdot x_v,
\end{align*}
where the first equality is obtained by substituting $t := 1 - e^{-z/2}$.
\end{proof}

For the vertex cover problem, it is well-known that a basic feasible solution to the standard LP relaxation is half-integral~\cite{nemhauser1975vertex}.
We can thus derive the next corollary.
\begin{corollary}
    There exists a $\frac{2s}{s+1}$-approximation algorithm for the vertex cover problem, where $s$ denotes the maximum degree of a vertex in the input graph.
    The integrality gap of the standard LP relaxation for vertex cover is at most $\frac{2s}{s+1}$.
\end{corollary}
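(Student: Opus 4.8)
The plan is to recognize the standard vertex-cover LP as a special case of the set-cover LP from \Cref{sec:off} and then invoke \Cref{alg:off} essentially verbatim. For a graph $H = (W, F)$ with costs $c : W \to \R_{\geq 0}$, the LP minimizing $\sum_{w} c(w)\,y_w$ subject to $y_a + y_b \geq 1$ for all $(a,b) \in F$ and $y \geq 0$ is exactly the set-cover LP for the bipartite graph $G = (U \cup V, E)$ whose element vertices $U$ are the edges $F$ (each with exactly two neighbors) and whose subset vertices $V$ are the vertices $W$, with $(e, w) \in E$ iff $w$ is an endpoint of $e$; here $|N(w)| = \deg_H(w)$, so the quantity $s = \max_{v \in V} |N(v)|$ equals the maximum degree of $H$. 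I would first fix an optimal basic feasible solution $x^\star$ to this LP, which by Nemhauser and Trotter~\cite{nemhauser1975vertex} is half-integral, $x^\star_v \in \{0, \tfrac12, 1\}$ for every $v \in V$.

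The algorithm is then: solve the LP, take such a half-integral optimum $x^\star$, and run \Cref{alg:off} on the instance $(G, x^\star)$; \Cref{lem:off:feas} already guarantees the output $C$ is a feasible vertex cover. The remaining work is to bound $\E[\sum_{v \in C} c(v)] = \sum_v c(v)\,\Pr[v \in C]$ by a case split on the value of $x^\star_v$. For $x^\star_v = \tfrac12$, \Cref{lem:off:halfint} gives $\Pr[v \in C] \leq \tfrac{2k}{k+1}\,x^\star_v$ with $k = |N(v)| = \deg_H(v) \leq s$, and I would finish this case by noting that $k \mapsto \tfrac{2k}{k+1} = 2 - \tfrac{2}{k+1}$ is increasing, so $\Pr[v \in C] \leq \tfrac{2s}{s+1}\,x^\star_v$. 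The endpoint cases are easy: $x^\star_v = 1$ gives $\Pr[v \in C] \leq 1 \leq \tfrac{2s}{s+1}\,x^\star_v$, and $x^\star_v = 0$ gives $\Pr[v \in C] = 0$, since a zero-rate clock $\Exp(0) = \infty$ can be the minimum among $\{Z_{v'} : v' \in N(u)\}$ only if every neighbor of $u$ has value $0$, which LP feasibility forbids (the two endpoints of the corresponding edge cannot both be zero). Summing over $v$ then yields $\E[\sum_{v \in C} c(v)] \leq \tfrac{2s}{s+1}\sum_v c(v)\,x^\star_v = \tfrac{2s}{s+1}\cdot\mathrm{OPT}_{\mathrm{LP}} \leq \tfrac{2s}{s+1}\cdot\mathrm{OPT}$, giving the approximation guarantee; a deterministic algorithm follows by derandomizing \Cref{alg:off} with the method of conditional expectations. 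The same inequality, read as ``expected cost of an always-feasible integral solution $\leq \tfrac{2s}{s+1}\,\mathrm{OPT}_{\mathrm{LP}}$'', shows that some feasible vertex cover costs at most $\tfrac{2s}{s+1}\,\mathrm{OPT}_{\mathrm{LP}}$, i.e., the integrality gap is at most $\tfrac{2s}{s+1}$.

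I do not expect a genuine obstacle here, as all the substance lives in \Cref{sec:off}. The only points needing care are setting up the reduction so that the maximum subset size of the derived set-cover instance is precisely the maximum degree $s$, handling the two half-integral endpoint values $0$ and $1$ that \Cref{lem:off:halfint} does not directly cover (the former via LP feasibility, the latter trivially), and the elementary monotonicity of $\tfrac{2k}{k+1}$ in $k$.
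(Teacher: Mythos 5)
Your proposal is correct and follows essentially the same route the paper intends: reduce vertex cover to the set-cover LP (elements = edges, subsets = vertices, so $s$ is the maximum degree), take a half-integral basic optimal solution via Nemhauser--Trotter, and round with \Cref{alg:off}, invoking \Cref{lem:off:feas} for feasibility and \Cref{lem:off:halfint} plus monotonicity of $\frac{2k}{k+1}$ for the cost bound. Your explicit handling of the endpoint values $x^\star_v \in \{0,1\}$ and the integrality-gap conclusion are exactly the details the paper leaves implicit, so there is nothing to add.
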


\noindent
We remark that better approximation is possible through semidefinite programming relaxation~\cite{halperin2002improved, karakostas2009better}, but understanding the standard LP relaxation is also of independent interest~\cite{singh2019integrality, kashaev2023round}.
\subsection{Online Rounding Scheme under Element Arrivals} \label{sec:eltarr}
In this subsection, we show that the previous offline rounding scheme extends to an online $H_s$-competitive rounding scheme under the element arrival model.
Recall that, under this online model, a ground set system represented by $G = (U \cup V, E)$ is given upfront.
There exists an adversary feeding to the rounding scheme a subset of element vertices $U' \subseteq U$ one at a time in an online manner.
The adversary also maintains $x \in \R^V_{\geq 0}$ visible to the rounding scheme at all times such that, at the end of the iteration when $u \in U'$ is fed, the adversary guarantees to have $\sum_{v \in N(u)} x_v \geq 1$ by monotonically increasing some entries of $x$.
Observe that, due to this monotonicity of $x$, after all vertices in $U'$ are fed, $x$ is a feasible fractional set cover of $U'$.

For two element vertices $u, u' \in U$, let us denote by $u' \prec u$ if $u, u' \in U'$ and $u'$ precedes $u$ in the arrivals.
Note that $u' \not\prec u$ represents not necessarily that $u'$ arrives no earlier than $u$ (when $u, u' \in U'$), but possibly that either $u \not\in U'$ or $u' \not\in U'$; in what follows, we use $u' \not\prec u$ only when $u \in U'$, and hence, it indicates that either $u' \not\in U'$ or $u'$ arrives no earlier than $u$ in the arrivals.

\paragraph{Scheme Description}
We now describe our online rounding scheme.
A detailed pseudocode for the scheme is presented in \Cref{alg:eltarr}.
The scheme internally constructs an auxiliary graph $\Ghat = (U \cup \Vhat, \Ehat)$ in an online manner as follows.
When $u \in U'$ is fed by the adversary, it adds to $\Vhat$ a new disjoint copy of $V$; for each $v \in V$, we denote by $\vup{u}$ the subset vertex corresponding to $v$ added to $\Vhat$ at this moment.
Moreover, each $\vup{u}$ is adjacent in $\Ghat$ with an element vertex $u' \in U$ (i.e., $(u', \vup{u})$ is added in $\Ehat$) if and only if $u'$ is adjacent with $v$ in $G$ (denoted by $u' \in N_G(v)$) and $u'$ is not an element vertex that arrived before $u$ (denoted by $u' \not\prec u$).
Then, for each $v \in V$, let $\xhat_{\vup{u}}$ be the increment of $x_v$ by the adversary at this timestep, and the scheme sets an exponential clock of $Z_{\vup{u}} \sim \Exp(\xhat_{\vup{u}})$ for each $\vup{u}$.
The scheme then selects into $\Chat$ the ``winners'' with respect to the exponential clocks among the neighbors of $u$ in $\Ghat$ constructed so far.
The output $C$ of the scheme is formulated by the subset vertices $v \in V$ whenever $\vup{u'}$ is selected in $\Chat$ for some $u' \in U'$.

\begin{algorithm}
    \caption{Online $H_s$-competitive rounding scheme under element arrivals} \label{alg:eltarr}
    \KwIn{A ground set system represented by $G = (U \cup V, E)$ and an adversary that feeds $U' \subseteq U$ and maintains $x \in \R^V_{\geq 0}$}
    \KwOut{$C \subseteq V$ covering $U'$}
    $C \gets \emptyset$, $\Chat \gets \emptyset$ \;
    Define an auxiliary graph $\Ghat = (U \cup \Vhat, \Ehat$), where initially $\Vhat = \Ehat = \emptyset$ \;
    \For{each element vertex $u \in U'$ in its arrival order}{
        $\Vhat \gets \Vhat \cup \{ \vup{u} \mid v \in V \} $\;
        $\Ehat \gets \Ehat \cup \{ (u', \vup{u}) \mid u' \in N_G(v) \text{ and } u' \not\prec u \} $ \;
        \For{each $v \in V$} {
            Let $\xhat_{\vup{u}}$ be the increment of $x_v$ at this timestep \;
            Sample $Z_{\vup{u}} \sim \Exp(\xhat_{\vup{u}})$ independently\;
        }
        $\Chat \gets \Chat \cup \argmin_{v \in N_{\Ghat}(u)} \{Z_v\}$ \; \label{line:elt09}
        $C \gets \{ v \in V \mid \exists u' \text{ such that } \vup{u'} \in \Chat \}$\;
    }
    \Return{$C$}\;
\end{algorithm}

\newcommand{\Chatoff}{\Chat_\mathsf{off}}
\paragraph{Analysis}
To analyze this rounding scheme, we consider the final $\Ghat$, $\xhat$, $\Chat$, and $C$ constructed by the online rounding scheme.
Let $\Ghat'$ be the subgraph of $\Ghat$ induced by $U' \cup \Ehat$ (see \Cref{fig:eltarr}); note that $\xhat$ is a feasible solution to the LP relaxation with respect to $\Ghat'$.
Let $\Chatoff$ be the output of the previous offline rounding scheme, \Cref{alg:off}, given $(\Ghat', \xhat)$.
The following lemma then relates the two rounding schemes.

\begin{lemma} \label{lem:elt:equiv}
    $\Chat$ produced by \Cref{alg:eltarr} has the same distribution as $\Chat_\mathsf{off}$ by \Cref{alg:off}.
\end{lemma}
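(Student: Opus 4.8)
The plan is to build a single probability space on which \Cref{alg:eltarr} and \Cref{alg:off} (the latter run on $(\Ghat', \xhat)$) produce \emph{the same} set $\Chat = \Chatoff$, which in particular yields equality of distributions. On this space we draw, for every subset vertex $\vup{u}$ that \Cref{alg:eltarr} ever creates, an independent clock $Z_{\vup{u}} \sim \Exp(\xhat_{\vup{u}})$, and feed these same values to both algorithms. This is a faithful coupling: both algorithms use mutually independent exponential clocks with exactly these rates, the only difference being that \Cref{alg:eltarr} samples $Z_{\vup{u}}$ lazily, at the timestep when $u$ arrives, whereas \Cref{alg:off} samples everything up front.

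The heart of the argument is the claim that \emph{the neighbourhood of $u$ is frozen in the auxiliary graph as soon as $u$ is processed}: at the moment line~\ref{line:elt09} is executed for $u \in U'$, the neighbourhood $N_{\Ghat}(u)$ in the graph built so far already equals the final neighbourhood $N_{\Ghat'}(u)$. Indeed, an edge $(u, \vup{u''})$ is inserted only when $u \in N_G(v)$ and $u \not\prec u''$, i.e.\ $u$ does not strictly precede $u''$; hence (i) every subset vertex adjacent to $u$ was created at a timestep $u'' \preceq u$ and so is present, with its clock already sampled, when $u$ is processed, and (ii) no subset vertex created after $u$ can ever become adjacent to $u$. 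Consequently the winner $\argmin_{w \in N_{\Ghat}(u)}\{Z_w\}$ computed online at step $u$ coincides with $\argmin_{w \in N_{\Ghat'}(u)}\{Z_w\}$. Since $\xhat$ is feasible for $\Ghat'$, every $u \in U'$ has a neighbour of positive rate, so this argmin is almost surely a unique finite vertex; the probability-zero tie event is resolved by a fixed rule shared by both algorithms.

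It remains to combine the per-element winners. Running \Cref{alg:off} on $(\Ghat', \xhat)$ with its element vertices taken in arrival order yields $\Chatoff = \bigcup_{u \in U'} \argmin_{w \in N_{\Ghat'}(u)}\{Z_w\}$, and this union is independent of the order because each term depends only on the clocks of that element's neighbours, not on the partial output. On the other hand $\Chat = \bigcup_{u \in U'} \argmin_{w \in N_{\Ghat}(u)}\{Z_w\}$, where the graph $\Ghat$ appearing in the $u$-th term is the one present when $u$ is processed; by the previous paragraph each such term equals $\argmin_{w \in N_{\Ghat'}(u)}\{Z_w\}$, and subset vertices that end up isolated (and hence are dropped from $\Ghat'$) are never selected by either algorithm. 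Therefore $\Chat = \Chatoff$ pointwise on the coupled space, which proves the lemma.

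I expect the only genuinely non-routine step to be the frozen-neighbourhood claim in the middle paragraph; everything else — independence of the clocks, order-independence of the offline union, and the treatment of ties and isolated vertices — is bookkeeping. That claim is a direct consequence of the $u' \not\prec u$ clause in the definition of $\Ehat$ together with a careful tally of which subset vertices and edges exist at each timestep, so I anticipate it going through cleanly.
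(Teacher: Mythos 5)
Your proposal is correct and follows essentially the same route as the paper: couple the exponential clocks between the two executions and observe that, by the $u' \not\prec u$ clause, the neighbourhood of each arriving element in the auxiliary graph is already final when line~\ref{line:elt09} is executed, so the per-element winners coincide and $\Chat = \Chatoff$ pointwise. The additional remarks on order-independence, ties, and isolated subset vertices are harmless bookkeeping that the paper handles implicitly by iterating the offline scheme in arrival order.
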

\begin{proof}
    Besides the execution of \Cref{alg:eltarr}, let us simultaneously imagine the execution of \Cref{alg:off} given $(\Ghat', \xhat)$ where the element vertices $U'$ are iterated in their arrival order.
    As $\Vhat$ and $\xhat$ are equivalent in both executions, we can couple $Z_v$ for every $v \in \Vhat$ from both executions.
    
    Consider now the moment when any $u \in U'$ arrives.
    By construction of $\Ghat$, $u$ is adjacent in $\Ghat$ (and hence $\Ghat'$) with the subset vertices constructed so far in the execution of \Cref{alg:eltarr}, even in the execution of \Cref{alg:off}, i.e.,
    \[
        N_{\Ghat}(u) = N_{\Ghat'}(u) = \{ \vup{u'} \in \Vhat \mid v \in N_G(u) \text{ and } u' \preceq u \}.
    \]
    Due to the coupling, $\Chat$ and $\Chatoff$ would be updated equivalently in Line~\ref{line:elt09} of \Cref{alg:eltarr} and Line~\ref{line:off05} of \Cref{alg:off}, respectively, yielding that $\Chat = \Chatoff$ under this coupling.
\end{proof}

\def\figvvgap{0.5}

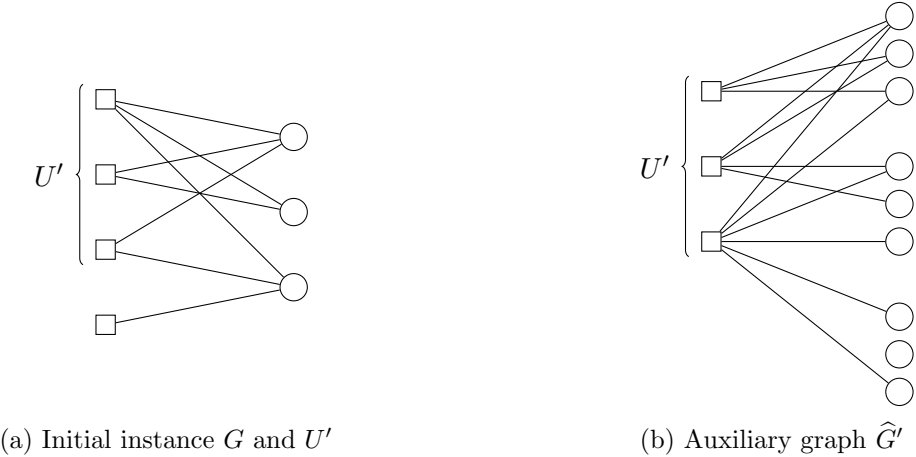
\begin{figure}
    \centering
    \begin{subfigure}{0.48\textwidth}
        \centering
        \begin{tikzpicture}
            % elements
            \draw (0, -0*\figuugap) node[element] (u0) {};
            \draw (0, -1*\figuugap) node[element] (u1) {};
            \draw (0, -2*\figuugap) node[element] (u2) {};
            \draw (0, -3*\figuugap) node[element] (u3) {};
            
            \draw [decorate, decoration = {brace, mirror}] (-0.12*\figuvgap, 0.2*\figuugap) --  (-0.12*\figuvgap, -2.2*\figuugap);
            \draw (-0.3*\figuvgap, -1*\figuugap) node (Uplabel) {$U'$};
            
            % subsets
            \draw (\figuvgap, -0.5*\figuugap) node[subset] (v0) {};
            \draw (\figuvgap, -1.5*\figuugap) node[subset] (v1) {};
            \draw (\figuvgap, -2.5*\figuugap) node[subset] (v2) {};
            \draw (\figuvgap, 0.5*\figuugap+\figvvgap) node (v000) {};
            \draw (\figuvgap, -3.5*\figuugap-\figvvgap) node (v202) {};

            % edges
            \draw (v0) -- (u0);
            \draw (v0) -- (u1);
            \draw (v0) -- (u2);

            \draw (v1) -- (u0);
            \draw (v1) -- (u1);

            \draw (v2) -- (u0);
            \draw (v2) -- (u2);
            \draw (v2) -- (u3);
        \end{tikzpicture}
        \caption{Initial instance $G$ and $U'$}
    \end{subfigure}
    \begin{subfigure}{0.48\textwidth}
        \centering
        \begin{tikzpicture}
            % elements
            \draw (0, -0*\figuugap) node[element] (u0) {};
            \draw (0, -1*\figuugap) node[element] (u1) {};
            \draw (0, -2*\figuugap) node[element] (u2) {};
            \draw (0, -3*\figuugap) node (u3) {};

            \draw [decorate, decoration = {brace, mirror}] (-0.12*\figuvgap, 0.2*\figuugap) --  (-0.12*\figuvgap, -2.2*\figuugap);
            \draw (-0.3*\figuvgap, -1*\figuugap) node (Uplabel) {$U'$};
            
            % subsets
            \draw (\figuvgap, 0.5*\figuugap+\figvvgap) node[subset] (v000) {};
            \draw (\figuvgap, 0.5*\figuugap) node[subset] (v100) {};
            \draw (\figuvgap, 0.5*\figuugap-\figvvgap) node[subset] (v200) {};

            \draw (\figuvgap, -1.5*\figuugap+\figvvgap) node[subset] (v001) {};
            \draw (\figuvgap, -1.5*\figuugap) node[subset] (v101) {};
            \draw (\figuvgap, -1.5*\figuugap-\figvvgap) node[subset] (v201) {};

            \draw (\figuvgap, -3.5*\figuugap+\figvvgap) node[subset] (v002) {};
            \draw (\figuvgap, -3.5*\figuugap) node[subset] (v102) {};
            \draw (\figuvgap, -3.5*\figuugap-\figvvgap) node[subset] (v202) {};

            % edges
            \draw (v000) -- (u0);
            \draw (v000) -- (u1);
            \draw (v000) -- (u2);
            
            \draw (v001) -- (u1);
            \draw (v001) -- (u2);

            \draw (v002) -- (u2);

            \draw (v100) -- (u0);
            \draw (v100) -- (u1);

            \draw (v101) -- (u1);

            \draw (v200) -- (u0);
            \draw (v200) -- (u2);

            \draw (v201) -- (u2);

            \draw (v202) -- (u2);
        \end{tikzpicture}
        \caption{Auxiliary graph $\Ghat'$}
    \end{subfigure}
    \caption{An illustration of the construction of $\Ghat'$ under the element arrival model. The element and subset vertices are represented by squares and circles, respectively. The vertices in $U'$ arrive in the order from top to bottom.}
    \label{fig:eltarr}
\end{figure}

The below corollaries come from \Cref{lem:off:feas,lem:off:main}, respectively, due to \Cref{lem:elt:equiv}.
\begin{corollary} \label{cor:elt:feas}
    $\Chat$ is a set cover of $U'$ in $\Ghat$, i.e., $\bigcup_{v \in \Chat} N_{\Ghat}(v) \supseteq U'$.
\end{corollary}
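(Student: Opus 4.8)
The plan is to obtain this as an immediate consequence of the feasibility lemma for the offline scheme, \Cref{lem:off:feas}, together with the coupling constructed in the proof of \Cref{lem:elt:equiv}. In that coupling the exponential clocks $\{Z_v\}_{v \in \Vhat}$ are shared between the online run of \Cref{alg:eltarr} and the offline run of \Cref{alg:off} on the instance $(\Ghat', \xhat)$, and one has $\Chat = \Chatoff$ at every sample point; it therefore suffices to argue that $\Chatoff$ always covers $U'$ in $\Ghat$.

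First I would recall (as noted just before \Cref{lem:elt:equiv}) that $\Ghat'$ is the subgraph of $\Ghat$ whose element side is exactly $U'$ and that $\xhat$ is feasible for the LP relaxation with respect to $\Ghat'$. Applying \Cref{lem:off:feas} to $(\Ghat', \xhat)$ then yields $\bigcup_{v \in \Chatoff} N_{\Ghat'}(v) = U'$. Since passing from $\Ghat'$ back to $\Ghat$ only reinstates element neighbors lying outside $U'$, we have $N_{\Ghat'}(v) \subseteq N_{\Ghat}(v)$ for every $v \in \Vhat$; combined with $\Chat = \Chatoff$ in the coupled space this gives $\bigcup_{v \in \Chat} N_{\Ghat}(v) \supseteq \bigcup_{v \in \Chat} N_{\Ghat'}(v) = U'$ at every sample point. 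As ``being a set cover of $U'$ in $\Ghat$'' is a deterministic property of the set $\Chat$ once the adversarially fixed graph $\Ghat$ is given, the conclusion transfers back to the original execution of \Cref{alg:eltarr}.

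There is essentially no obstacle here: the only point requiring care is the neighborhood bookkeeping, namely that restricting to $\Ghat'$ does not delete any $U'$-neighbor of a subset vertex, so that a cover of $U'$ in $\Ghat'$ remains a cover of $U'$ in $\Ghat$. This is immediate from the construction of $\Ghat$, since $N_{\Ghat}(u) = N_{\Ghat'}(u)$ for each $u \in U'$, as recorded in the proof of \Cref{lem:elt:equiv}. Alternatively one could bypass the coupling entirely and observe directly that Line~\ref{line:elt09} of \Cref{alg:eltarr} inserts into $\Chat$ a neighbor of $u$ in $\Ghat$ for every arriving $u \in U'$, so every element of $U'$ has a selected neighbor; I would present the coupling-based derivation to keep the exposition parallel with the other corollary deduced from \Cref{lem:off:main}, but could relegate this one-line observation to a remark.
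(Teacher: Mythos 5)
Your proposal is correct and follows essentially the same route as the paper, which deduces \Cref{cor:elt:feas} directly from \Cref{lem:off:feas} via the equivalence with the offline run on $(\Ghat',\xhat)$ established in \Cref{lem:elt:equiv}; your explicit handling of the passage from $N_{\Ghat'}(v)$ to $N_{\Ghat}(v)$ just spells out bookkeeping the paper leaves implicit. The one-line direct argument you mention (Line~\ref{line:elt09} selects a $\Ghat$-neighbor of every arriving $u\in U'$) is also valid but is not the paper's stated derivation.
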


\begin{corollary} \label{cor:elt:prob}
    For any $\vup{u} \in \Vhat$, $\Pr[ \vup{u} \in \Chat ] \leq H_{|N_{\Ghat} (\vup{u})|} \cdot \xhat_{\vup{u}}$.
\end{corollary}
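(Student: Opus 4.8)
The plan is to obtain the bound directly from the offline analysis via the distributional equivalence of \Cref{lem:elt:equiv}. First I would recall that $\Chat$ produced by \Cref{alg:eltarr} has the same distribution as $\Chatoff$, the output of \Cref{alg:off} run on the pair $(\Ghat', \xhat)$, where $\Ghat'$ is the subgraph of $\Ghat$ obtained by discarding the element vertices outside $U'$, and that $\xhat$ is feasible for the LP relaxation with respect to $\Ghat'$ (both facts are already recorded above). In particular, for the fixed subset-copy vertex $\vup{u} \in \Vhat$ this gives $\Pr[\vup{u} \in \Chat] = \Pr[\vup{u} \in \Chatoff]$.

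Next I would invoke \Cref{lem:off:main} on the offline instance $(\Ghat', \xhat)$ at the subset vertex $\vup{u}$. Since $\vup{u}$ is a subset vertex of $\Ghat'$ and $\xhat$ is feasible there, the lemma yields $\Pr[\vup{u} \in \Chatoff] \le H_{|N_{\Ghat'}(\vup{u})|} \cdot \xhat_{\vup{u}}$. (If $\vup{u}$ happens to have no neighbor in $\Ghat'$, i.e.\ all of its neighbors are element vertices outside $U'$, then $\vup{u}$ can never be selected and both sides are $0$, so the bound is trivial; otherwise the harmonic number on the right is well defined.) Finally, since $\Ghat'$ is a subgraph of $\Ghat$ on the same subset-vertex side, $N_{\Ghat'}(\vup{u}) \subseteq N_{\Ghat}(\vup{u})$, so $|N_{\Ghat'}(\vup{u})| \le |N_{\Ghat}(\vup{u})|$, and as the harmonic numbers are non-decreasing, $H_{|N_{\Ghat'}(\vup{u})|} \le H_{|N_{\Ghat}(\vup{u})|}$. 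Chaining the three displays gives $\Pr[\vup{u} \in \Chat] \le H_{|N_{\Ghat}(\vup{u})|} \cdot \xhat_{\vup{u}}$, as claimed.

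I do not expect any genuine obstacle: the corollary is an immediate consequence of \Cref{lem:elt:equiv} and \Cref{lem:off:main}. The only subtlety worth spelling out is that the neighborhood in the offline bound is measured in $\Ghat'$ rather than in $\Ghat$; this is resolved by monotonicity of the harmonic numbers, and the resulting slack is harmless since all that is needed downstream is $|N_{\Ghat}(\vup{u})| \le |N_G(v)| \le s$, hence $H_{|N_{\Ghat}(\vup{u})|} \le H_s$.
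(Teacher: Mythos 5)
Your proof is correct and is essentially the paper's own argument: the paper derives \Cref{cor:elt:prob} directly from \Cref{lem:elt:equiv} combined with \Cref{lem:off:main} applied to $(\Ghat', \xhat)$, exactly as you do. Your extra care in bridging $N_{\Ghat'}(\vup{u})$ versus $N_{\Ghat}(\vup{u})$ via monotonicity of harmonic numbers (and the degenerate case where $\vup{u}$ has no neighbor in $\Ghat'$, where the left-hand side is $0$ so the bound holds trivially) is a detail the paper leaves implicit, not a different route.
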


Using these corollaries, we can show the main theorem for the online rounding scheme under element arrivals.

\begin{theorem}
    \Cref{alg:eltarr} is an $H_s$-competitive rounding scheme for online set cover under the element arrival model.
\end{theorem}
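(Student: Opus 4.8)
The plan is to read off both required properties of the final output $C$ directly from \Cref{cor:elt:feas} and \Cref{cor:elt:prob}, which already transfer the analysis of \Cref{alg:off} to the online execution via the coupling in \Cref{lem:elt:equiv}. Recall that $C = \{ v \in V \mid \vup{u'} \in \Chat \text{ for some } u' \in U' \}$, and that, by construction of $\Ghat$, each copy $\vup{u'}$ satisfies $N_{\Ghat}(\vup{u'}) = \{ u'' \in N_G(v) \mid u'' \not\prec u' \} \subseteq N_G(v)$; in particular a copy of every $v \in V$ is created at each arrival $u' \in U'$, and $\xhat_{\vup{u'}}$ is the (possibly zero) increment of $x_v$ at that step.

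For feasibility, fix an element vertex $u \in U'$. By \Cref{cor:elt:feas} there is some $\vup{u'} \in \Chat$ with $u \in N_{\Ghat}(\vup{u'})$; then $v \in C$ by definition of $C$, and $u \in N_{\Ghat}(\vup{u'}) \subseteq N_G(v)$, so $u$ is covered by $v$ in $G$. Hence $\bigcup_{v \in C} N_G(v) \supseteq U'$, as required. For the cost guarantee, fix a subset vertex $v \in V$. A union bound over the (at most $|U'|$) copies of $v$ gives
\[
  \Pr[v \in C] \;\leq\; \sum_{u' \in U'} \Pr[\vup{u'} \in \Chat] \;\leq\; \sum_{u' \in U'} H_{|N_{\Ghat}(\vup{u'})|} \cdot \xhat_{\vup{u'}} \;\leq\; H_s \sum_{u' \in U'} \xhat_{\vup{u'}} \;=\; H_s \cdot x_v,
\]
where the second inequality is \Cref{cor:elt:prob}, the third uses $|N_{\Ghat}(\vup{u'})| \le |N_G(v)| \le s$ together with monotonicity of the harmonic numbers, and the last equality holds because $x$ starts at $\mathbf 0$ and is only ever increased, so the per-step increments telescope to the final value $x_v$ maintained by the adversary. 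Multiplying by $c(v)$ and summing over $v \in V$ yields $\E[\sum_{v \in C} c(v)] = \sum_{v \in V} c(v)\,\Pr[v \in C] \le H_s \sum_{v \in V} c(v)\,x_v$, which is exactly the $H_s$-competitiveness claim.

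I do not anticipate a genuine obstacle here: the substantive work was already done in \Cref{lem:elt:equiv} and its corollaries, and what remains is the bookkeeping above. The only points requiring minor care are verifying that $N_{\Ghat}(\vup{u'}) \subseteq N_G(v)$ so that the harmonic bound degrades no worse than to $H_s$, that the per-step increments of $x_v$ sum to the adversary's final $x_v$ (steps with zero increment contribute nothing, consistent with the $\Exp(0) := \infty$ convention), and that the comparison is made against the \emph{final} $x$, which is precisely the cost measure in the definition of competitiveness under the element arrival model.
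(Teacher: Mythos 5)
Your proposal is correct and follows essentially the same route as the paper: feasibility via \Cref{cor:elt:feas} together with $N_{\Ghat}(\vup{u'}) \subseteq N_G(v)$, and the cost bound via a union bound over the copies of $v$, \Cref{cor:elt:prob}, and the telescoping identity $\sum_{u' \in U'} \xhat_{\vup{u'}} = x_v$. The only cosmetic differences are that the paper additionally notes the irrevocability of selections (since $\Chat$ only grows) and keeps the slightly tighter per-set bound $H_{|N_G(v)|} \, x_v$ before relaxing to $H_s$.
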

\begin{proof}
    Observe that, once a subset vertex $v \in V$ is inserted into $C$, $v$ is never removed from $C$ since $\Chat$ monotonically grows.
    Since we have $N_G(v) \supseteq N_{\Ghat}(\vup{u})$ for any $v \in V$ and $u \in U'$, we can see that $C$ is also a set cover of $U'$ due to \Cref{cor:elt:feas}.
    Lastly, a subset vertex $v \in V$ enters $C$ if there exists $u \in U'$ such that $\vup{u} \in \Chat$, we have
    \[
        \Pr [v \in C]
        \leq \sum_{u \in U'} \Pr[\vup{u} \in \Chat]
        \leq \sum_{u \in U'}  H_{|N_{\Ghat}(\vup{u})|} \cdot \xhat_{\vup{u}}
        \leq H_{|N_G(v)|} \cdot x_v,
    \]
    where we use \Cref{cor:elt:prob} in the second inequality and the fact that $\sum_{u \in U'} \xhat_{\vup{u}} = x_v$ in the last inequality.
    These together imply the theorem.
\end{proof}

\section{Online Rounding Scheme under Subset Arrivals} \label{sec:setarr}
This section is devoted to the major contribution of this paper, an $O(\log^2 s)$-competitive rounding scheme for online set cover under subset arrivals where $s := \max_{v \in V} |N(v)|$ is known to the rounding scheme upfront.
Recall that, under the subset arrival model, the rounding scheme is aware of only $U$ from the beginning, while each subset vertex $v \in V$ along with its neighbors $N(v) \subseteq U$ and its solution value $x_v \geq 0$ is adversarially revealed one at a time.

\paragraph{Pseudo-Instance}
Recall that, under the subset arrival model, the subset vertices $V$ and a feasible solution $x \in \R^V_{\geq 0}$ to the LP relaxation with respect to $G = (U \cup V, E)$ are fed one by one in an online fashion by an adversary.
However, for the sake of analysis, we will also consider the execution of our rounding scheme in case where $x \in \R^V_{\geq 0}$ is infeasible to the LP relaxation with respect to $G$.
We thus call the input a \emph{pseudo-instance} as $x$ can be infeasible, while we call the input an \emph{instance} when a feasible $x$ is indeed given.
Note that the arrival order of $V$ is also a component of a pseudo-instance; 
we denote by $\prec$ this arrival order, i.e., $v \prec v'$ for two subset vertices $v, v' \in V$ when $v$ precedes $v'$ in the arrival order.
Observe that a pseudo-instance to the online rounding scheme can be represented by $(G = (U \cup V, E), x, \prec)$.

\paragraph{Scheme Description}
Now we describe our rounding scheme.
Define $\alpha := \max\{2, \ln s\}$.
Upon the arrival of $v$ along with $N(v) \subseteq U$ and $x_v \geq 0$, the scheme independently samples $Z_v$ from $\Exp(x_v)$.
For each yet-uncovered $u \in N(v)$, let $\rup{v}_u := \max\{ 1 - \sum_{v' \in N(u) : v' \prec v} x_{v'}, 0 \}$ be the remaining amount for current $x$ to fully cover $u$ at this moment.
If $x_v \geq \frac{r_u}{\alpha}$, the scheme deterministically marks $v$; otherwise, $v$ is marked if $Z_v \leq \Tup{v}_u$, where $\Tup{v}_u$ is independently sampled from $\Exp(\frac{r_u}{\alpha} - x_v)$.
The scheme then selects $v$ into its output $C$ if $v$ is marked after iterating its uncovered neighbors.
See \Cref{alg:setarr} for a dedicated pseudocode of this scheme.

\begin{algorithm}
    \caption{Online $O(\log^2 s)$-competitive rounding scheme under subset arrivals} \label{alg:setarr}
    \KwIn{Element vertices $U$, $s := \max_{v \in V} |N(v)|$, and an adversary feeding $V$, $E$, and (possibly infeasible) $x \in \R^V_{\geq 0}$}
    \KwOut{A set cover $C \subseteq V$}
    $C \gets \emptyset$, $\alpha \gets \max\{2, \ln s\}$ \;
    \For{each subset vertex $v \in V$ fed by the adversary along with $N(v)$ and $x_v$}{
        Sample $Z_v$ from $\Exp(x_v)$ independently \;
        \For{each $u \in N(v)$ uncovered by the current $C$}{ \label{line:set:condcover}
            $\rup{v}_u \gets \max \{ 1 - \sum_{v' \in N(u) : v' \prec v} x_{v'}, 0 \}$ \;
            \eIf{$x_v \geq \frac{\rup{v}_u}{\alpha}$}{
                Mark $v$ \; \label{line:set:detmark}
            }{
                Sample $\Tup{v}_u$ from $\Exp(\frac{r_u}{\alpha} - x_v)$ independently \;
                Mark $v$ if $Z_v \leq \Tup{v}_u$ \; \label{line:set:randmark}
            }
        }
        \uIf{$v$ is marked} {
            $C \gets C \cup \{v\}$ \;
        }
    }
    \Return{$C$}\;
\end{algorithm}

\subsection{Overview of Analysis} \label{sec:set:overview}
We now prove the main theorem of this section:
\begin{theorem}[cf. \Cref{thm:intro:setarr}] \label{thm:set:main}
    \Cref{alg:setarr} is a randomized $O(\log^2 s)$-competitive rounding scheme under the subset arrival model.
\end{theorem}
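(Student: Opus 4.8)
The plan is to follow the same two-phase strategy as in the offline case: first show that \Cref{alg:setarr} always outputs a feasible set cover, and then bound, for each fixed subset vertex $v \in V$, the probability $\Pr[v \in C]$ by $O(\log^2 s)\,x_v$; summing $c(v)\cdot\Pr[v\in C]$ over $v$ then yields the competitive ratio. Feasibility should be the easy direction: when a subset vertex $v$ arrives and some $u \in N(v)$ is still uncovered with remaining amount $\rup{v}_u$, the marking test compares $Z_v \sim \Exp(x_v)$ against $\Tup{v}_u \sim \Exp(\frac{r_u}{\alpha}-x_v)$ (or marks deterministically if $x_v \ge \frac{r_u}{\alpha}$). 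I would argue that across all subset vertices containing a fixed still-uncovered $u$, the ``simulated'' comparison clocks are coupled to genuine $\Exp$ variables whose rates telescope, so with probability $1$ at least one of them wins before the simulated residual clock — using feasibility of the final $x$ (so $\sum_{v' \in N(u)} x_{v'} \ge 1$, i.e. the residual reaches $0$) together with $\Exp(0) = \infty$. This is the analogue of \Cref{lem:off:feas}, but now requires a short argument that the residual-clock simulation is consistent.

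For the probability bound I would fix $v$ and, exactly as in \Cref{sec:off}, reduce to a structured sub-instance. First, only subset vertices sharing a neighbor with $v$ and only the neighbors $N(v)$ matter, so I restrict to the $v$-complete sub-pseudo-instance $(\Gup{v}, \xup{v}, \prec)$; a coupling argument (the analogue of \Cref{obs:off:subinst}) shows $\Pr[v\in C]$ is unchanged. Next I invoke the promised adaptation \Cref{lem:set:irre} (stated to be proved in \Cref{sec:set:irre}) to reduce to an \emph{irreducible} $v$-complete instance where every subset vertex other than $v$ has exactly one neighbor, without decreasing $\Pr[v\in C]$ and keeping $\xhat_v = x_v$. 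In this reduced picture, for each $u \in N(v)$ the other subset vertices adjacent to $u$ partition into those arriving before $v$ and those after; the ``before'' ones determine $\rup{v}_u$, and $v$ is marked by $u$ iff $Z_v$ beats the simulated $\Exp(\frac{\rup{v}_u}{\alpha}-x_v)$ clock. So conditioning on $Z_v = z$ and on the prefix values $x_{v'}$ for $v' \prec v$, the event $v\in C$ becomes: there exists $u \in N(v)$ with $z \le \Tup{v}_u$, and the $\Tup{v}_u$ are independent given the residuals. This is where the factor-$\alpha$ penalty enters: $\Pr[\Tup{v}_u \ge z] = e^{-(\frac{r_u}{\alpha}-x_v)z}$, so I get an integral of the shape $x_v \int_0^\infty \bigl(1 - \prod_{u}(1 - e^{-(\frac{r_u}{\alpha}-x_v)z})\bigr) e^{-x_v z}\,dz$, structurally mirroring \Cref{lem:off:factor}.

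The main obstacle — and the genuinely new part relative to the warm-up — is bounding this integral by $O(\log^2 s)$ \emph{uniformly over the adversary's choice of residuals} $r_u$, which the adversary controls by choosing the arrival order and the prefix masses. Unlike the offline case where $\sum_{v'\in N'(u)} x_{v'} \ge 1-x_v$ pins down the rate, here the residual $\rup{v}_u$ can be as small as $x_v$ (so the rate $\frac{r_u}{\alpha}-x_v$ can be tiny or zero), which is exactly the $\Theta(s)$-blowup the technical overview warns about; the penalty factor $\alpha = \max\{2,\ln s\}$ is what tames it. I expect the argument to split into two regimes: subset vertices $u$ for which $\rup{v}_u$ is ``large'' (say $\ge$ some threshold times $x_v$), where the penalized clock still has rate $\Omega(1)$ and a harmonic-type bound gives $O(\log s)$; and those for which $\rup{v}_u$ is small, where I would instead charge against the fact that $u$ is \emph{already almost covered} — either $u$ is covered by the current $C$ before $v$ arrives (so $v$ is never tested on $u$ at all, by the \texttt{uncovered} guard in Line~\ref{line:set:condcover}), or the prefix mass $\sum_{v'\prec v,\,v'\in N(u)} x_{v'}$ is close to $1$, which limits how many ``small-residual'' coordinates can simultaneously be unsatisfied. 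Converting this into a clean bound will likely need a telescoping/level-set argument over the residual values $\rup{v}_u$ (grouping $u$'s by $\lfloor \log(1/\rup{v}_u)\rfloor$), an extra $\log s$ from the number of such levels, and the offline-style $H_{|N(v)|} = O(\log s)$ within a level — which is where the final $O(\log^2 s)$ comes from. I would also need the analogue of the monotonicity trick in \Cref{lem:off:upbnd} (the integrand decreasing in $x_v$, via the substitution $t = 1 - e^{-(1/\alpha - x_v/r_u)z}$ or similar) to reduce to the worst case $x_v \to 0$.
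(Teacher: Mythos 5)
Your skeleton --- feasibility, reduction to an irreducible $v$-complete pseudo-instance, then a per-vertex bound on $\Pr[v\in C]$ --- is exactly the paper's route (\Cref{lem:set:feas,lem:set:vcomp,lem:set:irre,lem:set:tech}), but the heart of the argument is missing and the substitute you sketch for it would fail. The conditional integral you write, $x_v\int_0^\infty\bigl(1-\prod_{u}(1-e^{-(\rup{v}_u/\alpha-x_v)z})\bigr)e^{-x_v z}\,dz$, accounts only for the event $Z_v\le \Tup{v}_u$ and ignores the probability that $u$ is still \emph{uncovered} when $v$ arrives; against the adversary who, for each $u\in N(v)$, sends singletons driving $\rup{v}_u$ down to $\alpha x_v$ (so every simulated rate is $0$), that integral equals $1$ and certifies nothing beyond the trivial bound $\Pr[v\in C]\le 1$, i.e.\ exactly the $\Theta(s)$ blow-up you are trying to beat. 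The paper's key step is \Cref{lem:setarr:exp}: by induction over the arrivals at $u$, the probability that $u$ is uncovered at $v$'s arrival is at most $(\rup{v}_u)^\alpha$ (using $\max\{1-\alpha t,0\}\le(1-t)^\alpha$), and it is this factor --- exponentially small in $\alpha=\Theta(\ln s)$ precisely when the residual is small --- that cancels the inflated marking probability. Your replacement for the small-residual regime, namely that a prefix mass close to $1$ ``limits how many small-residual coordinates can simultaneously be unsatisfied,'' is not true: in an irreducible $v$-complete pseudo-instance every one of the $k=|N(v)|$ elements can simultaneously have arbitrarily small residual (each nearly covered by its own singletons), so no counting argument helps; only the per-element probability of remaining uncovered does. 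With the $(\rup{v}_u)^\alpha$ factor in place, the paper bounds $\Pr[v\in C\mid Z_v=z]$ by $1-\prod_u f(z,x_v,\rup{v}_u)$ with $f(z,y,r)=1-r^\alpha e^{-z\max\{r/\alpha-y,0\}}$, minimizes over the adversary's choice of $r$ (\Cref{lem:setarr:minr}), shows $y=0$ is the worst case (\Cref{lem:setarr:yzero}), splits the integral at $z=\alpha^2$, and bounds the tail by a constant times a harmonic-type integral (\Cref{lem:setarr:secbnd}); the final factor is $\frac{7\alpha}{3}H_k$, so the two logarithms come from $\alpha$ and $H_k$, not from a level-set decomposition of the residuals.

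On feasibility, your coupling/telescoping plan is both unnecessary and unsound as a way to show the scheme \emph{always} outputs a cover: the clocks $\Tup{v'}_u$ are resampled independently at every arrival, so there is no single residual clock to couple against, and at each arrival with $x_{v'}<\rup{v'}_u/\alpha$ the scheme declines with positive probability. What saves you is deterministic and one line (\Cref{lem:set:feas}): at the first arrival $v'$ with $\sum_{v''\in N(u):\,v''\preceq v'}x_{v''}\ge 1$ one has $x_{v'}\ge\rup{v'}_u\ge\rup{v'}_u/\alpha$, so Line~\ref{line:set:detmark} marks $v'$ unconditionally whenever $u$ is still uncovered.
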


We begin with a brief overview of the analysis.
Basically, the overall structure of the analysis is analogous to the analysis of the previous offline rounding scheme presented in \Cref{sec:off}.
We first argue the correctness of our rounding scheme.

\begin{lemma} \label{lem:set:feas}
    If an instance is given (i.e., $x$ is feasible to the LP relaxation with respect to $G$), \Cref{alg:setarr} returns a feasible set cover.
\end{lemma}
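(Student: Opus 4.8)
The plan is to show that every element vertex $u \in U$ is covered by the final output $C$, by arguing that the last subset vertex incident to $u$ in the arrival order is always marked (and hence selected) when it arrives, unless $u$ was already covered by then. Fix $u \in U$ and let $v^\star$ be the last subset vertex in $N(u)$ according to $\prec$. If $u$ is already covered by the current $C$ when $v^\star$ arrives, we are done, so assume $u$ is still uncovered; then the scheme executes the inner loop of Line~\ref{line:set:condcover} for the pair $(v^\star, u)$. First I would compute $r^{(v^\star)}_u = \max\{1 - \sum_{v' \in N(u): v' \prec v^\star} x_{v'},\, 0\}$. Since $x$ is feasible, $\sum_{v' \in N(u)} x_{v'} \geq 1$, and because $v^\star$ is the last element of $N(u)$ in $\prec$, the subsets $v'$ with $v' \prec v^\star$ together with $v^\star$ itself exhaust $N(u)$, so $\sum_{v' \in N(u): v' \prec v^\star} x_{v'} = \sum_{v' \in N(u)} x_{v'} - x_{v^\star} \geq 1 - x_{v^\star}$. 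Hence $r^{(v^\star)}_u \leq x_{v^\star}$.

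The next step is to observe that $r^{(v^\star)}_u \leq x_{v^\star}$ implies $x_{v^\star} \geq r^{(v^\star)}_u \geq r^{(v^\star)}_u / \alpha$ (using $\alpha \geq 1$), so the scheme enters the ``then'' branch at Line~\ref{line:set:detmark} and deterministically marks $v^\star$. Therefore $v^\star$ is marked during its arrival iteration, so it is added to $C$, and from that point on $u$ is covered by $N(v^\star)$. Since $C$ only grows over the execution, $u$ remains covered at termination. As $u$ was arbitrary, $\bigcup_{v \in C} N(v) = U$, so $C$ is a feasible set cover.

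I do not anticipate a real obstacle here; the only subtlety is the bookkeeping for ``$v^\star$ is the last subset vertex incident to $u$,'' which guarantees that the prefix sum $\sum_{v' \in N(u): v' \prec v^\star} x_{v'}$ plus $x_{v^\star}$ equals the full LP constraint sum for $u$, and one must also handle the edge case $N(u) = \{v^\star\}$ (then the prefix sum is $0$, $r^{(v^\star)}_u = 1 \leq x_{v^\star}$ by feasibility, and the deterministic branch again fires). The argument uses nothing about the random variables $Z_v$ or $T^{(v)}_u$ — feasibility alone forces the deterministic mark on the last relevant arrival — which is exactly why the scheme always returns a feasible cover, in contrast to the scheme of Byrka and Srinivasan.
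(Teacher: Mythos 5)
Your proof is correct and rests on the same mechanism as the paper's: feasibility of $x$ forces the deterministic marking branch (Line~\ref{line:set:detmark}, using only $\alpha \geq 1$) to fire for some neighbor of $u$ that arrives while $u$ is still uncovered, with no reference to the random variables $Z_v$ or $T^{(v)}_u$. The only (immaterial) difference is the choice of witness: the paper takes the \emph{first} $v \in N(u)$ at which the prefix sum $\sum_{v' \preceq v} x_{v'}$ reaches $1$ (so that $x_v \geq r^{(v)}_u > 0$), whereas you take the \emph{last} vertex of $N(u)$ in the arrival order and deduce $r^{(v^\star)}_u \leq x_{v^\star}$; both choices yield the same conclusion.
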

\begin{proof}
For each element vertex $u \in U$, let $v \in N(u)$ be the first subset vertex that makes $u$ covered by $x$, i.e.,
\[
    \sum_{v' \in N(u) : v' \prec v} x_{v'} < 1 \;\;\text{and}\;\; \sum_{v' \in N(u) : v' \preceq v} x_{v'} \geq 1.
\]
Since $x$ is feasible, $v$ always exists.
Suppose now that $u$ is still uncovered at the moment when $v$ is fed by the adversary.
Recall that $\alpha = \max\{2, \ln s\} \geq 1$.
Observe also that $\rup{v}_u = 1 - \sum_{v' \in N(u) : v' \prec v} x_{v'} > 0$, and hence,
\[
    x_v \geq 1 - \sum_{v' \in N(u) : v' \prec v} x_{v'} = \rup{v}_u \geq \frac{\rup{v}_u}{\alpha},
\]
implying that the rounding scheme deterministically marks $v$ at this moment (see Line~\ref{line:set:detmark} in \Cref{alg:setarr}).
We can therefore conclude that $u$ is always covered by the rounding scheme when $x$ is feasible.
\end{proof}

It thus remains to show the competitive factor of the rounding scheme.
To this end, as was the case in the analysis for the offline rounding scheme, we fix a subset vertex $v \in V$ and show that the probability of $v$ being selected is bounded by $O(\log^2 s) \, x_v$ from above.

Given a pseudo-instance $(G, x, \prec)$, let us denote by $C(G, x, \prec)$ the output of our online rounding scheme.
We say $(G = (U \cup V, E), x, \prec)$ is \emph{$v$-complete} if
\begin{itemize}
    \item[-] $G$ is $v$-complete, i.e., $v \in V$ and $U = N_G(v)$ and
    \item[-] $v$ arrives the last in $\prec$, i.e., for any $v' \in V \setminus \{v\}$, $v' \prec v$.
\end{itemize}
The following lemma implies that we can assume the given input is a $v$-complete pseudo-instance when we want to bound the probability of $v$ being selected (see \Cref{fig:setarr:vcomp}).
The proof of this lemma can be found in \Cref{sec:set:vcomp}.

\begin{restatable}{lemma}{lemsetvcomp} \label{lem:set:vcomp}
    For any pseudo-instance $(G = (U \cup V, E), x, \prec)$ and $v \in V$, there exists a $v$-complete pseudo-instance $(\Ghat, \xhat, \prechat)$ such that $\xhat_v = x_v$ and
    \[
        \Pr[v \in C(G, x, \prec)] \leq \Pr[v \in C(\Ghat, \xhat, \prechat)].
    \]
\end{restatable}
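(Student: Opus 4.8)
The plan is to transform an arbitrary pseudo-instance $(G, x, \prec)$ into a $v$-complete one in two reduction steps, mirroring the offline argument behind \Cref{obs:off:subinst} and \Cref{lem:off:reduce} but being careful that the online rounding scheme only examines a subset vertex's \emph{yet-uncovered} neighbors, and that the arrival order matters. First I would restrict attention to the subset vertices that can possibly affect whether $v$ is marked. Concretely, let $\Vup{v} := \{ v' \in V \mid N(v) \cap N(v') \neq \emptyset \}$, let $\Gup{v}$ be the subgraph of $G$ induced by $N(v) \cup \Vup{v}$, let $\xup{v} := x|_{\Vup{v}}$, and let $\precup{v}$ be $\prec$ restricted to $\Vup{v}$. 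The claim is that $\Pr[v \in C(G, x, \prec)] = \Pr[v \in C(\Gup{v}, \xup{v}, \precup{v})]$: a subset vertex $v'$ with $N(v') \cap N(v) = \emptyset$ never touches any element of $N(v)$, so removing it changes neither the covering status of the elements of $N(v)$ at the time $v$ arrives, nor the $\rup{v}_u$ values (which only sum $x_{v'}$ over $v' \in N(u) \subseteq N(v)$-neighbors), nor $Z_v$; one can make this precise with the standard coupling that reuses the same $Z$'s and $T$'s. The residual pseudo-instance has $N_{\Gup{v}}(v) = U$ (so $G$ is $v$-complete) but $v$ need not arrive last.

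The second step moves $v$ to the end of the arrival order. Here I would argue monotonicity: delaying $v$ can only (weakly) increase $\Pr[v\in C]$. Intuitively, pushing $v$ later can only cause more of its neighbors to already be covered by the time $v$ arrives, and the scheme iterates over \emph{fewer} uncovered neighbors — but wait, that would seem to \emph{decrease} the chance of being marked, since being marked requires passing the condition for some uncovered $u$. So the monotonicity must be argued differently, and this is the step I expect to be the main obstacle. The right way to see it: when we move $v$ from its current position to the end, the subset vertices that were after $v$ and now come before it are exactly those whose $x$-mass gets \emph{added} to the running sums $\sum_{v' \in N(u): v' \prec v} x_{v'}$ for $u \in N(v)$; this \emph{decreases} each $\rup{v}_u$, hence decreases each rate $\frac{\rup{v}_u}{\alpha} - x_v$, hence (by \Cref{fact:pre:exp1}) makes $Z_v \leq \Tup{v}_u$ \emph{less} likely for each still-uncovered $u$. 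So naive delay hurts us — meaning the honest statement is that we should instead choose the ordering $\prechat$ that is \emph{most favorable}, and the lemma only asserts existence of \emph{some} $v$-complete pseudo-instance dominating the original. I would therefore handle the ordering by a hybrid/exchange argument restricted to the already-established-equivalent instance $(\Gup{v}, \xup{v}, \precup{v})$: swap adjacent pairs $(v', v)$ in $\prec$ when $v'$ immediately follows $v$, showing each such swap weakly increases $\Pr[v \in C]$ by a coupling in which all clocks except those directly involved are reused; iterate until $v$ is last. The delicate point in each swap is that moving $v'$ before $v$ also changes which elements are covered when $v$ arrives (if $v'$ gets marked it may cover some $u \in N(v) \cap N(v')$), and it changes the $\rup{v}_u$ values — so I must verify the swap is beneficial by comparing, term by term over $u \in N(v)$, the marking probability of $v$ before and after, using that $v'$ being selected only \emph{removes} an uncovered neighbor from $v$'s iteration (which, combined with how the decision is an "OR over uncovered neighbors", requires a careful case analysis on whether $v$ would have been marked via that particular $u$ anyway).

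Finally I would assemble: by the restriction step $\Pr[v \in C(G,x,\prec)] = \Pr[v \in C(\Gup{v}, \xup{v}, \precup{v})]$, and by repeated beneficial adjacent swaps this is at most $\Pr[v \in C(\Ghat, \xhat, \prechat)]$ where $(\Ghat, \xhat, \prechat)$ has $\Ghat$ $v$-complete, $v$ last in $\prechat$, and $\xhat_v = \xup{v}_v = x_v$ (no step ever alters $x_v$ or $Z_v$). That is exactly the asserted $v$-complete pseudo-instance. I would relegate the routine coupling verifications to the appendix (as the paper does for \Cref{lem:off:reduce}) and only spell out in the main text the swap inequality, since that is where the online-specific subtlety — the interaction between the covering-status of $v$'s neighbors and the $\rup{v}_u$ residuals — actually bites.
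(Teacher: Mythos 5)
There are two genuine gaps. First, your opening claim that $\Pr[v \in C(G, x, \prec)] = \Pr[v \in C(\Gup{v}, \xup{v}, \precup{v})]$ is false as an equality, and your justification only covers the harmless part of the restriction. Deleting subset vertices disjoint from $N(v)$ is indeed innocuous, but passing to the subgraph induced by $N(v) \cup \Vup{v}$ \emph{also} severs the edges from retained subsets $v' \in \Vup{v}$ to elements outside $N(v)$. Under subset arrivals this changes the dynamics: in $G$, such a $v'$ can be marked because of an uncovered element $w \notin N(v)$ (even deterministically, via Line~\ref{line:set:detmark}), and once selected it covers its neighbors inside $N(v)$, which then cannot trigger $v$'s marking; in the restricted instance $v'$ may stay unmarked and those elements stay uncovered. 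So the two probabilities genuinely differ, and the correct statement is only the inequality $\leq$. Proving that inequality is the actual content of the paper's argument: one couples the clocks (noting $\rup{v'}_u$ is unchanged for $u \in N(v)$ because \emph{all} subsets containing such $u$ are retained with the same values and relative order) and shows by induction over arrivals that any $u \in N(v)$ uncovered in the original execution is uncovered in the restricted one, whence $v$ marked in the original implies $v$ marked in the restricted instance. Waving at "the standard coupling that reuses the same $Z$'s and $T$'s" gives equality only in the offline model (\Cref{obs:off:subinst}), where covering status is irrelevant; here the monotonicity argument must be made explicitly.

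Second, your treatment of the arrival order misses the key model fact and replaces a one-line observation with an exchange argument that would not go through. The scheme decides irrevocably whether to select $v$ at the moment $v$ arrives, so subset vertices arriving after $v$ have no influence whatsoever on the event $v \in C$; one simply \emph{deletes} them (this is exactly why the lemma is stated for pseudo-instances: the truncated $x$ need not be feasible), and the probability is unchanged. Your adjacent-swap plan instead moves later subsets \emph{before} $v$, which, as you yourself observe, decreases the residuals $\rup{v}_u$ and can cover neighbors of $v$ early; such a swap weakly \emph{decreases} $\Pr[v \in C]$ in general, so no case analysis will show each swap is beneficial, and the hybrid argument collapses. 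With the truncation observation in hand there is nothing to reorder: drop the suffix after $v$, then perform the restriction to $N(v) \cup \Vup{v}$ with the coupling/monotonicity argument above, and both requirements of $v$-completeness hold while $\xhat_v = x_v$ is preserved.
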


Let us now assume that the given input is a $v$-complete pseudo-instance.
We further define an \emph{irreducible} $v$-complete pseudo-instance as a $v$-complete pseudo-instance $(G = (U \cup V, E), x, \prec)$ satisfying that every subset vertex other than $v$ is adjacent with exactly one element vertex.
We show in \Cref{sec:set:irre} that it is safe to assume an irreducible pseudo-instance for bounding the probability of $v$ being selected (see \Cref{fig:setarr:irre}).

\begin{restatable}{lemma}{lemsetirre} \label{lem:set:irre}
    For any $v$-complete pseudo-instance $(G, x, \prec)$, there exists an irreducible $v$-complete pseudo-instance $(\Ghat, \xhat, \prechat)$ such that $\xhat_v = x_v$ and
    \[
        \Pr[v \in C(G, x, \prec)] \leq \Pr[v \in C(\Ghat, \xhat, \prechat)].
    \]
\end{restatable}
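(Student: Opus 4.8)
\textbf{Proof proposal for \Cref{lem:set:irre}.}

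The plan is to convert any $v$-complete pseudo-instance into an irreducible one by repeatedly \emph{splitting} a subset vertex $v'\neq v$ with $|N(v')|\geq2$ into one singleton copy per neighbour, all inserted at $v'$'s position in the arrival order, and to perform these splits in \emph{reverse} arrival order (always split the latest-arriving multi-neighbour vertex first). Each split preserves the total solution-value mass on every element (each copy touches exactly one element), hence all quantities $\rup{w}_u$, the value $x_v$, and $v$-completeness; and $\sum_{v'\neq v}(|N(v')|-1)$ strictly drops, so finitely many splits produce an irreducible pseudo-instance. It therefore suffices to prove the single-split step: if $(G,x,\prec)$ is $v$-complete, $v'\neq v$ has neighbours $u_1,\dots,u_t$ ($t\geq2$) with $x_{v'}=\xi$, and every non-$v$ subset vertex after $v'$ is already a singleton, then replacing $v'$ by singleton copies $v'_1,\dots,v'_t$ (with $N(v'_i)=\{u_i\}$, $\xhat_{v'_i}=\xi$) at $v'$'s slot does not decrease $\Pr[v\in C]$.

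For the single-split step I would condition on all of the randomness except the clock $Z_{v'}$ (in the original run) and the clocks $Z_{v'_1},\dots,Z_{v'_t}$ (in the split run): on $Z_v$ and every $\Tup{v}_u$; on $Z_w$ and every $\Tup{w}_u$ for each common subset vertex $w$; and on the clocks $\Tup{v'}_{u_i}$, which I couple to $\Tup{v'_i}_{u_i}$ (legitimate, since $\rup{v'_i}_{u_i}=\rup{v'}_{u_i}$ and $x_{v'_i}=\xi$, so they have the same distribution). Under this conditioning, the execution up to $v'$'s arrival is identical in both runs and yields a fixed covered set $A$; put $W:=N(v')\setminus A$ (we may assume $W\neq\emptyset$, else nothing is processed at $v'$ or its copies and the two runs coincide) and $t_i:=\Tup{v'}_{u_i}$. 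Now two facts hold. First, because every non-$v$ subset vertex after $v'$ is a singleton whose trigger event is fixed by the conditioning, the remainder of the process before $v$ maps any covered set $S$ to $S\cup P$ for a fixed set $P$; hence $\{v\in C\}$ equals $\{T\not\subseteq S'\cup P\}$, where $S'$ is the covered set right after $v'$ (resp.\ its copies) and $T:=\{u\in N(v):Z_v\leq\Tup{v}_u\text{ or }x_v\geq\rup{v}_u/\alpha\}$ is fixed; write $g(S):=\mathbf{1}[T\not\subseteq S\cup P]$, which is non-increasing in $S$. Second, in the original run $S'=S_\mathrm{orig}\in\{A,\,A\cup N(v')\}$: if some $u_i\in W$ triggers $v'$ deterministically then $v'$ is always marked, $\E[g(S_\mathrm{orig})]=g(A\cup N(v'))$, and this is $\leq\E[g(S_\mathrm{split})]$ because $A\cup B\subseteq A\cup N(v')$ and $g$ is non-increasing, so that subcase is immediate; otherwise $\Pr[S_\mathrm{orig}=A\cup N(v')]=1-e^{-\xi\tau}$ with $\tau:=\max_{u_i\in W}t_i$. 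In the split run $S'=S_\mathrm{split}=A\cup B$, where $B:=\{u_i\in W:v'_i\text{ marked}\}$ and the events $\{v'_i\text{ marked}\}$, $u_i\in W$, are independent of probability $1-e^{-\xi t_i}$ each. It remains to show $\E[g(S_\mathrm{orig})]\leq\E[g(S_\mathrm{split})]$.

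This is a short case analysis. If $g(A)=0$, then $g$ vanishes on every set containing $A$, so both sides are $0$. If $g(A\cup N(v'))=1$, then $g\equiv1$ on every subset of $A\cup N(v')$, so both sides are $1$. Otherwise $g(A)=1$ and $g(A\cup N(v'))=0$, so $\E[g(S_\mathrm{orig})]=e^{-\xi\tau}$; pick a witness $u^\ast=u_{i^\ast}\in T\setminus(A\cup P)$, which must lie in $N(v')$ (otherwise $u^\ast\notin A\cup N(v')\cup P$ and $g(A\cup N(v'))=1$), hence $u^\ast\in W$. Since $u^\ast\notin B$ forces $u^\ast\in T\setminus(A\cup B\cup P)$, i.e.\ $g(A\cup B)=1$, we get $\E[g(S_\mathrm{split})]\geq\Pr[u^\ast\notin B]=e^{-\xi t_{i^\ast}}\geq e^{-\xi\tau}$. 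In every case $\E[g(S_\mathrm{split})]\geq\E[g(S_\mathrm{orig})]$; un-conditioning completes the single-split step, hence the lemma.

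I expect the main obstacle to be exactly what forces the reverse-order scheduling: the subset-arrival process is not monotone — covering an element early can prevent a later \emph{multi}-neighbour subset from being selected and thus from covering its other neighbours, so one cannot couple the two runs so that the split run's covered set stays inside the original's, nor compare uncovered-neighbour sets pointwise, and the straightforward local coupling fails. Reducing the later subsets to singletons first is precisely what collapses the tail of the process to the monotone map $S\mapsto S\cup P$ used above; establishing this collapse, and organising the induction so that it always applies, is the delicate part of the argument (the offline \Cref{lem:off:reduce} being the degenerate case in which the tail of the process is empty).
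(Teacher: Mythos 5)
Your proposal is correct, and its global skeleton coincides with the paper's: induct on the potential $\sum_{v'\neq v}(|N(v')|-1)$, always operate at the \emph{last}-arriving multi-neighbour vertex so that every non-$v$ subset vertex between it and $v$ is a singleton, keep the fractional mass seen by each element unchanged (so all $\rup{w}_u$ and all clock distributions of common vertices are preserved), and condition on all randomness except the clock(s) at the modified slot. Where you genuinely diverge is in the elementary step and its analysis. The paper detaches a \emph{single} edge $(\ucirc,\vcirc)$ per step, adding one singleton $\vhatcirc$ right after $\vcirc$ (the potential drops by $1$), and proves $\Pr[v\in C]\leq\Pr[v\in\Chat]$ by coupling $Z_{\vcirc}=\Zhat_{\vcirc}$ and an exchange argument: every ``bad'' realization $(z,\zhat)$ with $v\in C$, $v\notin\Chat$ is shown to become ``good'' after swapping the two i.i.d.\ clock values, so the bad event has no larger probability than the good one, with no probabilities ever computed. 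You instead split the whole vertex into all of its singleton copies at once (the potential drops by $|N(v')|-1$) and prove the step directly: you collapse the singleton tail into a fixed monotone map $S\mapsto S\cup P$, rewrite $\{v\in C\}$ as a monotone indicator $g$ of the covered set just after the split slot, and compare $\E[g(S_{\mathrm{orig}})]=e^{-\xi\tau}$ (with $\tau=\max_{u_i\in W}t_i$) against $\E[g(S_{\mathrm{split}})]\geq e^{-\xi t_{i^*}}$ via a witness element, handling the deterministic-marking subcase by monotonicity alone. Both arguments hinge on the same scheduling insight you single out at the end (and which the paper realizes by choosing $\vcirc$ as the last reducible vertex); the paper's swap argument buys a very local, computation-free comparison that also reuses verbatim the offline proof of \Cref{lem:off:reduce}, while your version makes the monotone structure of the singleton tail explicit, needs only one split per multi-neighbour vertex, and yields a short quantitative inequality in place of the bijection between bad and good outcomes. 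I see no gap: the coupling of the thresholds $\Tup{v'}_{u_i}$ with $\Tup{v'_i}_{u_i}$ is legitimate exactly because the split preserves $\rup{v'}_{u_i}$ and the value $\xi$, and the sets $T$ and $P$ are indeed identical across the two coupled runs for the same reason.
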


\begin{figure}
    \centering
    \begin{subfigure}[t]{0.32\textwidth}
        \centering
        \begin{tikzpicture}
            % elements
            \draw (0, -0*\figuugap) node[element] (u0) {};
            \draw (0, -1*\figuugap) node[element] (u1) {};
            \draw (0, -2*\figuugap) node[element] (u2) {};
            \draw (-0.2*\figuvgap, -2*\figuugap) node (dummybalance) {$\phantom{v}$};
            \draw (0, -3*\figuugap) node[element] (u3) {};
            \draw (0, -4*\figuugap) node[element] (u4) {};
        
            % subsets
            \draw (\figuvgap, -0*\figuugap) node[subset] (v0) {};
            \draw (\figuvgap, -1*\figuugap) node[subset] (v1) {};
            \draw (\figuvgap, -2*\figuugap) node[subset] (v2) {};
            \draw (\figuvgap, -2*\figuugap) node (v2label) {$v$};
            \draw (\figuvgap, -3*\figuugap) node[subset] (v3) {};
            \draw (\figuvgap, -4*\figuugap) node[subset] (v4) {};

            % x-values
            \draw (-0.25*\figuvgap, -2*\figuugap) node (dummyx) {$\phantom{0.5}$};
            \draw (1.25*\figuvgap, -0*\figuugap) node (v0x) {$0.2$};
            \draw (1.25*\figuvgap, -1*\figuugap) node (v1x) {$0.3$};
            \draw (1.25*\figuvgap, -2*\figuugap) node (v2x) {$0.5$};
            \draw (1.25*\figuvgap, -3*\figuugap) node (v3x) {$0.5$};
            \draw (1.25*\figuvgap, -4*\figuugap) node (v4x) {$0.8$};
            
            % edges
            \draw (v0) -- (u0);
            \draw (v0) -- (u1);

            \draw (v1) -- (u1);
            \draw (v1) -- (u2);

            \draw (v2) -- (u1);
            \draw (v2) -- (u2);
            \draw (v2) -- (u3);

            \draw (v3) -- (u2);
            \draw (v3) -- (u3);
            \draw (v3) -- (u4);

            \draw (v4) -- (u0);
            \draw (v4) -- (u4);
        \end{tikzpicture}
        \caption{Initial instance} \label{fig:setarr:init}
    \end{subfigure}
    \begin{subfigure}[t]{0.32\textwidth}
        \centering
        \begin{tikzpicture}
            % elements
            \draw (0, -0*\figuugap) node (u0) {};
            \draw (0, -1*\figuugap) node[element] (u1) {};
            \draw (0, -2*\figuugap) node[element] (u2) {};
            \draw (0, -3*\figuugap) node[element] (u3) {};
            \draw (0, -4*\figuugap) node (u4) {};
        
            % subsets
            \draw (\figuvgap, -0*\figuugap) node[subset] (v0) {};
            \draw (\figuvgap, -1*\figuugap) node[subset] (v1) {};
            \draw (\figuvgap, -2*\figuugap) node[subset] (v2) {};
            \draw (\figuvgap, -2*\figuugap) node (v2label) {$v$};
            \draw (\figuvgap, -3*\figuugap) node (v3) {};
            \draw (\figuvgap, -4*\figuugap) node (v4) {};

            % x-values
            \draw (-0.25*\figuvgap, -2*\figuugap) node (dummyx) {$\phantom{0.5}$};
            \draw (1.25*\figuvgap, -0*\figuugap) node (v0x) {$0.2$};
            \draw (1.25*\figuvgap, -1*\figuugap) node (v1x) {$0.3$};
            \draw (1.25*\figuvgap, -2*\figuugap) node (v2x) {$0.5$};
            % \draw (1.25*\figuvgap, -3*\figuugap) node (v3x) {$0.5$};
            % \draw (1.2*\figuvgap, -4*\figuugap) node (v4x) {$0.8$};
            
            % edges
            \draw (v0) -- (u1);

            \draw (v1) -- (u1);
            \draw (v1) -- (u2);

            \draw (v2) -- (u1);
            \draw (v2) -- (u2);
            \draw (v2) -- (u3);

            % \draw (v3) -- (u2);
            % \draw (v3) -- (u3);
        \end{tikzpicture}
        \caption{$v$-complete pseudo-instance due to \Cref{lem:set:vcomp}} \label{fig:setarr:vcomp}
    \end{subfigure}
    \begin{subfigure}[t]{0.32\textwidth}
        \centering
        \begin{tikzpicture}
            % elements
            \draw (0, -0*\figuugap) node (u0) {};
            \draw (0, -1*\figuugap) node[element] (u1) {};
            \draw (0, -2*\figuugap) node[element] (u2) {};
            \draw (-0.2*\figuvgap, -2*\figuugap) node (dummybalance) {$\phantom{v}$};
            \draw (0, -3*\figuugap) node[element] (u3) {};
            \draw (0, -4*\figuugap) node (u4) {};
        
            % subsets
            \draw (\figuvgap, -0*\figuugap) node[subset] (v0) {};
            \draw (\figuvgap, -1*\figuugap+\figvoffset) node[subset] (v101) {};
            \draw (\figuvgap, -1*\figuugap-\figvoffset) node[subset] (v102) {};
            \draw (\figuvgap, -2*\figuugap) node[subset] (v2) {};
            \draw (\figuvgap, -2*\figuugap) node (v2label) {$v$};
            \draw (\figuvgap, -3*\figuugap) node (v3) {};
            \draw (\figuvgap, -4*\figuugap) node (v4) {};

            % x-values
            \draw (-0.25*\figuvgap, -2*\figuugap) node (dummyx) {$\phantom{0.5}$};
            \draw (1.25*\figuvgap, -0*\figuugap) node (v0x) {$0.2$};
            \draw (1.25*\figuvgap, -1*\figuugap+\figvoffset) node (v101x) {$0.3$};
            \draw (1.25*\figuvgap, -1*\figuugap-\figvoffset) node (v102x) {$0.3$};
            \draw (1.25*\figuvgap, -2*\figuugap) node (v2x) {$0.5$};
            % \draw (1.25*\figuvgap, -3*\figuugap+\figvoffset) node (v301x) {$0.5$};
            % \draw (1.25*\figuvgap, -3*\figuugap-\figvoffset) node (v302x) {$0.5$};
            % \draw (1.2*\figuvgap, -4*\figuugap) node (v4x) {$0.8$};
            
            % edges
            \draw (v0) -- (u1);

            \draw (v101) -- (u1);
            \draw (v102) -- (u2);

            \draw (v2) -- (u1);
            \draw (v2) -- (u2);
            \draw (v2) -- (u3);

            % \draw (v301) -- (u2);
            % \draw (v302) -- (u3);
        \end{tikzpicture}
        \caption{Irreducible $v$-complete pseudo-instance due to \Cref{lem:set:irre}} \label{fig:setarr:irre}
    \end{subfigure}
    \caption{An illustration of the reduction to an irreducible $v$-complete pseudo-instance under the subset arrival model. The element and subset vertices are represented by squares and circles, respectively. The subset vertices are revealed one at a time in the order from top to bottom. The solution value of each subset vertex is on the right side. Note that the solution may not be feasible in a pseudo-instance.}
    \label{fig:setarr}
\end{figure}
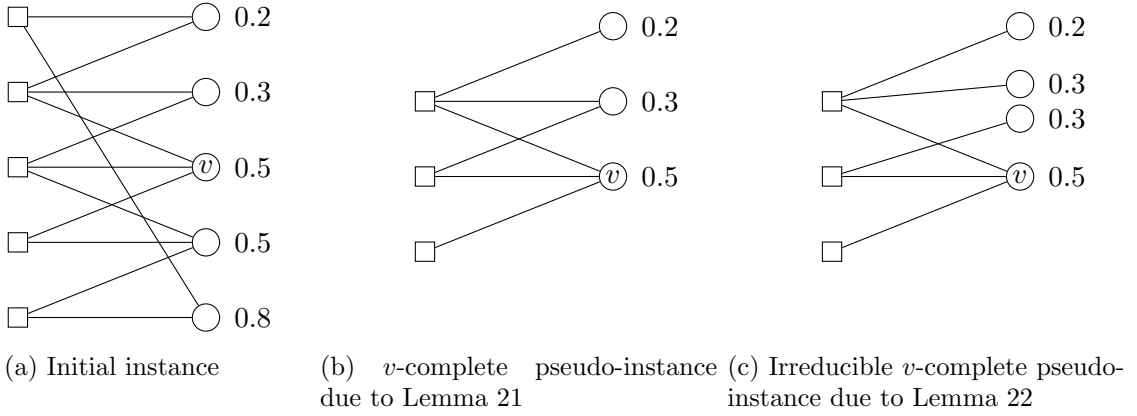

Equipped with an irreducible $v$-complete pseudo-instance as input, following is our main technical lemma bounding the probability of $v$ being selected.
\Cref{sec:set:tech} presents the proof of this lemma.

\begin{restatable}{lemma}{lemsettech} \label{lem:set:tech}
    Given an irreducible $v$-complete pseudo-instance $(G = (U \cup V, E), x, \prec)$, if $\alpha \geq \max \{2, \ln |N(v)|\}$, we have \[ \Pr[v \in C(G, x, \prec)] \leq \frac{7\alpha}{3} \, H_{|N(v)|} \, x_v. \]
\end{restatable}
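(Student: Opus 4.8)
The plan is to mimic the offline analysis (Lemma~\ref{lem:off:factor} and Lemma~\ref{lem:off:upbnd}) but with two essential differences introduced by the online setting: the clocks of the not-yet-arrived subset vertices are \emph{simulated} via $\Exp(\frac{r_u}{\alpha}-x_v)$ rather than compared against the true future clocks, and the penalty factor $\alpha$ is applied to those simulated rates. Let $k := |N(v)|$, and enumerate $N(v) = \{u_1,\dots,u_k\}$ in the order in which each $u_i$ first becomes covered by $x$ (or arbitrarily among those never covered). Since the instance is irreducible and $v$-complete, every subset vertex $v'\neq v$ is a singleton attached to exactly one $u_i$, and $v$ arrives last; thus all the randomness affecting ``$v\in C$'' other than $Z_v$ consists of, for each $i$, the clocks of the singletons attached to $u_i$ that arrive before $v$ — but crucially these only matter through whether $u_i$ has been \emph{selected} (covered by $C$) by the time $v$ arrives. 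First I would set up: conditioning on $Z_v = z$, the vertex $v$ is selected iff there is some $u_i\in N(v)$ that is still uncovered by $C$ when $v$ arrives \emph{and} $v$ gets marked by $u_i$, i.e.\ either $x_v \ge \rup{v}_{u_i}/\alpha$ or $z \le \Tup{v}_{u_i} \sim \Exp(\frac{r_{u_i}}{\alpha}-x_v)$.

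The key step is to bound, for each $u_i$, the probability that $v$ is marked via $u_i$, and — more delicately — to control the correlation across the $u_i$'s. I expect to handle the ``uncovered by $C$'' event by simply dropping it (it only helps us) or, where a union bound over $i$ would be too lossy, by observing that the marking events via distinct $u_i$ are driven by independent fresh randomness $\Tup{v}_{u_i}$ conditioned on $Z_v$. So conditioning on $Z_v = z$, and writing $\rho_i := \rup{v}_{u_i}$ (a deterministic quantity once the arrival prefix up to $v$ is fixed — but the prefix is \emph{not} random in a pseudo-instance since $\prec$ and $x$ are fixed; the only randomness is in which $u_i$ were already covered by $C$), the probability that $v$ is NOT marked by a particular $u_i$ with $x_v < \rho_i/\alpha$ is $\Pr[\Exp(\frac{\rho_i}{\alpha}-x_v) < z] = 1 - e^{-(\frac{\rho_i}{\alpha}-x_v)z}$, and if $x_v\ge\rho_i/\alpha$ it is marked with certainty. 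Then
\[
\Pr[v\in C \mid Z_v=z] \le 1 - \prod_{i : x_v < \rho_i/\alpha}\bigl(1 - e^{-(\frac{\rho_i}{\alpha}-x_v)z}\bigr),
\]
and since the $\rho_i$ decrease toward the point where $u_i$ becomes covered, I would show $\rho_i/\alpha - x_v \ge$ (something like) $\frac{1}{\alpha}\cdot\frac{k-i}{k}$ type bound is \emph{not} available directly — instead feasibility-within-the-pseudo-instance is not assumed, so I must be careful: the right bound is $\sum_{i}$-type or a worst-case arrangement argument. The cleanest route is to push through Lemma~\ref{lem:set:irre}'s structure: in the reduced irreducible instance the $\rho_i$ can be taken to be $1$ for all the relevant $u_i$ (the worst case), so $\frac{\rho_i}{\alpha}-x_v = \frac1\alpha - x_v$, giving $\Pr[v\in C\mid Z_v=z] \le 1 - (1-e^{-(\frac1\alpha - x_v)z})^{k}$ when $x_v < \frac1\alpha$, and $\Pr[v\in C] \le 1$ (trivially) when $x_v \ge \frac1\alpha$ — but the latter case is fine because then $\frac{7\alpha}{3}H_k x_v \ge \frac73 H_k \ge 1$.

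So the substantive case is $x_v < \frac1\alpha$, and I would integrate:
\[
\Pr[v\in C] \le x_v\int_0^\infty \Bigl(1 - \bigl(1 - e^{-(\frac1\alpha - x_v)z}\bigr)^{k}\Bigr)e^{-x_v z}\,dz.
\]
This is exactly the offline integral of Lemma~\ref{lem:off:factor} but with $(1-x_v)$ replaced by the \emph{much smaller} $(\frac1\alpha - x_v)$, which is what inflates the bound. Substituting $t := 1 - e^{-(\frac1\alpha - x_v)z}$ turns the integral into $x_v \int_0^1 (1-t^k)\, t^{\frac{x_v}{1/\alpha - x_v}}\cdot\frac{dt}{(1/\alpha-x_v)(1-t)}$; I would bound the exponent of $t$ crudely and compare $\frac{1}{1/\alpha - x_v}$ against $2\alpha$ using $x_v \le \frac{1}{2\alpha}$ (which I can assume after splitting off the regime $\frac{1}{2\alpha}\le x_v < \frac1\alpha$, handled by the trivial bound $\Pr \le 1 \le \frac73 H_k \le \frac{7\alpha}{3}H_k x_v$). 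In the remaining regime $x_v \le \frac1{2\alpha}$ we get $\frac{1}{1/\alpha - x_v}\le 2\alpha$, so the integral is at most $2\alpha\int_0^1\frac{1-t^k}{1-t}\,dt = 2\alpha H_k$; folding in the discarded polynomial factor $t^{\cdots} \le 1$ and accounting for the case split yields the claimed $\frac{7\alpha}{3}H_k x_v$ after collecting constants ($2\alpha \le \frac73\alpha$ fails, so the extra $\frac13$ absorbs the case-split slack and the $\alpha\ge 2$ bound). \textbf{The main obstacle} I anticipate is legitimizing the ``worst case $\rho_i = 1$ and independence across $u_i$'' reduction rigorously: one must argue that the event ``$u_i$ uncovered by $C$ when $v$ arrives'' can be dropped without the correlations among the $Z_{v'}$'s of different singletons reintroducing dependence — this should follow because, in the irreducible $v$-complete pseudo-instance, $v$ is marked by $u_i$ purely through the freshly-drawn $\Tup{v}_{u_i}$ (independent of everything given $Z_v$), so dropping the ``uncovered'' conditioning only increases the probability and makes the $k$ marking events conditionally independent given $Z_v$. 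Pinning down the exact constant $\frac{7\alpha}{3}$ rather than a generic $O(\alpha)$ will require keeping track of the case-split losses carefully, but that is bookkeeping rather than a conceptual hurdle.
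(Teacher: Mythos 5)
There is a genuine gap, and it is exactly at the step you flag as the ``cleanest route'': after dropping the event that $u_i$ is still uncovered by $C$, the worst case over the residuals $\rho_i = r^{(v)}_{u_i}$ is \emph{not} $\rho_i = 1$ but $\rho_i$ as small as possible. Conditioned on $Z_v = z$, the (coverage-ignoring) probability that $u_i$ marks $v$ is $1$ if $x_v \ge \rho_i/\alpha$ and $e^{-z(\rho_i/\alpha - x_v)}$ otherwise, which is \emph{decreasing} in $\rho_i$; so once you discard the coverage event, an adversarial pseudo-instance with tiny $\rho_i$ (e.g.\ singletons of total mass close to $1$ at each $u_i$ before $v$, with $x_v \ge \rho_i/\alpha$) makes your bound degenerate to $\Pr[v \in C \mid Z_v = z] \le 1$, which is nowhere near $\frac{7\alpha}{3} H_k x_v$ for small $x_v$. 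The irreducibility reduction (\Cref{lem:set:irre}) does not help here: it only makes the other subset vertices singletons, and places no constraint forcing $\rho_i = 1$; in a pseudo-instance the adversary controls the prefix masses freely. This failure mode is precisely the $\Theta(s)$-competitiveness of the unpenalized scheme described in the technical overview, and it is what the $\alpha$-penalty is designed to counteract: the paper's proof keeps the coverage event and proves (\Cref{lem:setarr:exp}, by induction along the singleton arrivals at $u$) that $\Pr[u \text{ uncovered at } v] \le (r^{(v)}_u)^{\alpha}$, so the quantity to control is the product $r^{\alpha} e^{-z\max\{r/\alpha - x_v,\,0\}}$, whose worst case over $r$ sits at an intermediate point $r^\star(z,x_v)$ (\Cref{lem:setarr:minr}), not at $r=1$ or $r=0$.

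A telling symptom is that your argument never uses the hypothesis $\alpha \ge \ln |N(v)|$, whereas in the paper it is essential: after optimizing over $r$ and reducing to $y=0$ (\Cref{lem:setarr:yzero}), the integral splits at $z=\alpha^2$ and the tail piece $\int_{1-e^{-\alpha}}^{1} \frac{1-t^k}{(1-t)^{1+1/\alpha}}\,dt$ is compared to the harmonic integral via an alternating binomial-series argument (\Cref{lem:setarr:secbnd}, \Cref{claim:setarr:sec}) that needs $e^{\alpha} \ge k$; this is where the constant $\frac{7}{3}$ and the $\log^2 s$ (rather than $\log s$) arise. Your conditional-independence observation (the $T^{(v)}_{u_i}$ are fresh given $Z_v$) and the harmonic-integral endgame are fine in spirit, but without retaining and quantifying the ``still uncovered'' probability the route cannot reach the stated bound.
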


\Cref{lem:set:feas,lem:set:vcomp,lem:set:irre,lem:set:tech} together imply \Cref{thm:set:main}.
\subsection{Proof of \Cref{lem:set:vcomp}: Reduction to \texorpdfstring{$v$}{v}-Complete Pseudo-Instance} \label{sec:set:vcomp}

In this subsection, we prove \Cref{lem:set:vcomp} restated below.

\lemsetvcomp*

\begin{proof}
Since the subset vertices arriving after $v$ do not influence the event of $v \in C$, we assume that $v$ is the last arrival in $(G, x, \prec)$.
Let $\Uhat := N_G(v)$ be the neighbors of $v$ in $G$, and let $\Vhat := \{ v' \in V \mid N_G(v') \cap N_G(v) \neq \emptyset \}$ be the subset vertices that share a neighbor with $v$ in $G$.
Let $\Ghat$ be the subgraph of $G$ induced by $\Uhat \cup \Vhat$, and let $\xhat := x|_{\Vhat}$ be the restriction of $x$ onto $\Vhat$.
Lastly, let $\prechat$ be the relative arrival order of $\Vhat$ in $\prec$.

Note that $\Ghat$ is $v$-complete and the arrival of $v$ is the last in $\prechat$, meaning that $(\Ghat, \xhat, \prechat)$ is $v$-complete.
By construction of $\Ghat$ and $\xhat$, we can observe the following properties:
\begin{enumerate}[i.]
    \item \label{item:set:vcomp01} $N_{\Ghat}(v) = N_G(v)$;
    \item \label{item:set:vcomp02} $N_{\Ghat}(v') = N_G(v') \cap N_G(v) \subseteq N_G(v')$ for every $v' \in \Vhat \setminus \{v\}$;
    \item \label{item:set:vcomp03} $N_{\Ghat}(u) = N_G(u)$ for every $u \in \Uhat$;
    \item \label{item:set:vcomp04} $\xhat_{v'} = x_{v'}$ for every $v' \in \Vhat$.
\end{enumerate}
Property~\ref{item:set:vcomp04} immediately implies $\xhat_v = x_v$.
It thus remains to show that
\[
    \Pr[v \in C(G, x, \prec)] \leq \Pr[v \in C(\Ghat, \xhat, \prechat)].
\]
To this end, we imagine the executions of Property~\ref{alg:setarr} given $(G, x, \prec)$ and $(\Ghat, \xhat, \prechat)$, respectively.
To distinguish the variables in \Cref{alg:setarr} between the two executions, we use $\Chat$, $\That$, $\Zhat$, and $\rhat$ to denote $C$, $T$, $Z$, and $r$, respectively, in the execution given $(\Ghat, \xhat, \prechat)$.
Due to Property~\ref{item:set:vcomp04}, we can see that, for every $v' \in \Vhat$, $Z_{v'}$ and $\Zhat_{v'}$ are sampled from the same distribution $\Exp(x_{v'}) = \Exp(\xhat_{v'})$.
Furthermore, due to Properties~\ref{item:set:vcomp03} and~\ref{item:set:vcomp04}, we can deduce that $\rup{v'}_u = \rhatup{v'}_u$ for every $v' \in \Vhat$ and $u \in N_{\Ghat}(v')$, and hence, $\Tup{v'}_u$ and $\Thatup{v'}_u$ are also sampled from the same distribution (when we assume that the rounding scheme computes these variables regardless of the condition of Line~\ref{line:set:condcover} that $u$ is already covered).

Let us thus consider coupling $Z_{v'}$ with $\Zhat_{v'}$ and $\Tup{v'}_u$ with $\Thatup{v'}_u$, respectively, for every $v' \in \Vhat$ and $u \in N_{\Ghat}(v')$ between both executions.
We claim that, for every $v' \in \Vhat$ and $u \in \Uhat$, if $u$ is uncovered by $C$ when $v'$ arrives in the execution given $(G, x, \prec)$, then $u$ is also uncovered by $\Chat$ when $v'$ arrives in the execution given $(\Ghat, \xhat, \prechat)$.
To show this claim, consider the moment when $v' \in \Vhat \setminus \{v\}$ arrives, and assume that $u \in \Uhat$ is indeed uncovered in both executions.
It suffices to consider the case where $u \in N_{\Ghat}(v')$.
In order for $u$ to remain uncovered at the end of this iteration, $v'$ should not be marked by any of its neighbors.
Due to Property~\ref{item:set:vcomp02} and the coupling, we can infer that, if $v'$ is not marked in the execution given $(G, x, \prec)$, neither is $v'$ in the execution given $(\Ghat, \xhat, \prechat)$.
This completes the proof of the claim.

From the claim, when $v$ arrives in both executions, the set of uncovered element vertices in $\Uhat$ in the execution given $(G, x, \prec)$ forms a subset of that in the execution given $(\Ghat, \xhat, \prec)$, i.e.,
\[
    \textstyle \Uhat \setminus (\bigcup_{v' \in C} N_G(v')) \subseteq \Uhat \setminus (\bigcup_{v' \in \Chat} N_{\Ghat}(v')).
\]
Recall from Property~\ref{item:set:vcomp01} that $\Uhat = N_{\Ghat}(v) = N_G(v)$.
Therefore, in the execution given $(G, x, \prec)$, $v$ can be marked only by the uncovered element vertices in $\Uhat \setminus (\bigcup_{v' \in C} N_G(v'))$.
However, these vertices are also uncovered in the execution given $(\Ghat, \xhat, \prechat)$, and their behaviors are equivalent in both executions due to the coupling.
We can thus derive that, if $v$ is inserted into $C$ in the execution given $(G, x, \prec)$, $v$ is also inserted into $\Chat$ in the execution given $(\Ghat, \xhat, \prechat)$.
This completes the proof of the lemma.
\end{proof}
\subsection{Proof of \Cref{lem:set:irre}: Reduction to Irreducible Pseudo-Instance} \label{sec:set:irre}

We now turn into showing \Cref{lem:set:irre}.
We restate the lemma below for reference.

\lemsetirre*

\noindent
We prove it by inductively applying the following lemma.

\begin{lemma} \label{lem:set:irre:main}
    For a reducible $v$-complete pseudo-instance $(G = (U \cup V, E), x, \prec)$, there exists a $v$-complete pseudo-instance $(\Ghat = (U \cup \Vhat, \Ehat), \xhat, \prechat)$ satisfying the following properties:
    \begin{enumerate}
        \item \label{cond:set:irre:main01} $\xhat_v = x_v$;
        \item \label{cond:set:irre:main02} $\Pr[v \in C(G, x, \prec)] \leq \Pr [v \in C(\Ghat, \xhat, \prechat)]$;
        \item \label{cond:set:irre:main03} $\sum_{v' \in V \setminus \{v\}} (|N_G(v')| - 1) > \sum_{v' \in \Vhat \setminus \{v\}} (|N_{\Ghat}(v')| - 1)$.
    \end{enumerate}
\end{lemma}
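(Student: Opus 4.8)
The plan is to ``peel one element off'' a non-trivial subset vertex. Since $(G=(U\cup V,E),x,\prec)$ is reducible, fix $v'\in V\setminus\{v\}$ with $|N_G(v')|\ge 2$ and an arbitrary $u^\star\in N_G(v')$. I would build $(\Ghat=(U\cup\Vhat,\Ehat),\xhat,\prechat)$ from $(G,x,\prec)$ by replacing $v'$ with two subset vertices $v'_1,v'_2$ such that $N_{\Ghat}(v'_1)=\{u^\star\}$ and $N_{\Ghat}(v'_2)=N_G(v')\setminus\{u^\star\}$, keeping every other vertex and edge, setting $\xhat_{v'_1}=\xhat_{v'_2}=x_{v'}$ and $\xhat_w=x_w$ for $w\notin\{v'_1,v'_2\}$, and letting $\prechat$ coincide with $\prec$ except that $v'$'s slot is now occupied by $v'_1$ immediately followed by $v'_2$. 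Since $v$ is untouched and still arrives last, $(\Ghat,\xhat,\prechat)$ is a $v$-complete pseudo-instance with $\xhat_v=x_v$, which is property~\ref{cond:set:irre:main01}; and the summand $|N_G(v')|-1\ge1$ of the reducibility measure gets replaced by $(|N_{\Ghat}(v'_1)|-1)+(|N_{\Ghat}(v'_2)|-1)=|N_G(v')|-2$, which is property~\ref{cond:set:irre:main03}.

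For property~\ref{cond:set:irre:main02} I would first note that every ``remaining amount'' $\rup{w}_u$ is a deterministic function of $(G,x,\prec)$ and is unchanged by the construction, because $\xhat_{v'_1}=\xhat_{v'_2}=x_{v'}$ and $v'_1,v'_2$ split the neighborhood of $v'$, so every prefix sum $\sum_{w\in N(u):\,w\prec\,\cdot}x_w$ is preserved. This lets me couple the two runs of \Cref{alg:setarr}: use the same $Z_w,T^{(w)}_u$ for every vertex $w$ common to both, set $Z_{v'_1}:=Z_{v'}$, $T^{(v'_1)}_{u^\star}:=T^{(v')}_{u^\star}$, and $T^{(v'_2)}_u:=T^{(v')}_u$ for $u\in N_G(v')\setminus\{u^\star\}$, and draw $Z_{v'_2}$ independently. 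Up to $v'$'s slot the runs are identical; let $B$ be the set of covered elements at that point, and condition additionally on all clocks of the vertices arriving after $v'$'s slot ($v$ included). Under this conditioning, $\mathbf{1}[v\in C]$ becomes a deterministic function $\phi(S)$ of the set $S\subseteq N_G(v')$ of elements covered by $v'$ (resp.\ by $v'_1$ and $v'_2$) during that slot: writing $M_u:=\mathbf{1}[x_{v'}\ge\rup{v'}_u/\alpha\ \text{or}\ Z_{v'}\le T^{(v')}_u]$, in the original run $v'$ covers $N_G(v')$ with probability $\pi:=\Pr[\bigvee_u M_u]$ (the $\bigvee$ over $v'$'s uncovered neighbors) and $\emptyset$ otherwise, while in $\Ghat$ the vertices $v'_1,v'_2$ independently cover $\{u^\star\}$ with probability $\pi_1$ and $N_G(v')\setminus\{u^\star\}$ with probability $\pi_2$, the corresponding marginals. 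Two elementary facts are available: $\pi\ge\pi_1$ and $\pi\ge\pi_2$ (fewer disjuncts), and, since the events $\lnot M_u$ are all monotone in the single shared clock $Z_{v'}$, Harris's inequality gives $1-\pi=\Pr[\bigwedge_u\lnot M_u]\ge(1-\pi_1)(1-\pi_2)$.

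The clean case is when $v'$ immediately precedes $v$, so there are no later vertices and $\phi(S)=\mathbf{1}[\bigvee_{u\in N_G(v)\setminus(B\cup S)}M^{(v)}_u]$ (with $M^{(v)}_u$ the obvious analogue of $M_u$ for $v$) is nonincreasing in $S$. Then $\phi(\emptyset)\ge\phi(\{u^\star\})\ge\phi(N_G(v'))$ and $\phi(\emptyset)\ge\phi(N_G(v')\setminus\{u^\star\})\ge\phi(N_G(v'))$, and the strict drops $\phi(\{u^\star\})<\phi(\emptyset)$ and $\phi(N_G(v')\setminus\{u^\star\})<\phi(\emptyset)$ are mutually exclusive (the first forces $M^{(v)}_{u^\star}=1$, the second forces $M^{(v)}_{u^\star}=0$). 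A short case analysis on which of $\phi(\{u^\star\}),\phi(N_G(v')\setminus\{u^\star\})$ equals $\phi(\emptyset)$, using $\pi\ge\pi_2$ resp.\ $\pi\ge\pi_1$, then gives
\[
\pi\,\phi(N_G(v'))+(1-\pi)\,\phi(\emptyset)\ \le\ \Pr\big[\,v\in C(\Ghat,\xhat,\prechat)\mid B\text{ and the later clocks}\,\big],
\]
and taking expectations over $B$ and the later clocks yields property~\ref{cond:set:irre:main02}.

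The main obstacle is the general case: because \Cref{alg:setarr} is adaptive, whether a subset vertex arriving after $v'$ is selected depends on the set of elements already covered, so $\phi$ need not be monotone in $S$ — covering one extra element may suppress a later vertex and thereby leave still more of $v$'s neighbors uncovered. The way around this that I would try is a preliminary, order-preserving reduction: sliding $v'$ to the slot immediately before $v$ only increases $\Pr[v\in C]$, because moving $v'$ later only raises the $\rup{w}_u$ of the vertices it used to precede, which makes those vertices (weakly) harder to select and hence tends to leave more of $N_G(v)$ uncovered; after this reduction the no-cascade argument above applies verbatim to the split. Making this ``only increases'' step rigorous — one again loses a pointwise coupling there and so needs a conditional comparison of the same flavor, this time leaning on the Harris-type inequality above — is where I expect the bulk of the work.
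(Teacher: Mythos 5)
Your split construction, the prefix-sum/coupling bookkeeping, and the ``clean case'' comparison are sound, and they are close in spirit to what the paper does (the paper also peels one element off into a fresh singleton of the same value occupying the same slot, conditions on all other randomness, and compares the two runs; it uses a measure-preserving swap of the two i.i.d.\ clocks $Z_{\vcirc},\Zhat_{\vhatcirc}$ where you use $\pi\ge\max(\pi_1,\pi_2)$ plus Harris --- and in fact, with $\phi$ monotone, your case analysis never actually needs Harris). The genuine gap is exactly where you placed it: the reduction of the general case to the clean case via the ``sliding'' lemma. The claim that delaying $v'$ to the slot just before $v$ can only increase $\Pr[v\in C]$ is unproven, and the heuristic offered for it is one-sided. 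Delaying $v'$ also \emph{decreases} its own remaining amounts $\rup{v'}_u$ (more fractional mass now precedes it), and since a vertex is marked per uncovered neighbor with probability $\min\{\alpha x/ r,1\}$, this makes $v'$ itself more likely to be selected, covering $N_G(v')$ more often before $v$ arrives and pushing $\Pr[v\in C]$ \emph{down}. Even in the simplest setting the direction is not monotone: with one element, $\alpha=2$ and two preceding vertices of values $0.1$ and $0.3$, the probability that the element is still uncovered when $v$ arrives is $0.8\cdot\tfrac13\approx0.267$ in one order and $0.4\cdot\tfrac57\approx0.286$ in the other, while $v$'s own trigger probability is order-independent; so exchanging arrival positions can move $\Pr[v\in C]$ either way. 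Moreover, the cascade non-monotonicity you identified for $\phi$ afflicts the slide in exactly the same way, so the ``conditional comparison of the same flavor'' you hope for has no pointwise coupling to lean on. As written, Property~2 of the lemma is not established for a general reducible instance.

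The missing idea that closes the gap requires no reordering at all: choose $v'$ to be the \emph{last} subset vertex in $\prec$ with at least two neighbors, as the paper does. Then every vertex strictly between $v'$'s slot and $v$ is a singleton; a singleton's marking affects only its own element, so the set of elements uncovered at $v$'s arrival is a coordinatewise nonincreasing function of the set $S$ covered in the slot, your $\phi$ is again nonincreasing, the two strict drops are still mutually exclusive, and your clean-case computation goes through verbatim --- conditioning on the clocks of the in-between singletons along with everything else. With that single change of which vertex to split, your argument becomes a complete proof, essentially the paper's proof of \Cref{lem:set:irre:main} with the Harris/case-analysis comparison replacing the paper's symmetric swap of the pair $(Z_{\vcirc},\Zhat_{\vhatcirc})$.
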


\noindent
Observe that a $v$-complete pseudo-instance $(G = (U \cup V, E), x, \prec)$ is irreducible if and only if $\sum_{v' \in V \setminus \{v\}} (|N_G(v'\})| - 1) = 0$.
Hence, we can see that \Cref{lem:set:irre:main} indeed implies \Cref{lem:set:irre}.

\begin{proof}[Proof of \Cref{lem:set:irre:main}]
    Let $\vcirc \in V \setminus \{v\}$ be the \emph{last} subset vertex with $|N_G(\vcirc)| \geq 2$, and let $\ucirc \in N_G(\vcirc)$ be an arbitrary neighbor of $\vcirc$.
    Note that, by the choice of $\vcirc$, the subset vertices that arrive between $\vcirc$ and $v$ are all \emph{singletons}, i.e., $|N_G(v')| = 1$ for every $v' \in V$ such that $\vcirc \prec v' \prec v$.
    
    We construct $(\Ghat = (U \cup \Vhat, \Ehat), \xhat, \prechat)$ by removing $(\ucirc, \vcirc)$ from $E$ and instead adding a new subset vertex of the same solution value as $\vcirc$ adjacent only with $\ucirc$.
    Precisely speaking, let $\Vhat := V \cup \{\vhatcirc\}$, where $\vhatcirc$ is a new subset vertex that does not appear in $V$.
    We define $\Ehat$ in terms of $N_{\Ghat}(v')$ for every $v' \in \Vhat$:
    \[
        N_{\Ghat} (v') := \begin{cases}
            N_G(\vcirc) \setminus \{\ucirc\}, & \text{if $v' = \vcirc$,} \\
            \{ \ucirc \}, & \text{if $v' = \vhatcirc$,} \\
            N_G (v'), & \text{if $v' \in \Vhat \setminus \{\vcirc, \vhatcirc\}$}.
        \end{cases}
    \]
    For $\xhat \in \R^{\Vhat}_{\geq 0}$, we define
    \[
        \xhat_{v'} := \begin{cases}
            x_{v'}, & \text{if $v' \in V$,}\\
            x_{\vcirc}, & \text{if $v' = \vhatcirc$}.
        \end{cases}
    \]
    Finally, $\prechat$ is defined by inserting $\vhatcirc$ right after $\vcirc$ in the order of $\prec$.
    Note that the constructed $(\Ghat, \xhat, \prechat)$ is still $v$-complete since $\vcirc \neq v$ and $\vhatcirc \;\prechat\; v$.
    Observe also that Properties~\ref{cond:set:irre:main01} and~\ref{cond:set:irre:main03} are immediately satisfied by $(\Ghat, \xhat, \prechat)$.

    To prove Property~\ref{cond:set:irre:main02}, we consider the executions of \Cref{alg:setarr} given $(G, x, \prec)$ and $(\Ghat, \xhat, \prechat)$, respectively.
    Let us use $\Chat$, $\That$, $\Zhat$, and $\rhat$ to denote $C$, $T$, $Z$, and $r$, respectively, in the execution given $(\Ghat, \xhat, \prechat)$.
    When we assume the rounding scheme computes $\rup{v'}_u$ and $\rhatup{v'}_u$ in both executions no matter whether $u$ is covered, we can derive by definition of $N_{\Ghat}(v')$ and $\xhat_{v'}$ that
    \[
        \rhatup{v'}_u = \begin{cases}
            \rup{v'}_u, & \text{if $v' \in V$ and $u \in N_{\Ghat}(v')$,} \\
            \rup{\vcirc}_{\ucirc}, & \text{if $v' = \vhatcirc$ and $u = \ucirc$.}
        \end{cases}
    \]
    We can therefore deduce that $Z_{v'}$ and $\Zhat_{v'}$ are sampled from the same distribution for $v' \in \Vhat \setminus \{\vhatcirc\}$ in both executions; $\Zhat_{\vhatcirc}$ has the same distribution as $Z_{\vcirc}$.
    Moreover, we can also derive that $\Tup{v'}_u$ and $\Thatup{v'}_u$ have the same distribution for $v' \in \Vhat \setminus \{ \vhatcirc \}$ and $u \in N_{\Ghat}(v')$; $\Thatup{\vhatcirc}_{\ucirc}$ has the same distribution as $\Tup{\vcirc}_{\ucirc}$.

    We thus fix and condition on $Z_{v'} = \Zhat_{v'}$ for $v' \in \Vhat \setminus \{\vcirc, \vhatcirc\}$, $\Tup{v'}_u = \Thatup{v'}_u$ for $v' \in \Vhat \setminus \{\vhatcirc\}$ and $u \in N_{\Ghat}(v')$, and $\Tup{\vcirc}_{\ucirc} = \Thatup{\vhatcirc}_{\ucirc}$.
    In other words, we condition that all random variables except $Z_{\vcirc}$, $\Zhat_{\vcirc}$, and $\Zhat_{\vhatcirc}$ are fixed to some specific values across the two executions.
    For $Z_{\vcirc}$ and $\Zhat_{\vcirc}$, we only couple these random variables to have the same realizations in both executions.
    Due to the coupling, both executions are equivalent until $\vcirc$ arrives.
    Since the element vertices that have already been covered until then do not influence the event of $v$ being inserted, we slightly abuse the notation and write $U$ to denote only the element vertices yet uncovered when $\vcirc$ arrives.
    
    Recall that, by the choice of $\vcirc$, the subset vertices between $\vcirc$ (or $\vhatcirc$) and $v$ are adjacent with exactly one element vertex in both executions.
    We can therefore see that, if any element vertex $u \in U \setminus N_G(\vcirc)$ is uncovered until $v$ arrives and makes $v$ marked in the execution given $(G, x, \prec)$, $u$ must also be uncovered until $v$ arrives and make $v$ marked in the execution given $(\Ghat, \xhat, \prechat)$ since the rounding scheme behaves equivalently on $u$ in both executions.
    Moreover, if $\ucirc$ is already covered when $\vcirc$ arrives (i.e., $\ucirc \not\in U$), the two executions are again the same due to the coupling between $Z_{\vcirc}$ and $\Zhat_{\vcirc}$.
    We thus assume from now that $\ucirc \in U$, and no element vertices in $U \setminus N_G(\vcirc)$ make $v$ selected in both executions.

    Let $B \subseteq \R^2_{\geq 0}$ denote the set of ``bad'' events $(z, \zhat)$ such that $v \in C$ and $v \not\in \Chat$ when $Z_{\vcirc} = \Zhat_{\vcirc} = z$ and $\Zhat_{\vhatcirc} = \zhat$ (recall that we couple $Z_{\vcirc}$ and $\Zhat_{\vcirc}$).
    Let $\Bhat := \{ (\zhat, z) \mid (z, \zhat) \in B \}$.
    We claim that $\Bhat$ is a set of ``good'' events.

    \begin{claim}
        For every $(\zhat, z) \in \Bhat$, when $Z_{\vcirc} = \Zhat_{\vcirc} = \zhat$ and $\Zhat_{\vhatcirc} = z$, we have $v \not\in C$ and $v \in \Chat$.
    \end{claim}
    \begin{proof}
        Consider the event of $v \in C$ and $v \not\in \Chat$.
        To have $v \in C$ in the execution given $(G, x, \prec)$, no element vertices in $N_G(\vcirc)$ make $\vcirc$ marked in the execution given $(G, x, \prec)$ since otherwise $N_G(\vcirc)$ are already covered at $v$'s arrival while every element vertex in $U \setminus N_G(\vcirc)$ is assumed not to make $v$ marked.
        However, in order to have $v \not\in \Chat$ in the meantime, no element vertices in $N_G(\vcirc) \setminus \{\ucirc\}$ make $v$ marked in the execution given $(G, x, \prec)$ because $N_{\Ghat}(\vcirc) = N_G(\vcirc) \setminus \{\ucirc\}$, and hence, we otherwise have $v \in \Chat$ in the execution given $(\Ghat, \xhat, \prechat)$.

        Therefore, in order for the event of $v \in C$ and $v \not\in \Chat$ to happen with $Z_{\vcirc} = \Zhat_{\vcirc} = z$ and $\Zhat_{\vhatcirc} = \zhat$ (i.e., $(z, \zhat) \in B$), it is necessary to satisfy the following properties:
        \begin{enumerate}[(I)]
            \item \label{prop:irre:claim02} $\ucirc$ does not make $\vcirc$ marked, but makes $v$ marked in the execution given $(G, x, \prec)$;
            \item \label{prop:irre:claim03} $\ucirc$ makes $\vhatcirc$ marked in the execution given $(\Ghat, \xhat, \prechat)$.
        \end{enumerate}
        Due to these properties, we can derive that
        \(
            \zhat \leq t < z,
        \)
        where $t$ denotes the conditioned value of $\Tup{\vcirc}_{\ucirc} = \Thatup{\vhatcirc}_{\ucirc}$ in both executions.

        Consider now the executions with $Z_{\vcirc} = \Zhat_{\vcirc} = \zhat$ and $\Zhat_{\vhatcirc} = z$.
        Since $\zhat \leq t$, observe that, in the execution given $(G, x, \prec)$, $\ucirc$ now makes $\vcirc$ marked, and therefore, the vertices $N_G(\vcirc)$ are covered when $v$ arrives, implying $v \not\in C$.
        On the other hand, since $z > t$, $\ucirc$ does not make $\vhatcirc$ marked in the execution given $(\Ghat, \xhat, \prechat)$; together with Property~\eqref{prop:irre:claim02}, we can derive that $\ucirc$ makes $v$ marked in that execution, yielding that $v \in \Chat$.
        This completes the proof of the claim.
    \end{proof}

    Due to the claim, we can obtain
    \begin{align*}
        \Pr[v \in \Chat] - \Pr[v \in C] 
        & = \Pr[v \not\in C, v \in \Chat] - \Pr[v \in C, v \not\in \Chat]
        \\ &
        \geq \Pr[(\zhat, z) \in \Bhat] - \Pr[(z, \zhat) \in B]
        \\ &
        = 0,
    \end{align*}
    where the probabilities are under the condition, and the last equality comes from the fact that $Z_{\vcirc} = \Zhat_{\vcirc}$ and $\Zhat_{\vhatcirc}$ are independent and identically distributed.
    Unconditioning the above equation immediately implies Property~\ref{cond:set:irre:main02}.
\end{proof}
\subsection{Proof of \Cref{lem:set:tech}: Analysis on Irreducible \texorpdfstring{$v$}{v}-Complete Pseudo-Instance} \label{sec:set:tech}

We prove in this subsection our main technical lemma, \Cref{lem:set:tech}, reiterated below.

\lemsettech*

Let $k := |N(v)|$.
Recall that $U = N(v)$, $v$ arrives the last in the arrival order of $\prec$, and $V \setminus \{v\}$ is partitioned into $\{N(u) \setminus \{v\} \mid u \in U\}$ since $(G, x, \prec)$ is irreducible $v$-complete.

We first bound the probability of an element vertex $u \in U$ being uncovered when $v$ arrives.

\begin{lemma} \label{lem:setarr:exp}
For every $u \in U$, the probability that $u$ is uncovered at the arrival of $v$ is at most $(\rup{v}_u)^\alpha$. 
\end{lemma}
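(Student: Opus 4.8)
The plan is to trace how the probability that $u$ stays uncovered evolves as the subset vertices adjacent to $u$ arrive. Since $(G,x,\prec)$ is irreducible $v$-complete, the subset vertices other than $v$ that are adjacent to $u$ are all singletons whose unique neighbor is $u$, and $v$ arrives last; hence, before $v$ arrives, $u$ can only be covered through one of these singletons. Write $N(u)\setminus\{v\}=\{v_1,\dots,v_m\}$ in arrival order and put $y_j:=x_{v_j}$ and $r_j:=\rup{v_j}_u=\max\{1-\sum_{l<j}y_l,\,0\}$; note $r_1=1$ and $\rup{v}_u=\max\{1-\sum_{l=1}^m y_l,\,0\}$.

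First I would compute the per-step survival probability. Condition on $u$ being uncovered when $v_j$ arrives. If $y_j\ge r_j/\alpha$, then $v_j$ is marked deterministically (Line~\ref{line:set:detmark}) and $u$ becomes covered, so the survival probability is $0$. Otherwise $v_j$ is marked exactly when $Z_{v_j}\le\Tup{v_j}_u$, where $Z_{v_j}\sim\Exp(y_j)$ and $\Tup{v_j}_u\sim\Exp(r_j/\alpha-y_j)$ are independent; by \Cref{fact:pre:exp2} this has probability $\frac{y_j}{y_j+(r_j/\alpha-y_j)}=\frac{\alpha y_j}{r_j}$, so the survival probability is $1-\frac{\alpha y_j}{r_j}$. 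As the randomness used at step $j$ is independent of everything used for $v_1,\dots,v_{j-1}$, chaining these conditional probabilities yields
\[
    \Pr[u \text{ uncovered when } v \text{ arrives}] \;=\; \prod_{j=1}^{m}\max\!\left\{\,1-\frac{\alpha y_j}{r_j},\;0\,\right\}.
\]

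If some factor is $0$ there is nothing to prove, so suppose every factor is positive, i.e.\ $0<y_j<r_j/\alpha$ for all $j$. An easy induction (using only $\alpha\ge 1$) then gives $r_j>0$ for all $j$ and $r_{j+1}=r_j-y_j$; setting $t_j:=r_{j+1}/r_j=1-y_j/r_j\in(1-1/\alpha,\,1)$, the product telescopes to $\rup{v}_u=\prod_{j=1}^m t_j$, while each factor becomes $1-\alpha y_j/r_j=\alpha t_j-(\alpha-1)$. Taking logarithms, it suffices to prove the one-variable inequality $\alpha t-(\alpha-1)\le t^{\alpha}$ for $t\in(0,1]$; this is precisely the tangent-line (equivalently, Bernoulli) inequality for the convex function $t\mapsto t^{\alpha}$ at $t=1$, valid since $\alpha\ge1$. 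Combining, $\Pr[u\text{ uncovered when }v\text{ arrives}]\le\prod_{j=1}^m t_j^{\alpha}=(\rup{v}_u)^{\alpha}$.

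I expect the only delicate part to be the bookkeeping in the degenerate cases---when $u$ has already been covered before some $v_j$ (so that $v_j$ is simply skipped by the loop of Line~\ref{line:set:condcover}), when $r_j=0$, and the empty-product case $m=0$ (where $\rup{v}_u=1$)---so that the product identity above is stated correctly. The analytic core, $\alpha t-(\alpha-1)\le t^{\alpha}$, is elementary and needs nothing beyond convexity, so it is not a real obstacle.
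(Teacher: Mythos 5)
Your argument is correct and is essentially the paper's proof: you compute the same per-step survival probability $\max\{1-\alpha x_{v_j}/r_j,0\}$ via \Cref{fact:pre:exp2}, chain the steps by independence (the paper phrases this as an induction rather than a telescoping product), and your tangent-line inequality $\alpha t-(\alpha-1)\le t^{\alpha}$ is exactly the paper's bound $\max\{1-\alpha s,0\}\le(1-s)^{\alpha}$ with $t=1-s$; the degenerate cases ($\rup{v}_u=0$, deterministic marking) are also handled the same way. No substantive differences.
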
 
\begin{proof}
If $\rup{v}_u = 0$, then there exists $v' \in N(u)$ such that
\[
    \sum_{v'' \in N(u) : v'' \prec v'} x_{v''} < 1
    \text{ and }
    \sum_{v'' \in N(u) : v'' \preceq v'} x_{v''} \geq 1.
\]
Observe that, if $u$ is uncovered when $v'$ arrives, the rounding scheme deterministically selects $v'$ since
\[
    x_{v'} \geq 1 - \sum_{v'' \in N(u) : v'' \prec v'} x_{v''} = \rup{v'}_u \geq \frac{\rup{v'}_u}{\alpha}.
\]
Therefore, $u$ must be covered when $v$ arrives.

Let us thus assume that $\rup{v}_u > 0$.
Let $\ell := |N(u) \setminus \{v\}|$ be the number of neighbors of $u$ that arrive before $v$.
Moreover, for every $i = 1, \ldots, \ell$, let $\vup{i}$ be the $i$-th subset vertex in $N(u) \setminus \{v\}$ in the arrival order; let $\vup{\ell+1} := v$ for consistency.
For every $i = 1, \ldots, \ell + 1$, let $\rup{i} := \rup{\vup{i}}_u = 1 - \sum_{j = 1}^{i-1} x_{\vup{j}}$; let $\Zup{i}$ and $\Tup{i}$ denote $Z_{\vup{i}} \sim \Exp(x_{\vup{i}})$ and $\Tup{\vup{i}}_u \sim \Exp(\frac{\rup{i}}{\alpha} - x_{\vup{i}})$ (if any), respectively.
By the assumption, we have $\rup{i} > 0$ for all $i = 1, \ldots, \ell+1$; this implies $x_{\vup{i-1}} < \rup{i-1}$ for $i = 2, \ldots, \ell+1$ since $\rup{i} = \rup{i-1} - x_{\vup{i-1}} > 0$.

We claim that, for every $i = 1, \ldots, \ell+1$, the probability of $u$ being uncovered at the arrival of $\vup{i}$ is at most $(\rup{i})^\alpha$, which immediately implies this lemma.
Indeed, if $i = 1$, the claim trivially holds since $\rup{1} = 1$.
For $i \geq 2$, note that $u$ is uncovered at the arrival of $\vup{i}$ if and only if $u$ was uncovered when $\vup{i - 1}$ arrived, $x_{\vup{i-1}} < \frac{\rup{i-1}}{\alpha}$, and $\Zup{i-1} > \Tup{i-1}$.
As $\Zup{i-1}$ and $\Tup{i-1}$ are independent from previous $\{\Zup{j}\}_{j < i-1}$ and $\{\Tup{j}\}_{j < i-1}$, we can derive
\begin{align*}
\Pr [ u \text{ uncovered at } \vup{i} ]
&
\textstyle = \Pr [ u \text{ uncovered at } \vup{i-1} ] \cdot \Pr[ x_{\vup{i-1}} < \frac{\rup{i-1}}{\alpha} \text{ and } \Zup{i-1} > \Tup{i-1} ] 
\\ & 
\stackrel{(a)}{\leq} \textstyle (\rup{i-1})^\alpha \cdot \Pr[ x_{\vup{i-1}} < \frac{\rup{i-1}}{\alpha} \text{ and } \Zup{i-1} > \Tup{i-1} ] 
\\ & 
\stackrel{(b)}{=} (\rup{i-1})^\alpha \cdot \max \left \{ \frac{\frac{\rup{i-1}}{\alpha} - x_{\vup{i-1}}}{\frac{\rup{i-1}}{\alpha}}, 0 \right\}
\\ &
= (\rup{i-1})^\alpha \cdot \max \left \{ \frac{\rup{i-1} - \alpha x_{\vup{i-1}}}{\rup{i-1}}, 0 \right\}
\\ &
\stackrel{(c)}{\leq} (\rup{i-1})^\alpha \cdot \left( \frac{\rup{i-1} - x_{\vup{i-1}}}{\rup{i-1}} \right)^\alpha 
\\ &
= (\rup{i})^\alpha,
\end{align*}
where $(a)$ follows from the induction hypothesis, $(b)$ from \Cref{fact:pre:exp2}, and $(c)$ from the fact that $\max\{ 1 - \alpha t, 0 \} \leq (1 - t)^\alpha$ for $t := \frac{x_{\vup{i-1}}}{\rup{i-1}} \in [0, 1)$ and $\alpha \geq 1$.
\end{proof}

By the above lemma, we can derive that, conditioned on that $Z_v = z$ for some $z \geq 0$, the conditional probability that $v$ is marked due to $u$ is at most
\[
(\rup{v}_u)^\alpha \cdot e^{-z \cdot \max\{ (\rup{v}_u / \alpha) - x_v, 0\}}
\]
since this event happens only if $u$ is uncovered at the arrival of $v$ and either $x_v \geq \frac{\rup{v}_u}{\alpha}$ or $\Tup{v}_u \geq z$.
Note that we can use the upper bound from \Cref{lem:setarr:exp} even if we condition on that $Z_v = z$ because the event of $u$ being uncovered is independent from $Z_v$.

Define a function $f : \R_{\geq 0} \times [0, 1] \times [0, 1] \to [0, 1]$ as follows:
\[
f(z, y, r) := 1 - r^\alpha \cdot e^{-z \cdot \max\{ (r / \alpha) - y, 0\}}.
\]
Note that $f(z, x_v, \rup{v}_u)$ bounds from below the probability that $v$ is not marked due to $u$ conditioned on $Z_v = z$.
The following lemmas, \Cref{lem:setarr:minr,lem:setarr:yzero}, will be useful in obtaining the competitive factor.

\begin{restatable}{lemma}{setarrminr} \label{lem:setarr:minr}
For fixed $z \geq 0$ and $y \in [0, 1]$, the minimum of $f(z, y, r)$ is attained at $r = \rstar(z, y)$ where
\[
\rstar(z, y) := \begin{cases}
\min \{\frac{\alpha^2}{z}, 1\}, & \text{ if } y < \min \{ \frac{\alpha}{z}, \frac{1}{\alpha} \}, \\
\alpha y, & \text{ if } \frac{\alpha}{z} \leq y < \frac{1}{\alpha}, \\
1, & \text{ if } y \geq \frac{1}{\alpha}. 
\end{cases}
\]
\end{restatable}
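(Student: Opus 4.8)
The plan is to turn the problem around: since $f(z,y,r) = 1 - g(r)$ where
$g(r) := r^\alpha \, e^{-z \cdot \max\{(r/\alpha) - y,\, 0\}}$,
minimizing $f$ over $r \in [0,1]$ is equivalent to maximizing the single-variable function $g$ over $[0,1]$. I would show that $g$ is unimodal on $(0,\infty)$ with a peak at $r_0 := \max\{\alpha y,\, \alpha^2/z\}$ (reading $\alpha^2/z = \infty$ when $z=0$), so that its maximizer on $[0,1]$ is simply $\min\{r_0, 1\}$; unpacking this according to the position of $\alpha y$ relative to $\alpha^2/z$ and to $1$ then yields exactly the three displayed cases for $\rstar(z,y)$.

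For the unimodality I would split $[0,\infty)$ at the breakpoint $r = \alpha y$ of the $\max$ term. On $[0, \alpha y]$ the exponent is $0$, so $g(r) = r^\alpha$ is strictly increasing. On $[\alpha y, \infty)$ we have $g(r) = e^{zy} \cdot r^\alpha e^{-zr/\alpha}$, and differentiating $\ln g(r) = zy + \alpha \ln r - zr/\alpha$ gives $\alpha/r - z/\alpha$, which is positive for $r < \alpha^2/z$ and negative for $r > \alpha^2/z$. Hence, if $\alpha^2/z > \alpha y$, then $g$ increases on $[\alpha y, \alpha^2/z]$ and decreases afterwards, so (using continuity of $g$ at $\alpha y$) $g$ increases on $[0, \alpha^2/z]$ and decreases on $[\alpha^2/z, \infty)$; and if $\alpha^2/z \le \alpha y$, then $g$ decreases on all of $[\alpha y, \infty)$, so $g$ increases on $[0, \alpha y]$ and decreases on $[\alpha y, \infty)$. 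In either case $g$ increases on $[0, r_0]$ and decreases on $[r_0, \infty)$ with $r_0 = \max\{\alpha y, \alpha^2/z\}$, as claimed.

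It then remains to read off $\rstar(z,y) = \min\{r_0, 1\}$ as a case analysis. If $y \ge 1/\alpha$ then $\alpha y \ge 1$, so $r_0 \ge 1$ and $\rstar = 1$ — the third case. If $\alpha/z \le y < 1/\alpha$ then $\alpha^2/z \le \alpha y < 1$, so $r_0 = \alpha y < 1$ and $\rstar = \alpha y$ — the second case. If $y < \min\{\alpha/z, 1/\alpha\}$ then $\alpha y < \alpha^2/z$, so $r_0 = \alpha^2/z$ and $\rstar = \min\{\alpha^2/z, 1\}$ — the first case. I would finish by checking the degenerate value $z=0$ (where $g(r)=r^\alpha$ and the formula returns $\rstar = 1$) and the boundary $y = \alpha/z$ (where both the first and second cases return $\alpha y = \alpha^2/z$), and by noting that $r=0$ is never optimal since $g(1) > 0 = g(0)$. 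I do not anticipate a real obstacle; the one point that requires care is that $g$ has a downward kink at $r = \alpha y$, so one cannot set a single global derivative to zero — the piecewise monotonicity argument above is the clean way around this.
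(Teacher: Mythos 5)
Your proposal is correct and follows essentially the same route as the paper's proof: both split the analysis at the kink $r=\alpha y$, observe that the piece $r^\alpha$ is monotone and that the piece $r^\alpha e^{-z(r/\alpha-y)}$ changes monotonicity exactly at $r=\alpha^2/z$ (you via the log-derivative $\alpha/r - z/\alpha$, the paper via the derivative $\tfrac{r^{\alpha-1}e^{\cdot}}{\alpha}(zr-\alpha^2)$), and then read off the three cases by comparing $\alpha y$, $\alpha^2/z$, and $1$. Phrasing it as maximizing $g=1-f$ and your extra checks at $z=0$ and $y=\alpha/z$ are only cosmetic differences.
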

\begin{proof}
Note that $f(z, y, r)$ is continuous on $r \in [0, 1]$.
Observe also that
\[
f(z, y, r) = \begin{cases}
1 - r^\alpha, & \text{ if } r \leq \alpha y; \\
1 - r^\alpha \cdot e^{-z((r/\alpha) - y)}, & \text{ if } r > \alpha y.
\end{cases}
\]
On the one hand, $1 - r^\alpha$ is a decreasing function.
On the other hand, the derivative of $1 - {r^\alpha \cdot e^{-z((r/\alpha) - y)} } $ with respect to $r$ 
is
\[
\frac{r^{\alpha - 1} \cdot e^{-r ((z/\alpha) - y)}}{\alpha} \cdot \left( zr - \alpha^2 \right),
\]
implying that the minimum of $1 - r^\alpha \cdot e^{-z((r/\alpha) - y)}$ on $r \in [0, 1]$ is attained at $r = \min\{ \frac{\alpha^2}{z}, 1 \}$.

If $\alpha y \geq 1$ (i.e., $y \geq \frac{1}{\alpha}$), we have $f(z, y, r) = 1 - r^\alpha$ for all $r \in [0, 1]$, implying that $\rstar(z, y) = 1$ for this case.
Suppose from now that $\alpha y < 1$.
Recall that $f(z, y, r)$ is continuous.
If $\frac{\alpha^2}{z} \leq \alpha y$ (i.e., $y \geq \frac{\alpha}{z}$), we can infer that $\rstar(z, y) = \alpha y$ since $f(z, y, r) = 1 - r^\alpha$ is decreasing for $r \leq \alpha y$ whereas $f(z, y, r) = 1 - r^\alpha \cdot e^{-z((r/\alpha) - y)}$ is increasing for $r > \alpha y$.
Lastly, if $\frac{\alpha^2}{z} > \alpha y$ (i.e., $y < \frac{\alpha}{z}$), we can easily deduce that the minimum of $f(z, y, r)$ is attained at $r = \min\{\frac{\alpha^2}{z}, 1 \}$.
\end{proof}

To prove Lemma~\ref{lem:setarr:yzero}, we will use the following simple facts.
\begin{fact} \label{fac:setarr:yzero1}
For any $t \in [0, 1]$ and $k \geq 1$, we have $t \leq 1 - (1 - t)^k$.
\end{fact}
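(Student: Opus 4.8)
The plan is to reduce this inequality to the elementary monotonicity statement that $a^k \le a$ whenever $a \in [0,1]$ and $k \ge 1$. First I would set $a := 1 - t$, which lies in $[0,1]$ since $t \in [0,1]$. Granting the monotonicity statement, we get $(1-t)^k = a^k \le a = 1 - t$, and subtracting both sides from $1$ yields $1 - (1-t)^k \ge 1 - (1-t) = t$, which is exactly the claim. So the entire content of the fact is the one-line rearrangement once monotonicity is in hand.

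It then remains to justify $a^k \le a$ for $a \in [0,1]$ and $k \ge 1$. I would dispose of the boundary cases $a \in \{0,1\}$ by inspection, and for $a \in (0,1)$ argue via logarithms: $\ln a < 0$ and $k \ge 1$ give $\ln(a^k) = k\ln a \le \ln a$, so exponentiating gives $a^k \le a$. (If one only needs integer $k$, as in the application with $k = |N(v)|$, a trivial induction using $a^{k+1} = a\cdot a^k \le a^k$ works just as well.) There is no real obstacle here; the only thing to be careful about is the direction of the inequality when raising a number below $1$ to a power at least $1$, since that is precisely what makes the bound go the right way.
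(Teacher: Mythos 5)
Your proof is correct and follows the same route as the paper: the paper's one-line argument is precisely that $1-t \in [0,1]$ implies $(1-t)^k \le 1-t$, and then rearranges. You merely spell out the justification of that monotonicity step (via logarithms or induction), which the paper treats as immediate.
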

\begin{proof}
Immediate from that $1 - t \in [0, 1]$ and hence $1 - t \geq (1 - t)^k$.
\end{proof}
\begin{fact} \label{fac:setarr:yzero2}
For any $t > 0$, we have $e^{- 1/t} \leq \frac{t}{e}$.
\end{fact}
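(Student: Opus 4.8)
The plan is to reduce the claimed inequality to the elementary bound $x e^{-x} \le 1/e$ for all $x > 0$, i.e.\ to the fact that the function $x \mapsto x e^{-x}$ attains its global maximum $1/e$ at $x = 1$. First I would substitute $x := 1/t$, which ranges over $(0,\infty)$ as $t$ does. Multiplying the target inequality $e^{-1/t} \le t/e$ through by $e/t = ex$ turns it into $ex \cdot e^{-1/t} \le 1$, that is, $x e^{-x} \le 1/e$; conversely, $x e^{-x} \le 1/e$ rewrites as $e^{-1/t} = t \cdot (x e^{-x}) \le t/e$. So the two statements are equivalent.

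To establish $x e^{-x} \le 1/e$, I would set $\phi(x) := x e^{-x}$ and differentiate: $\phi'(x) = e^{-x}(1 - x)$, so $\phi$ is increasing on $(0,1)$ and decreasing on $(1,\infty)$, whence $\phi(x) \le \phi(1) = 1/e$ for every $x > 0$ (with equality exactly at $x=1$). Unwinding the substitution gives $e^{-1/t} = t\,\phi(1/t) \le t/e$, as desired. An equivalent route, if one prefers to avoid the substitution, is to take logarithms (both sides of $e^{-1/t} \le t/e$ are positive): the inequality becomes $-1/t \le \ln(t/e) = \ln t - 1$, i.e.\ $g(t) := \ln t - 1 + 1/t \ge 0$; since $g'(t) = (t-1)/t^2$, the function $g$ has a unique interior critical point at $t=1$, which is a global minimum with $g(1) = 0$, so $g \ge 0$ throughout $(0,\infty)$.

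There is essentially no obstacle here: this is a one-line single-variable calculus exercise, and the only mild design choice is which normalization to phrase it through. I would use the $x e^{-x} \le 1/e$ form, since that maximum-of-$xe^{-x}$ bound is the shape that recurs naturally in the exponential-clock estimates later in the section.
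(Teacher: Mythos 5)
Your proposal is correct, and the logarithm route you mention at the end (reducing to $\ln t + 1/t \geq 1$ with minimum at $t=1$) is exactly the paper's proof; your primary $x e^{-x} \leq 1/e$ formulation is just the same one-line calculus fact in a different normalization.
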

\begin{proof}
By taking the natural logarithm on both sides and arranging the terms, it is equivalent to $\ln t + \frac{1}{t} \geq 1$.
Note that the left-hand side attains the minimum value $1$ at $t = 1$.
\end{proof}

We are now ready to prove \Cref{lem:setarr:yzero}.

\begin{lemma} \label{lem:setarr:yzero}
For a fixed $z \geq 0$, we have that
\begin{equation} \label{eq:setarr:yzero}
\left( 1 - \left( f(z, y, \rstar(z, y)) \right)^k \right) \cdot e^{-yz} \leq 1 - \left( f(z, 0, \rstar(z, 0)) \right)^k \text{ for any $y \in [0, 1]$.}
\end{equation}
\end{lemma}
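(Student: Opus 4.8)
The plan is to view the left-hand side, for fixed $z$, as a function $\phi(y) := (1 - (f(z, y, \rstar(z, y)))^k)\,e^{-yz}$ of $y \in [0,1]$ and prove $\phi(y) \le \phi(0)$, noting that $\phi(0) = 1 - (f(z, 0, \rstar(z, 0)))^k$ is exactly the right-hand side. Since for each fixed $r$ the quantity $f(z, y, r) = 1 - r^\alpha e^{-z\max\{r/\alpha - y, 0\}}$ is nonincreasing in $y$ (as $y$ grows, $\max\{r/\alpha - y, 0\}$ shrinks), its pointwise minimum $f(z, y, \rstar(z, y))$ over $r \in [0,1]$ is nonincreasing in $y$ as well, so $\phi$ is a product of the decreasing factor $e^{-yz}$ and a nondecreasing factor and need not be monotone. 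We therefore argue separately along the three regimes of $\rstar(z, y)$ from \Cref{lem:setarr:minr}, and throughout we use that $\phi$ is continuous in $y$, which one checks directly at the two regime boundaries $y = \alpha/z$ and $y = 1/\alpha$ by plugging the piecewise formula for $f(z, y, \rstar(z, y))$ into $\phi$.

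The key step is the regime $\rstar(z, y) = \min\{\alpha^2/z, 1\}$, which always contains $y = 0$: there $f(z, y, \rstar(z, y)) = 1 - c\,e^{yz}$ for a constant $c = c(z) \in (0, 1]$ with $c\,e^{yz} \le 1$ on the relevant range (concretely $c = e^{-z/\alpha}$ when $z \le \alpha^2$ and $c = (\alpha^2/z)^\alpha e^{-\alpha}$ when $z > \alpha^2$). Writing $w := e^{yz} \ge 1$, the desired bound $\phi(y) \le \phi(0)$ on this regime becomes the one-variable claim $(1 - (1 - cw)^k)/w \le 1 - (1 - c)^k$, equivalently $\psi(w) \ge 0$ for $\psi(w) := w\,(1 - (1-c)^k) - 1 + (1 - cw)^k$. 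I would close this via $\psi(1) = 0$, $\psi''(w) = k(k-1)c^2(1 - cw)^{k-2} \ge 0$, and $\psi'(1) = (1 - (1-c)^k) - kc(1-c)^{k-1} \ge 0$, the last because $1 - (1-c)^k = c\sum_{j=0}^{k-1}(1-c)^j \ge c\,k(1-c)^{k-1}$; hence $\psi' \ge 0$ and then $\psi \ge 0$ on $[1, \infty)$. This handles $y$ in the flat regime for every $z$.

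For the remaining regimes: on $y \ge 1/\alpha$ we have $\rstar(z,y) = 1$ and $f(z, y, \rstar(z,y)) = 0$, so $\phi(y) = e^{-yz} \le e^{-z/\alpha} \le c \le 1 - (1-c)^k = \phi(0)$, where $e^{-z/\alpha} \le c$ is trivial when $z \le \alpha^2$ and follows from \Cref{fac:setarr:yzero2} applied with $t = \alpha^2/z$ when $z > \alpha^2$, and the last inequality is \Cref{fac:setarr:yzero1}. On the intermediate regime $\alpha/z \le y < 1/\alpha$ (present only when $z > \alpha^2$) we have $\rstar(z,y) = \alpha y$ and $f(z, y, \rstar(z,y)) = 1 - (\alpha y)^\alpha$; writing $q := 1 - (\alpha y)^\alpha \in (0,1)$, differentiation gives $\phi'(y) = e^{-yz}\bigl(k\alpha^2 q^{k-1}(\alpha y)^{\alpha-1} - z(1 - q^k)\bigr)$, and using $1 - q^k = (1-q)(1 + q + \dots + q^{k-1}) \ge (\alpha y)^\alpha\,k q^{k-1}$ the inequality $\phi'(y) \le 0$ reduces to $\alpha^2 \le z\alpha y$, i.e. to the defining condition $y \ge \alpha/z$ of this regime; hence $\phi$ is nonincreasing there, so $\phi(y) \le \phi(\alpha/z) \le \phi(0)$ by continuity and the flat-regime bound. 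Combining the three pieces yields $\phi(y) \le \phi(0)$ for all $y \in [0,1]$.

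The step I expect to be the main obstacle is the intermediate regime $\alpha/z \le y < 1/\alpha$: there the decreasing factor $e^{-yz}$ and the increasing factor $1 - (1 - (\alpha y)^\alpha)^k$ genuinely compete, and making $\phi' \le 0$ go through requires pulling the factor $k q^{k-1}$ out of $1 - q^k$ so that the inequality collapses to exactly that regime's own defining condition $\alpha \le zy$. A secondary nuisance is the bookkeeping around the case split $z \le \alpha^2$ versus $z > \alpha^2$ — checking $c\,e^{yz} \le 1$ on each flat subrange and verifying that $\phi$ matches up at the two regime boundaries — which is what lets the piecewise bounds be glued into the single claimed inequality.
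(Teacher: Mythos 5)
Your proof is correct, and its skeleton matches the paper's: both split according to the three regimes of $\rstar(z,y)$ from \Cref{lem:setarr:minr}, and your treatments of the flat regime and of $y \geq \frac{1}{\alpha}$ are essentially the paper's (your convexity argument for $\psi(w) = w(1-(1-c)^k) - 1 + (1-cw)^k$ is interchangeable with the paper's substitution $t := 1 - c\,e^{yz}$ and the monotonicity of $c\,\frac{1-t^k}{1-t}$, and Case~3 uses the same two facts, \Cref{fac:setarr:yzero1,fac:setarr:yzero2}). The genuine difference is the intermediate regime $\frac{\alpha}{z} \leq y < \frac{1}{\alpha}$: you differentiate $\phi(y) = \bigl(1-(1-(\alpha y)^\alpha)^k\bigr)e^{-yz}$ directly, use $1 - q^k \geq (1-q)\,k\,q^{k-1}$ to show $\phi' \leq 0$ exactly when $y \geq \frac{\alpha}{z}$, and glue to the flat regime by continuity at $y = \frac{\alpha}{z}$ (both sides indeed give $f = 1 - (\alpha^2/z)^\alpha$ there, so the gluing is sound). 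The paper instead never differentiates $\phi$: it lower-bounds $f(z,y,\alpha y) = 1-(\alpha y)^\alpha$ by the value $1 - (\frac{\alpha^2}{ez})^\alpha e^{yz}$ at the unconstrained critical point $r = \frac{\alpha^2}{z}$, which reduces Case~2 to a Case-1-type inequality but forces an extra subcase when that lower bound is nonpositive (handled via \Cref{fac:setarr:yzero1}). Your route avoids that subcase and makes the role of the regime condition transparent, at the cost of a derivative computation and a continuity check; the paper's route reuses the Case-1 machinery wholesale. Both are complete; I see no gap in yours.
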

\begin{proof}
We break the proof into three cases depending on $y$ as follows.

\textbf{Case 1. $y < \min\{ \frac{\alpha}{z}, \frac{1}{\alpha} \}$.}
Lemma~\ref{lem:setarr:minr} implies $\rstar(z, y) = \rstar(z, 0) = \min\{\frac{\alpha^2}{z}, 1\}$.
Therefore, if $z \leq \alpha^2$, we have
\[
f(z, y, \rstar(z, y)) = 1 - e^{-z/\alpha} \cdot e^{yz}
\;\;\text{and}\;\;
f(z, 0, \rstar(z, 0)) = 1 - e^{-z/\alpha};
\]
on the other hand, if $z > \alpha^2$, we have
\[
\textstyle
f(z, y, \rstar(z, y)) = 1 - \left(\frac{\alpha^2}{ez}\right)^\alpha \cdot e^{yz} 
\;\;\text{and}\;\;
f(z, 0, \rstar(z, 0)) = 1 - \left(\frac{\alpha^2}{ez}\right)^\alpha.
\]
We can thus rephrase the left-hand side of Inequality~\eqref{eq:setarr:yzero} as a function $g$ on $[0, \min \{ \frac{\alpha}{z}, \frac{1}{\alpha} \})$ such that
\[
    g(y) := \frac{1 - (1 - c \cdot e^{yz})^k}{e^{yz}}, 
    \;\;\text{where}\;\; 
    c := \begin{cases}
    e^{-z/\alpha}, & \text{ if } z \leq \alpha^2; \\
    \left( \frac{\alpha^2}{ez} \right)^\alpha, & \text{ if } z > \alpha^2.
    \end{cases}
\]
Note also that the right-hand side of Inequality~\eqref{eq:setarr:yzero} is exactly $g(0)$.
Therefore, it suffices to show that the function $g$ is decreasing on $[0, \min\{ \frac{\alpha}{z}, \frac{1}{\alpha} \})$.

When we substitute $t := 1 - c \cdot e^{yz}$, $g$ can be further reiterated as
\[
    c \cdot \frac{1 - t^k}{1 - t},
\]
which is an increasing function with respect to $t \in [0, 1)$.
As $y \mapsto 1 - c \cdot e^{yz}$ is decreasing, if we show $t = 1 - c \cdot e^{yz} \in [0, 1)$ for every $y \in [0, \min\{\frac{\alpha}{z}, \frac{1}{\alpha} \})$, the proof of this case immediately follows.
On the one hand, if $z \leq \alpha^2$, we have for $y \in [0, \frac{1}{\alpha})$
\[
\textstyle 
t 
\in \left(1 - e^{-z/\alpha} \cdot e^{z/\alpha}, 1 - e^{-z/\alpha} \right] 
= \left(0, 1 - e^{-z/\alpha} \right] 
\subset [0, 1);
\]
on the other hand, if $z > \alpha^2$, we have for $ y \in [0, \alpha/z ) $
\[
\textstyle 
t
\in \left(1 - \left( \frac{\alpha^2}{ez}\right)^\alpha \cdot e^{\alpha}, 1 - \left( \frac{\alpha^2}{ez}\right)^\alpha \right] 
= \left(1 - \left( \frac{\alpha^2}{z}\right)^\alpha, 1 - \left( \frac{\alpha^2}{ez}\right)^\alpha \right] 
\subset [0, 1).
\]

\textbf{Case 2. $\frac{\alpha}{z} \leq y < \frac{1}{\alpha}$.}
From the condition of the case, we can derive $\frac{\alpha^2}{z} < 1$, implying that $\rstar(z, 0) = \frac{\alpha^2}{z}$.
Moreover, from the proof of Lemma~\ref{lem:setarr:minr}, we can infer that
\begin{equation}\label{eq:setarr:yzero21}
\textstyle
f(z, y, \rstar(z, y)) 
\geq  1 - \left(\frac{\alpha^2}{z}\right)^\alpha \cdot e^{-z ( \frac{\alpha^2 / z}{\alpha} - y )}
=  1 -  \left(\frac{\alpha^2}{ez}\right)^\alpha \cdot e^{yz} 
\end{equation}
since $r = \frac{\alpha^2}{z}$ is the critical point of $1 - r^\alpha \cdot e^{-z ( r / \alpha - y)}$ where the minimum is attained while $\frac{\alpha^2}{z} \leq \alpha y$ and $f(z, y, r)$ is continuous.
It is also trivial to see 
\begin{equation} \label{eq:setarr:yzero211}
    f(z, y, \rstar(z, y)) \geq 0. 
\end{equation}

If $1 - \left(\frac{\alpha^2}{ez}\right)^\alpha \cdot e^{yz} > 0$, the rest of the argument is analogous to the previous case.
Let $c := \left(\frac{\alpha^2}{ez}\right)^\alpha$.
Observe that $c \cdot e^{yz}  < 1$.
We bound from above the left-hand side of Inequality~\eqref{eq:setarr:yzero} using Inequality~\eqref{eq:setarr:yzero21} as follows:
\[
    \left( 1 - \left( f(z, y, \rstar(z, y)) \right)^k \right) \cdot e^{-yz}
    \leq \frac{1 - \left(1 - c \cdot e^{yz}\right)^k }{ e^{yz} },
\]
while the right-hand side is
\[
    1 - (f(z, 0, \rstar(z, 0)))^k
    = 1 - (1 - c)^k.
\]
It thus suffices to show that, for every $y \in [\frac{\alpha}{z}, \frac{1}{\alpha})$ satisfying $c \cdot e^{yz} < 1$, 
\begin{equation} \label{eq:setarr:yzero23}
    \frac{1 - (1 - c \cdot e^{yz})^k}{e^{yz}} \leq 1 - (1 - c)^k.
\end{equation}
When we substitute $t := 1 - c \cdot e^{yz}$, we can observe that 
\begin{itemize}
    \item[-] $t > 0$ since $c \cdot e^{yz} < 1$;
    \item[-] $t < 1 - c$ since $y \mapsto t = 1 - c \cdot e^{yz}$ is decreasing and $1 - c \cdot e^\alpha < 1 - c$.
\end{itemize}
The left-hand side of Inequality~\eqref{eq:setarr:yzero23} is then rephrased as 
\[
    c \cdot \frac{1 - t^k}{1 - t},
\]
which is increasing on $t \in (0, 1-c)$ while the right-hand side is the value attained at $t = 1-c$, implying Inequality~\eqref{eq:setarr:yzero23}.

On the other hand, if $1 - \left(\frac{\alpha^2}{ez}\right)^\alpha \cdot e^{yz} \leq 0$, we can derive
\begin{align*}
    \left( 1 - \left( f(z, y, \rstar(z, y)) \right)^k \right) \cdot e^{-yz}
    & 
    \stackrel{(a)}{\leq} \; e^{-yz} \;
    %\\&
    \textstyle \stackrel{(b)}{\leq} \; \left( \frac{\alpha^2}{ez} \right)^\alpha \;
    %\\& 
    \textstyle \stackrel{(c)}{\leq} \; 1 - \left(1 - \left( \frac{\alpha^2}{ez} \right)^\alpha \right)^k
    \\&
    \stackrel{(d)}{=} 1 - (f(z, 0, \rstar(z, 0)))^k,
\end{align*}
where $(a)$ follows from Inequality~\eqref{eq:setarr:yzero211}, $(b)$ from the condition that $1 - \left(\frac{\alpha^2}{ez}\right)^\alpha \cdot e^{yz} \leq 0$, $(c)$ from \Cref{fac:setarr:yzero1} with $t := \left( \frac{\alpha^2}{ez} \right)^\alpha$, and finally $(d)$ from the fact that $\rstar(z, 0) = \frac{\alpha^2}{z}$.

\textbf{Case 3. $y \geq \frac{1}{\alpha}$.}
Note that, in this case, we have $\rstar(z, y) = 1$ and $f(z, y, \rstar(z, y)) = 0$.
Therefore, the left-hand side of Inequality~\eqref{eq:setarr:yzero} is merely $e^{-yz}$.
On the one hand, if $\rstar(z, 0) = 1$, we can derive
\[
    e^{-yz} \leq e^{-(z / \alpha)} \leq 1 - (1 - e^{-(z / \alpha)})^k = f(z, 0, 1),
\]
where the second inequality follows from \Cref{fac:setarr:yzero1} with $t := e^{-(z/\alpha)}$.
On the other hand, if $\rstar(z, 0) = \frac{\alpha^2}{z}$, we have $ \frac{\alpha^2}{z} \leq 1$, and therefore,
\begin{align*}
    e^{-yz}
    &
    \leq e^{-z/\alpha} 
    = (e^{-z/(\alpha^2)})^\alpha 
    %\\&
    \textstyle \stackrel{(a)}{\leq} \; \left( \frac{\alpha^2}{ez} \right)^\alpha \;
    %\\&
    \textstyle \stackrel{(b)}{\leq} \; 1 - \left(1 - \left( \frac{\alpha^2}{ez} \right)^\alpha \right)^k \;
    %\\&
    \textstyle = f(z, 0, \frac{\alpha^2}{z}),
\end{align*}
where $(a)$ comes from \Cref{fac:setarr:yzero2} with $t := \frac{\alpha^2}{z}$, and $(b)$ from \Cref{fac:setarr:yzero1} with $t := \left( \frac{\alpha^2}{ez} \right)^\alpha$.
\end{proof}

Equipped with the above lemmas, we can bound the probability that $v$ is inserted into $C$:
\begin{align}
\Pr[v \in C]
& 
\leq \int_0^\infty \left( 1 - \prod_{u \in N(v)} f(z, x_v, \rup{v}_u) \right) \cdot x_v e^{- x_v z} dz \nonumber 
\\ & 
= x_v \cdot \int_0^\infty \left( 1 - \prod_{u \in N(v)} f(z, x_v, \rup{v}_u) \right) \cdot e^{- x_v z} dz \nonumber 
\\ & 
\stackrel{(a)}{\leq} x_v \cdot \int_0^\infty \left( 1 - \left( f(z, x_v, \rstar(z, x_v)) \right)^k \right) \cdot e^{- x_v z} dz \nonumber 
\\ &
\stackrel{(b)}{\leq} x_v \cdot \int_0^\infty 1 - \left( f(z, 0, \rstar(z, 0)) \right)^k  dz, \label{eq:setarr:factor1}
\end{align}
where $(a)$ and $(b)$ are due to Lemmas~\ref{lem:setarr:minr} and~\ref{lem:setarr:yzero}, respectively.
Note that, by \Cref{lem:setarr:minr}, we have
\[
\rstar(z, 0) = \begin{cases}
1, & \text{ if } z \leq \alpha^2, \\
\frac{\alpha^2}{z}, & \text{ if } z > \alpha^2.
\end{cases}
\]
The factor from Inequality~\eqref{eq:setarr:factor1} can therefore be split into two terms as follows:
\begin{align}
\int_0^\infty 1 - \left( f(z, 0, \rstar(z, 0)) \right)^k dz
&
= \int_0^{\alpha^2} 1 - \left( f(z, 0, 1) \right)^k dz + \int_{\alpha^2}^\infty 1 - \left( f(z, 0, {\textstyle \frac{\alpha^2}{z}}) \right)^k dz \nonumber
\\&
= \int_0^{\alpha^2} 1 - \left( 1 - e^{-z/\alpha} \right)^k dz + \int_{\alpha^2}^\infty 1 - \left( 1 - \left({\textstyle \frac{\alpha^2}{ze}}\right)^\alpha \right)^k dz \nonumber
\\&
= \alpha \cdot \int_0^{1-e^{-\alpha}} \frac{1 - t^k}{1 - t} \, dt + \frac{\alpha}{e} \cdot \int_{1 - e^{-\alpha}}^1 \frac{1 - t^k}{(1 - t)^{1 + (1/\alpha)}} \, dt, \label{eq:setarr:factor2}
\end{align}
where the last equality can be derived by substituting $t := 1 - e^{-z/\alpha}$ and $t := 1 - \left(\frac{\alpha^2}{ez}\right)^\alpha$ in both integrals, respectively.
The next lemma is crucial to further bounding the second integral.

\begin{restatable}{lemma}{setarrsec} \label{lem:setarr:secbnd}
If $\alpha \geq \max\{2, \ln k\}$, we have
\[
\int_{1 - e^{-\alpha}}^1 \frac{1 - t^k}{(1 - t)^{1 + (1/\alpha)}} \, dt 
\leq \frac{7e}{3} \cdot \int_{1 - e^{-\alpha}}^1 \frac{1 - t^k}{1 - t} \, dt.
\]
\end{restatable}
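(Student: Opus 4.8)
The plan is to substitute $w = 1-t$ so both integrals run over $w\in[0,e^{-\alpha}]$, and to exploit that $P(t) := \frac{1-t^k}{1-t} = \sum_{j=0}^{k-1}t^j$ is convex on $[0,1]$. On $[1-e^{-\alpha},1]$ I would bound $P$ from above by the chord joining its endpoint values $P(1-e^{-\alpha}) =: p_0 = e^{\alpha}\bigl(1-(1-e^{-\alpha})^k\bigr)$ and $P(1) = k$, i.e. $P(1-w) \le k - (k-p_0)e^{\alpha}w$ for $w\in[0,e^{-\alpha}]$. Substituting this into $\int_{1-e^{-\alpha}}^1 \frac{1-t^k}{(1-t)^{1+1/\alpha}}\,dt = \int_0^{e^{-\alpha}} P(1-w)\,w^{-1/\alpha}\,dw$ and evaluating the two elementary power integrals $\int_0^{e^{-\alpha}} w^{-1/\alpha}\,dw = \frac{\alpha e^{1-\alpha}}{\alpha-1}$ and $\int_0^{e^{-\alpha}} w^{1-1/\alpha}\,dw = \frac{\alpha e^{1-2\alpha}}{2\alpha-1}$ gives
\[
\int_{1-e^{-\alpha}}^1 \frac{1-t^k}{(1-t)^{1+1/\alpha}}\,dt \;\le\; \alpha e^{1-\alpha}\Bigl(\tfrac{k}{\alpha-1} - \tfrac{k-p_0}{2\alpha-1}\Bigr).
\]

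Next I would control $k-p_0$ and the right-hand integral $A := \int_{1-e^{-\alpha}}^1\frac{1-t^k}{1-t}\,dt = \int_0^{e^{-\alpha}}\frac{1-(1-w)^k}{w}\,dw$ by two truncated-binomial estimates, both valid precisely because $\alpha\ge\ln k$ forces $ke^{-\alpha}\le 1$, so the terms $\binom{k}{j}(e^{-\alpha})^j$ decrease in $j$ and the alternating partial sums bracket $(1-e^{-\alpha})^k$. From $(1-e^{-\alpha})^k \ge 1-ke^{-\alpha}+\binom{k}{2}e^{-2\alpha}-\binom{k}{3}e^{-3\alpha}$ I get $k-p_0 \ge \binom{k}{2}e^{-\alpha} - \binom{k}{3}e^{-2\alpha} \ge \frac{k(k-1)}{3}e^{-\alpha}$, and from $(1-w)^k \le 1-kw+\binom{k}{2}w^2$ I get $A \ge \int_0^{e^{-\alpha}}\bigl(k-\binom{k}{2}w\bigr)\,dw = ke^{-\alpha}\bigl(1-\tfrac{(k-1)e^{-\alpha}}{4}\bigr)$. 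Writing $\theta := (k-1)e^{-\alpha}$, which satisfies $0 \le \theta \le 1-e^{-\alpha} < 1$ since $k\le e^{\alpha}$, these combine to
\[
\frac{1}{A}\int_{1-e^{-\alpha}}^1 \frac{1-t^k}{(1-t)^{1+1/\alpha}}\,dt \;\le\; \frac{\alpha e}{\alpha-1}\cdot\frac{\,1 - \frac{(\alpha-1)\theta}{3(2\alpha-1)}\,}{\,1-\theta/4\,},
\]
so it remains to prove $\frac{\alpha}{\alpha-1}\cdot\frac{1 - \frac{(\alpha-1)\theta}{3(2\alpha-1)}}{1-\theta/4} \le \frac{7}{3}$ for every $\alpha\ge 2$ and $\theta\in[0,1-e^{-\alpha}]$.

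I would close with a short case split on $\alpha$. Since $\tfrac14 > \tfrac{\alpha-1}{3(2\alpha-1)}$ for $\alpha\ge2$, the quotient in $\theta$ is increasing, so the worst case is $\theta = 1-e^{-\alpha}$ (and $\theta\le1$ always). For $\alpha\ge 7/3$ one has $\tfrac{\alpha}{\alpha-1}\le\tfrac74$ and, bounding the numerator by $1$ and the denominator by $1-\theta/4\ge\tfrac34$, the product is at most $\tfrac74\cdot\tfrac43 = \tfrac73$. For $2\le\alpha<7/3$ one clears denominators; after rearrangement the claim reduces to $1-e^{-\alpha} \le \Theta(\alpha) := \frac{4(4\alpha-7)(2\alpha-1)}{(\alpha-1)(10\alpha-7)}$, and this holds because $\Theta(\alpha)-\tfrac{12}{13}$ has the sign of $(\alpha-2)\bigl(\alpha-\tfrac{35}{74}\bigr)$ (hence is $\ge 0$ for $\alpha\ge2$), while $1-e^{-\alpha} \le 1-e^{-7/3} < \tfrac{12}{13}$ on this range.

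I expect the arithmetic of the two truncated-binomial bounds to be the delicate point: the constant $\tfrac{7e}{3}$ is essentially tight (near $\alpha=2$, $k=7$ the ratio is about $2.28\,e$), so neither estimate can be relaxed — replacing them by the softer $A \ge \tfrac34 ke^{-\alpha}$ or $k-p_0 \ge \tfrac12\binom{k}{2}e^{-\alpha}$ already overshoots $\tfrac{7e}{3}$ at $\alpha=2$. The hypothesis $\alpha\ge\ln k$ enters exactly twice: it validates the bracketing inequalities for $(1-e^{-\alpha})^k$ and $(1-w)^k$, and it forces $\theta\le 1-e^{-\alpha}$, which is what makes the final single-variable inequality hold.
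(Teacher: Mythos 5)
Your argument is correct, and it takes a genuinely different route from the paper. The paper expands both integrals into alternating binomial sums, $\int_{1-e^{-\alpha}}^1\frac{1-t^k}{(1-t)^{1+1/\alpha}}dt = e\sum_{\ell}\binom{k}{\ell}\frac{(-1)^{\ell-1}}{\ell-1/\alpha}e^{-\alpha\ell}$ and similarly for the right-hand side, groups consecutive terms in pairs, and proves a pairwise domination inequality; the hypothesis $\alpha\ge\ln k$ enters through a claim that consecutive terms decay by a factor of at least $4$, and odd $k$ needs a separate leftover term. You instead keep the integrals intact, use convexity of $\frac{1-t^k}{1-t}=\sum_{j=0}^{k-1}t^j$ to replace the numerator by its chord on $[1-e^{-\alpha},1]$, evaluate the two resulting power integrals in closed form, lower-bound the right-hand integral and $k-p_0$ by truncated alternating binomial expansions (valid because $\alpha\ge\ln k$ gives $ke^{-\alpha}\le 1$, so the terms are monotone), and reduce everything to the scalar inequality $\frac{\alpha}{\alpha-1}\cdot\frac{1-\frac{(\alpha-1)\theta}{3(2\alpha-1)}}{1-\theta/4}\le\frac73$ on $\theta=(k-1)e^{-\alpha}\in[0,1-e^{-\alpha}]$, $\alpha\ge2$; I checked the delicate pieces (the chord bound, $k-p_0\ge\frac{k(k-1)}{3}e^{-\alpha}$, $A\ge ke^{-\alpha}(1-\theta/4)$, monotonicity in $\theta$, the case split at $\alpha=7/3$, and the factorization $\Theta(\alpha)-\frac{12}{13}=\frac{296(\alpha-2)(\alpha-\frac{35}{74})}{13(\alpha-1)(10\alpha-7)}$ with $1-e^{-7/3}<\frac{12}{13}$) and they all hold. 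Your approach buys a uniform treatment of all $k$ (no parity split) and an explicit, purely elementary endgame; the paper's term-by-term comparison avoids the convexity/chord step and any closed-form integration but needs the factor-$4$ decay claim and the even/odd bookkeeping. Two cosmetic caveats: when dividing your upper bound by the lower bound on $A$ you should note both are nonnegative (immediate since $\theta\le1$ and $\frac{\alpha-1}{3(2\alpha-1)}<\frac14$), and your tightness remark ``about $2.28e$'' describes the value of your own chain of estimates at $\alpha=2,k=7$ (the true ratio there is closer to $2.14e$) --- harmless, since it is not used in the proof.
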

\begin{proof}
From the binomial expansion of $(1-t)^k$, we obtain
\[
    1 - (1-t)^k 
    = 1 - \sum_{\ell = 0}^k \binom{k}{\ell} \, (-1)^\ell \, t^\ell
    = \sum_{\ell = 1}^k \binom{k}{\ell} \, (-1)^{\ell-1} \, t^\ell.
\]
We can thus derive
\begin{align}
\int_{1-e^{-\alpha}}^1 \frac{1 - t^k}{(1 - t)^{1+1/\alpha}} \, dt
& 
= \int_0^{e^{-\alpha}} \frac{1 - (1 - t)^k}{t^{1 + 1/\alpha}} \, dt 
%\nonumber \\&
= \int_0^{e^{-\alpha}} \sum_{\ell = 1}^k \binom{k}{\ell} \, (-1)^{\ell-1} \, t^{\ell - 1 -1/\alpha} dt \nonumber 
\\&
= \sum_{\ell = 1}^{k} \left[ \binom{k}{\ell} \, (-1)^{\ell - 1} \, \int_0^{e^{-\alpha}} t^{\ell - 1 -1/\alpha} dt \right] 
\nonumber \\&
= e \cdot \sum_{\ell = 1}^k \binom{k}{\ell} \, \frac{(-1)^{\ell - 1}}{\ell - 1/\alpha} \, e^{-\alpha \ell}. \label{eq:setarr:sec1}
\end{align}
Through a similar derivation, we can obtain
\begin{align}
    \int_{1-e^{-\alpha}}^1 \frac{1 - t^k}{1 - t} \, dt 
    &
    = \int_{0}^{e^{-\alpha}} \frac{1 - (1-t)^k}{t} \, dt
    = \int_0^{e^{-\alpha}} \sum_{\ell = 1}^k \binom{k}{\ell} \, (-1)^{\ell-1} \, t^{\ell-1} \, dt
    \nonumber\\&
    = \sum_{\ell = 1}^k \binom{k}{\ell} \, \frac{(-1)^{\ell- 1}}{\ell} \, e^{-\alpha \ell}. \label{eq:setarr:sec2}
\end{align}

Assume for now that $k$ is an even number.
We can rewrite \Cref{eq:setarr:sec1,eq:setarr:sec2} as follows:
\begin{align*}
    \int_{1-e^{-\alpha}}^1 \frac{1 - t^k}{(1 - t)^{1+1/\alpha}} \, dt
    & = e \cdot  \sum_{\ell = 1}^{k/2} \left[ \binom{k}{2 \ell - 1} \, \frac{e^{-\alpha (2\ell - 1)}}{2\ell - 1 - 1/\alpha} - \binom{k}{2 \ell} \, \frac{e^{-\alpha (2\ell)}}{2\ell - 1/\alpha} \right]; \\
    \int_{1-e^{-\alpha}}^1 \frac{1 - t^k}{1 - t} \, dt 
    & = \sum_{\ell = 1}^{k/2} \left[ \binom{k}{2 \ell - 1} \, \frac{e^{-\alpha (2\ell - 1)}}{2\ell - 1} - \binom{k}{2 \ell} \, \frac{e^{-\alpha (2\ell)}}{2\ell} \right].
\end{align*}
Therefore, to prove the lemma, it suffices to show that, for every odd $\ell = 1, 3, \ldots, \frac{k}{2}-1$,
\begin{equation} \label{eq:setarr:sec3}
\binom{k}{\ell} \, \frac{e^{-\alpha \ell}}{\ell - 1/\alpha} - \binom{k}{\ell+1} \, \frac{e^{-\alpha(\ell+1)}}{\ell + 1 - 1/\alpha}
\leq \frac{7}{3} \cdot \left[ \binom{k}{\ell} \, \frac{e^{-\alpha \ell}}{\ell} - \binom{k}{\ell+1} \, \frac{e^{-\alpha(\ell+1)}}{\ell + 1} \right].
\end{equation}
The following claim will be helpful to prove the above inequality.
\begin{claim} \label{claim:setarr:sec}
If $\alpha \geq \ln k$, for any $\ell = 1, \ldots, k - 1$, we have
\[
\binom{k}{\ell} \cdot \frac{e^{-\alpha \ell}}{\ell} > 4 \cdot \binom{k}{\ell+1} \cdot \frac{e^{-\alpha(\ell+1)}}{\ell+1}.
\]
\end{claim}
\begin{proof}
Observe that
\[
\frac{\binom{k}{\ell} \cdot \frac{e^{-\alpha \ell}}{\ell} }{\binom{k}{\ell+1} \cdot \frac{e^{-\alpha(\ell+1)}}{\ell+1}}
 = \frac{e^\alpha (\ell + 1)^2}{(k - \ell) \ell}
\geq \frac{k (\ell + 1)^2}{(k - \ell) \ell}
> \frac{(\ell + 1)^2}{\ell}
\geq 4,
\]
where the first inequality follows from $\alpha \geq \ln k$ and the second from $0 < \ell < k$.
\end{proof}

We now prove Inequality~\eqref{eq:setarr:sec3}:
\begin{align*}
&
\binom{k}{\ell} \, \frac{e^{-\alpha \ell}}{\ell - 1/\alpha} - \binom{k}{\ell+1} \, \frac{e^{-\alpha(\ell+1)}}{\ell + 1 - 1/\alpha} 
\\&
\quad\stackrel{(a)}{\leq} \frac{\alpha}{\alpha - 1} \, \binom{k}{\ell} \, \frac{e^{-\alpha \ell}}{\ell} - \binom{k}{\ell+1} \, \frac{e^{-\alpha(\ell+1)}}{\ell + 1} 
\\&
\quad\stackrel{(b)}{\leq} 2 \, \binom{k}{\ell} \, \frac{e^{-\alpha \ell}}{\ell} - \binom{k}{\ell+1} \, \frac{e^{-\alpha(\ell+1)}}{\ell + 1}
\\&
\quad= 2 \, \left[ \binom{k}{\ell} \, \frac{e^{-\alpha \ell}}{\ell} - \binom{k}{\ell+1} \, \frac{e^{-\alpha(\ell+1)}}{\ell + 1} \right] + \binom{k}{\ell+1} \, \frac{e^{-\alpha(\ell+1)}}{\ell + 1}
\\&
\quad= \left(2 + \frac{\binom{k}{\ell+1} \, \frac{e^{-\alpha(\ell+1)}}{\ell + 1}}{\binom{k}{\ell} \, \frac{e^{-\alpha \ell}}{\ell} - \binom{k}{\ell+1} \, \frac{e^{-\alpha(\ell+1)}}{\ell + 1}} \right) \, \left[\binom{k}{\ell} \, \frac{e^{-\alpha \ell}}{\ell} - \binom{k}{\ell+1} \, \frac{e^{-\alpha(\ell+1)}}{\ell + 1} \right]
\\&
\quad\stackrel{(c)}{\leq}\left(2 + \frac{1}{3}\right) \, \left[ \binom{k}{\ell} \, \frac{e^{-\alpha \ell}}{\ell} - \binom{k}{\ell+1} \, \frac{e^{-\alpha(\ell+1)}}{\ell + 1} \right],
\end{align*}
where $(a)$ comes from that $\ell - \frac{1}{\alpha} \geq \ell(1 - \frac{1}{\alpha})$ for $\ell \geq 1$ and $\alpha > 0$, $(b)$ from that $\alpha \geq 2$, and $(c)$ from Claim~\ref{claim:setarr:sec}.

Lastly, it remains to consider the case where $k$ is an odd number.
In this case, \Cref{eq:setarr:sec1,eq:setarr:sec2} can be rewritten as follows:
\begin{align*}
    \int_{1-e^{-\alpha}}^1 \frac{1 - t^k}{(1 - t)^{1+1/\alpha}} \, dt
    & = e \cdot  \sum_{\ell = 1}^{\floor{k/2}} \left[ \binom{k}{2 \ell - 1} \, \frac{e^{-\alpha (2\ell - 1)}}{2\ell - 1 - 1/\alpha} - \binom{k}{2 \ell} \, \frac{e^{-\alpha (2\ell)}}{2\ell - 1/\alpha} \right] + \binom{k}{k} \frac{e^{-\alpha k}}{k-1/\alpha}; \\
    \int_{1-e^{-\alpha}}^1 \frac{1 - t^k}{1 - t} \, dt 
    & = \sum_{\ell = 1}^{\floor{k/2}} \left[ \binom{k}{2 \ell - 1} \, \frac{e^{-\alpha (2\ell - 1)}}{2\ell - 1} - \binom{k}{2 \ell} \, \frac{e^{-\alpha (2\ell)}}{2\ell} \right] + \binom{k}{k} \frac{e^{-\alpha k}}{k}.
\end{align*}
Note that
\[
\binom{k}{k} \, \frac{e^{-\alpha k}}{k - 1/\alpha} 
\leq \frac{\alpha}{\alpha - 1} \, \binom{k}{k} \, \frac{e^{-\alpha k}}{k}
< \frac{7}{3} \, \binom{k}{k} \, \frac{e^{-\alpha k}}{k},
\]
where the inequalities are due to that $k \geq 1$ and $\alpha \geq 1$.
Together with Inequality~\eqref{eq:setarr:sec3}, this completes the proof of the lemma.
\end{proof}

Finally, we are now ready to prove \Cref{lem:set:tech}. From Inequality~\eqref{eq:setarr:factor1} and Equation~\eqref{eq:setarr:factor2} and Lemma~\ref{lem:setarr:secbnd}, if $\alpha \geq \max\{2, \ln k\}$, we have
\begin{align*}
\Pr[v \in C] & \leq x_v \cdot \left[ \alpha \, \int_0^{1-e^{-\alpha}} \frac{1 - t^k}{1 - t} \, dt + \frac{7\alpha}{3} \, \int_{1 - e^{-\alpha}}^1 \frac{1 - t^k}{1 - t} \, dt \right]
%\\ &
\leq x_v \cdot \frac{7\alpha}{3} \cdot \int_0^1 \frac{1- t^k}{1-t} \, dt 
\\ & 
= \frac{7\alpha}{3} \cdot H_k \cdot x_v.
\end{align*}

\section{A 1.8-Competitive Rounding Scheme for Online Edge Cover under Edge Arrivals} \label{sec:edge}
In this section, we focus on a special case of set cover where the maximum subset size is 2, also known as \emph{edge cover}.
Here we redefine this problem in the graph terminology to adhere to the convention.
In the edge cover problem, we are given a (possibly non-bipartite, multi) graph $G = (V, E)$ with edge cost $c : E \to \R_{\geq 0}$, we need to find a subset of edges $C \subseteq E$ that covers every vertex (i.e., for every $v \in V$, there exists $e \in C$ such that $v$ is an endpoint of $e$) at minimum total cost $\sum_{e \in C} c(e)$.
Under the edge arrival model, the adversary determines the edge set $E$ and a fractional edge cover $x \in \R^E_{\geq 0}$, and one at a time, an edge $e \in E$ and its solution value $x_e$ are fed to the rounding scheme.

We argued in \Cref{sec:intro} that there is a simple $2$-competitive rounding scheme.
When we use the online rounding scheme presented in Section~\ref{sec:setarr}, the competitive ratio that we could obtain by optimizing $\alpha$ in Equation~\eqref{eq:setarr:factor2} is at least $3.42$ (with $\alpha \approx 1.725$); even when we refine the analysis specifically for edge cover, we could obtain a factor at least $1.894$ (with $\alpha \approx 1.102$).\footnote{We solve the following optimization problem: \[\min_{\alpha \geq 0} \max_{r_1, r_2 \in [0, 1]} \int_0^\infty (1 - (1 - r_1^\alpha \cdot e^{-(r_1/\alpha)z}) \cdot (1 -  r_2^\alpha \cdot e^{-(r_2/\alpha)z})) dz.\]}

We present in this section a $1.8$-competitive rounding scheme for online edge cover under edge arrivals.
In particular, we will show that the online rounding scheme satisfies that each edge $e \in E$ is inserted into its solution with probability at most $1.8 \, x_e$.
To this end, let $\tau > 0$ be a sufficiently small constant to be chosen later, and let us assume that $x_e \leq \tau$ for every $e \in E$.
This assumption is without loss of generality since otherwise we can split the edge into copies each of which has its fractional solution value no greater than $\tau$ and regard as these copies arrive one-by-one in an arbitrary order.
Note that this process blows up the input size only by a factor $O(\frac{1}{\tau})$.

Before describing the scheme, we define notation that will be used.
For each vertex $v \in V$, let $\delta(v)$ be the edges incident with $v$.
As the edges are fed to the scheme in an online manner, for any two distinct edges $e$ and $e'$, we denote by $e \prec e'$ if $e$ precedes $e'$ in the arrival order.
Then, for each $v \in V$ and $e \in E$, let $\delta_e(v)$ denote the edges that are incident with $v$ and precede $e$, i.e., $\delta_e(v) = \delta(v) \cap \{e' \in E \mid e' \prec e\}$.
Finally, we may use $e$ to denote both an edge and Euler's number $2.718\ldots$, but it will be clear from the context.

\paragraph{Scheme Description}
Let $p \in (0, 1]$ be a parameter; specifically, we choose $p := 0.46$.
Let $c := \frac{e^{p^2} - 1}{p} + 1 > 1$, and let $g : [0, 1] \to [0, 1]$ be a function defined as follows:
\begin{equation} \label{eq:edcv:dist}
    g(z) := \frac{\ln c}{p} \cdot \frac{1}{p + (1-p) \cdot c^{-z}}.
\end{equation}
We highlight that $g(z) \geq 0$ for all $z \in [0, 1]$ and $\int_0^1 g(z) dz = 1$, implying that $g$ is a feasible probability density function.
Note also that $g$ is increasing.

For each vertex $v \in V$, the scheme samples $Z_v \in [0, 1]$ independently and identically from the distribution whose probability density function is $g$.
Let $C \subseteq E$ be the solution maintained by the scheme, initially $C := \emptyset$.
Upon arrival of an edge $e \in E$ with $x_e \leq \tau$, let $\ell_v$ be the fraction that $v \in e$ is covered until now, i.e., $\ell_v = \sum_{e' \in \delta_e (v)} x_{e'}$. 
For each endpoint $v$ of $e$, if $\ell_v < Z_v \leq \ell_v + x_e$, with probability $p$, $v$ marks $e$ unconditionally, and otherwise with probability $1-p$, so does $v$ only when $v$ is uncovered by $C$.
The scheme then inserts $e$ into $C$ if and only if $e$ is marked by one endpoint.
We provide in~\Cref{alg:edcv} a pseudocode of this rounding scheme.

\begin{algorithm}
    \caption{1.8-competitive rounding scheme for online edge cover under edge arrivals} \label{alg:edcv}
    \KwIn{Vertices $V$ and an adversary feeding $E$ and feasible $x \in [0, \tau]^E$}
    \KwOut{An edge cover $C \subseteq E$}
    $C \gets \emptyset$, $p \gets 0.46$ \;
    Let $\calD$ be the distribution on $[0, 1]$ whose p.d.f.\ is $g$ defined in \Cref{eq:edcv:dist}\;
    
    \For{each vertex $v \in V$}{
        Sample $Z_v \in [0, 1]$ independently from $\calD$ \;
    }
    \For{each edge $e \in E$ fed by the adversary along with $x_e \leq \tau$}{
        \For{each endpoint $v$ of $e$}{
            $\ell_v \gets \sum_{e' \in \delta_e(v)} x_{e'}$\;
            \uIf{$\ell_v < Z_v \leq \ell_v + x_e$}{
                \textbf{with} probability $p$\;
                \Indp Mark $e$ \;
                \Indm\textbf{otherwise}\;
                \Indp Mark $e$ only if $v$ is uncovered \;
            }
        }
        \uIf{$e$ is marked} {
            $C \gets C \cup \{e\}$ \;
        }
    }
    \Return{$C$}\;
\end{algorithm}

\paragraph{Analysis}
Observe that the scheme returns a feasible edge cover.
Indeed, for every $v \in V$, at the moment when $\ell_v < Z_v \leq \ell_v + x_e$ is satisfied for some $e \in \delta(v)$, $v$ marks $e$ if $v$ is yet uncovered.
For the competitive ratio of this rounding scheme, we will show the following lemma.

\begin{lemma} \label{lem:ec:main}
    For any constant $\varepsilon > 0$, there exists a constant $\tau > 0$ such that the rounding scheme satisfies that, for every $e \in E$ and $v \in e$,
    \[
    \Pr[v \text{ marks } e] \leq \left(\frac{\ln c}{p} + \frac{\varepsilon}{2} \right) \, x_e.
    \]
\end{lemma}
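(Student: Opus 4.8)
Fix an edge $e$ and an endpoint $v$ of $e$ once and for all, write $\ell := \sum_{e'\in\delta_e(v)} x_{e'}$ for the fraction by which $v$ is already covered by the fractional solution when $e$ arrives, and assume $G$ is loopless (so every edge has two distinct endpoints). Let $A$ be the event $\{\ell < Z_v \le \ell+x_e\}$ and $B$ the event that $v$ is uncovered by $C$ at the instant $e$ is processed. Conditioning on the private probability-$p$ coin that \Cref{alg:edcv} flips for the pair $(v,e)$ — which is independent of $A$ and $B$ — gives $\Pr[v\text{ marks }e] = p\,\Pr[A] + (1-p)\,\Pr[A\cap B]$. Since $g$ is increasing and smooth on $[0,1]$, $\Pr[A] = \int_\ell^{\ell+x_e} g \le g(\ell)\,x_e + L\tau\,x_e$, where $L := \max_{[0,1]}|g'|$ is a finite constant depending only on $p$. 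Hence the lemma reduces to the single inequality
\[
    \Pr[A\cap B] \le c^{-\ell}\cdot\Pr[A],
\]
because plugging it in gives $\Pr[v\text{ marks }e] \le (g(\ell)+L\tau)\,x_e\,\big(p+(1-p)c^{-\ell}\big)$, and since by the definition of $g$ and $c$ one has $g(\ell)\,\big(p+(1-p)c^{-\ell}\big) = \tfrac{\ln c}{p}$ and $0 < p+(1-p)c^{-\ell}\le 1$ (as $c>1$), the right side is at most $\big(\tfrac{\ln c}{p}+L\tau\big)\,x_e$; choosing $\tau \le \varepsilon/(2L)$ then proves the claim.

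To establish the displayed inequality, observe that on $A$ the threshold $Z_v$ exceeds every interval that the edges of $\delta_e(v)$ allot to $v$, so on $A$ the vertex $v$ marks none of them; consequently, on $A$, $v$ can be covered before $e$ only if some $f=(v,u_f)\in\delta_e(v)$ was marked by its \emph{other} endpoint $u_f$. For each such $f$ let $D_f$ be the event that $Z_{u_f}\in\big(\ell_{u_f,f},\,\ell_{u_f,f}+x_f\big]$, with $\ell_{u_f,f}:=\sum_{e'\in\delta_f(u_f)}x_{e'}$, \emph{and} the coin for $(u_f,f)$ lands on its probability-$p$ side; if $D_f$ occurs then $u_f$ marks $f$ unconditionally, so $f$ enters $C$ and $v$ is covered. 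Therefore $A\cap B \subseteq A\cap\bigcap_{f\in\delta_e(v)}\overline{D_f}$, and since every $\overline{D_f}$ is a function of $Z_{u_f}$ (with $u_f\ne v$) and of a coin private to $(u_f,f)$, it is independent of $A$, so $\Pr[A\cap B] \le \Pr[A]\cdot\Pr\big[\bigcap_{f\in\delta_e(v)}\overline{D_f}\big]$.

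It then remains to bound $\Pr\big[\bigcap_{f\in\delta_e(v)}\overline{D_f}\big]$ by $c^{-\ell}$. Group the edges by their other endpoint: for a vertex $u$ put $J_u := \{f\in\delta_e(v):u_f=u\}$; for distinct $u$ the families $\{\overline{D_f}:f\in J_u\}$ depend on disjoint collections of independent primitives ($Z_u$ and the coins of pairs $(u,\cdot)$), so the probability factors as $\prod_u \Pr\big[\bigcap_{f\in J_u}\overline{D_f}\big]$. Within one block the intervals $\big(\ell_{u,f},\ell_{u,f}+x_f\big]$ over $f\in J_u$ are pairwise disjoint (consecutive slices of $u$'s prefix sums), so conditioning on $Z_u$ yields $\Pr\big[\bigcap_{f\in J_u}\overline{D_f}\big] = 1 - p\,\Pr[Z_u\in S_u]$ with $S_u := \bigcup_{f\in J_u}\big(\ell_{u,f},\ell_{u,f}+x_f\big]$ of total length $\sum_{f\in J_u}x_f$. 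Using $g(z)\ge g(0)=\tfrac{\ln c}{p}$ on $[0,1]$, the inequality $1-t\le e^{-t}$, and $\sum_u\sum_{f\in J_u}x_f = \sum_{f\in\delta_e(v)}x_f = \ell$, we obtain $\prod_u\big(1-p\,\Pr[Z_u\in S_u]\big) \le \exp\!\big(-\tfrac{\ln c}{p}\cdot p\cdot\ell\big) = c^{-\ell}$, which is exactly what was needed.

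The part I expect to be delicate is the probabilistic bookkeeping in the middle step. The naive route — recursing over the edges incident to $v$ and carrying the probability that $v$ is still uncovered — stalls because the adaptive (probability-$1-p$) part of a neighbor's marking rule is correlated in the wrong direction with the conditioning that $v$'s threshold is large (it biases the coverage status of $v$'s neighbors). The fix above sidesteps this by discarding that adaptivity and retaining only the non-adaptive events $D_f$, which live on pairwise-disjoint bundles of fresh randomness and hence multiply cleanly; the slack this loses is precisely what turns the trivial factor $2$ into $1.8$. A genuinely subtle point — the one most worth pinning down carefully — is the bookkeeping when a neighbor $u_f$ is already heavily covered at $f$'s arrival, so that its slot for $f$ falls (partly) outside the support $[0,1]$ of $Z_{u_f}$ and the clean identity $\sum_u|S_u|=\ell$ no longer directly yields the lower bound on $\sum_u\Pr[Z_u\in S_u]$; the remaining ingredients (the Lipschitz estimate on $\Pr[A]$, the bound $g\ge g(0)$, the loopless reduction) are routine.
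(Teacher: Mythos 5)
Your proposal follows essentially the same route as the paper's proof of \Cref{lem:ec:main}: the same decomposition $\Pr[v\text{ marks }e]=p\,\Pr[A]+(1-p)\,\Pr[A\cap B]$, the same relaxation of the adaptive marking rule to the non-adaptive events $D_f$ (the paper's ``unconditional'' markings), the same factorization over neighbors using disjointness of slots, $g\ge g(0)$ and $1-t\le e^{-t}$ to get the factor $e^{-p\,g(0)\,\ell_v}=c^{-\ell_v}$, and the same closing step combining the exact identity $g(\ell)\,\bigl(p+(1-p)c^{-\ell}\bigr)=\frac{\ln c}{p}$ (the paper's \Cref{lem:ec:factor}) with a $\tau$-dependent error term (your Lipschitz bound plays the role of the paper's \Cref{lem:ec:eps}). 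All of the bookkeeping you do carry out (independence of the $D_f$ from $A$, blockwise factorization, the coin conditioning) is correct and matches the paper.

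The step you explicitly leave open is, however, not a routine technicality: it is the one genuinely unjustified point of this argument, and the paper's own proof glosses over it in exactly the same way. Your inequality $\Pr[Z_u\in S_u]\ge g(0)\,|S_u|$ (the paper's assertion that $u$ unconditionally marks $e'$ with probability at least $p\,g(0)\,x_{e'}$) requires every slot $(\ell_{u,f},\ell_{u,f}+x_f]$ to lie inside $[0,1]$, i.e., that $u$'s fractional prefix coverage has not already reached $1$ when the shared edges arrive. Nothing in the edge-arrival model enforces this: the adversary may first saturate $u$ with weight-$1$ worth of edges to other vertices and only afterwards deliver the edges between $u$ and $v$. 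Those edges then contribute fully to $\ell_v$ while carrying zero unconditional-marking probability at $u$, so the claimed bound $\Pr[A\cap B]\le c^{-\ell_v}\Pr[A]$ fails. Concretely, if $\ell_v=1-x_e$ is built entirely from such ``dead'' edges, then conditioned on $A$ the vertex $v$ is uncovered with probability $1$, and $\Pr[v\text{ marks }e]\approx g(1)\,x_e\approx 1.10\,x_e$, exceeding the claimed $\bigl(\frac{\ln c}{p}+\frac{\varepsilon}{2}\bigr)x_e\approx 0.90\,x_e$ for $p=0.46$. So finishing the proof requires either an additional hypothesis (or preprocessing) guaranteeing that each vertex's prefix coverage stays at most $1$, or a different accounting of $\ell_v$ that only credits weight whose neighbor slots are alive; this caveat applies verbatim to the paper's proof as written, so relative to the paper you have reproduced the argument faithfully and correctly isolated its weak point rather than introduced a new one.
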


\noindent
As we chose $p := 0.46$, we have $\frac{\ln c}{p} < 0.8992$.
Recall that the rounding scheme inserts an edge $e$ into $C$ when $e$ is marked.
Therefore, due to Lemma~\ref{lem:ec:main}, for $\varepsilon = 0.001$, we can choose $\tau > 0$ such that, for every $e \in E$, the rounding scheme inserts $e$ into $C$ with probability at most
\[
    \sum_{v \in e} \Pr[v \text{ marks } e]
    \leq 2 \, \left(\frac{\ln c}{p} + \frac{\varepsilon}{2} \right) \, x_e
    < 1.8 \, x_e,
\]
completing the proof of \Cref{thm:intro:edgecvr}.

\begin{proof}[Proof of \Cref{lem:ec:main}]
Let $\ell_v$ be the value of $\ell_v$ at the moment of the arrival of $e$.
Let $N_e(v)$ be the neighbors of $v$ before $e$ arrives, i.e., $N_e(v) = \{ u \in V \mid \exists (u, v) \in \delta_e(v) \}$.
We remark that, for every $u \in N_e(v)$, $\delta_e(u) \cap \delta_e(v)$ contains all the parallel copies of edge $(u, v)$ that have arrived so far.

Observe that $v$ marks $e$ if and only if
\begin{enumerate}
    \item \label{ev:ec:m1} $\ell_v < Z_v \leq \ell_v + x_e$ and
    \item \label{ev:ec:m2} either one of the following events is true:
    \begin{enumerate}
        \item \label{ev:ec:m2a} $v$ unconditionally marks $e$ or
        \item \label{ev:ec:m2b} for all $u \in N_e(v)$, $u$ does not mark any of $\delta_e(u) \cap \delta_e(v)$.
    \end{enumerate}
\end{enumerate}
Note that, if every $u \in N_e(v)$ does not unconditionally mark any $e' \in \delta_e(u) \cap \delta_e(v)$, this implies Event~\eqref{ev:ec:m2b}.
Therefore, when we define $\calE$ as the event such that
\begin{enumerate}
    \item \label{ev:ec:n1} $\ell_v < Z_v \leq \ell_v + x_e$ and
    \item \label{ev:ec:n2} either one of the following events is true:
    \begin{enumerate} [(i)]
        \item \label{ev:ec:n2a} $v$ unconditionally marks $e$ or
        \item \label{ev:ec:n2b} for all $u \in N_e(v)$, $u$ does not unconditionally mark any of $\delta_e(u) \cap \delta_e(v)$,
    \end{enumerate}
\end{enumerate}
we have $\Pr[v \text{ marks } e] \leq \Pr[\calE]$.
Observe also that, for each $u \in N_e(v)$, marking an edge $e' \in \delta_e(u) \cap \delta_e(v)$ is a disjoint event from marking another $e'' \in \delta_e(u) \cap \delta_e(v) \setminus \{e'\}$.
Moreover, since $g$ is an increasing function, the probability that $u$ unconditionally marks $e' \in \delta_e(u) \cap \delta_e(v)$ is at least $p \, g(0) \, x_{e'}$.
We can therefore deduce that the probability that $u$ does not unconditionally mark any of $\delta_e(u) \cap \delta_e(v)$ is bounded from above by
\[
1 - \sum_{e' \in \delta_e(u) \cap \delta_e(v)} p \, g(0) \, x_{e'}
\leq \exp \left( -p \, g(0) \cdot \sum_{e' \in \delta_e(u) \cap \delta_e(v)} x_{e'} \right).
\]
Since the event that $u \in N_e(v)$ does not unconditionally mark any of $\delta_e(u) \cap \delta_e(v)$ is independent from that of other $u' \in N_e(v) \setminus \{u\}$, we obtain that
\[
    \Pr[\text{Event~\eqref{ev:ec:n2b}}] 
    \leq \prod_{u \in N_e(v)} \exp \left( -p \, g(0) \cdot \sum_{e' \in \delta_e(u) \cap \delta_e(v)} x_{e'} \right)
    = e^{- p \, g(0) \, \ell_v},
\]
where we use the fact that $\ell_v = \sum_{u \in N_e(v)} \sum_{e' \in \delta_e(u) \cap \delta_e(v)} x_{e'}$.
From this bound, we can see
\begin{equation} \label{eq:ec:factor}
    \Pr[v \text{ marks } e] 
    \leq \Pr[\calE]
    \leq \left( \int_{\ell_v}^{\ell_v + x_e} g(z) dz \right) \cdot \left( p + (1-p) \cdot e^{-p \, g(0) \, \ell_v} \right).
\end{equation}

The next two claims will be used to derive the factor $(\frac{\ln c}{p} + \frac{\varepsilon}{2})$ from Inequality~\eqref{eq:ec:factor}. Recall the definition of $g$:
\[
    g(z) := \frac{\ln c}{p} \cdot \frac{1}{p + (1-p) \cdot c^{-z}} 
    \;\;\text{for every $z \in [0, 1]$,}
\]
where $p \in (0, 1]$ and $c = \frac{e^{p^2}-1}{p} + 1 > 1$.
\begin{claim} \label{lem:ec:eps}
    For any constant $\varepsilon > 0$, there exists a constant $\tau > 0$ satisfying that, for any $\ell \in [0, 1]$ and $y \in (0, \tau]$ such that $\ell + y \leq 1$, 
    \[
    \frac{1}{y} \cdot \int_\ell^{\ell+y} g(z)dz \leq g(\ell) + \frac{\varepsilon}{2}.
    \]
\end{claim}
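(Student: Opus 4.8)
The plan is to exploit that $g$ is a smooth, strictly increasing function on the \emph{compact} interval $[0,1]$, so its derivative is uniformly bounded there; a one-line estimate then controls by how much the average of $g$ over a short interval can exceed its value at the left endpoint. Everything reduces to choosing $\tau$ small relative to $1/M$, where $M$ is that uniform bound.

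First I would record that $g$ is continuously differentiable on all of $[0,1]$, endpoints included. This is immediate from the explicit formula, since the denominator $p+(1-p)c^{-z}$ is bounded below by $p+(1-p)c^{-1}>0$ on $[0,1]$ (using $c>1$ and $p\in(0,1)$). Differentiating,
\[
    g'(z) = \frac{(\ln c)^2 (1-p)}{p}\cdot\frac{c^{-z}}{\bigl(p+(1-p)c^{-z}\bigr)^2} > 0 ,
\]
which is continuous on $[0,1]$; hence $M := \max_{z\in[0,1]} g'(z)$ is a finite, strictly positive constant depending only on the fixed parameter $p = 0.46$.

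Next, fix $\varepsilon>0$ and set $\tau := \min\{1,\varepsilon/M\}>0$. For any $\ell\in[0,1]$ and $y\in(0,\tau]$ with $\ell+y\le 1$, the fundamental theorem of calculus together with $g'\le M$ gives $g(z)\le g(\ell)+M(z-\ell)$ for all $z\in[\ell,\ell+y]$, so
\[
    \frac{1}{y}\int_\ell^{\ell+y} g(z)\,dz \;\le\; \frac{1}{y}\int_\ell^{\ell+y}\bigl(g(\ell)+M(z-\ell)\bigr)dz \;=\; g(\ell)+\frac{My}{2} \;\le\; g(\ell)+\frac{M\tau}{2} \;\le\; g(\ell)+\frac{\varepsilon}{2},
\]
which is exactly the asserted inequality. (The side condition $\ell+y\le 1$ already forces $y\le 1$, so cutting $\tau$ off at $1$ costs nothing and only serves to keep everything inside the domain of $g$.)

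There is essentially no obstacle here; the only point worth a sentence is that $g$ is $C^1$ up to the endpoint $z=1$, which the lower bound on the denominator above makes clear. If one prefers to avoid computing $g'$ explicitly, the same conclusion follows just as quickly from uniform continuity of $g$ on the compact set $[0,1]$ — choose $\tau$ so that $g(z')-g(z)\le\varepsilon/2$ whenever $0\le z'-z\le\tau$ — combined with the monotonicity of $g$, which already bounds the average of $g$ over $[\ell,\ell+y]$ by $g(\ell+y)$.
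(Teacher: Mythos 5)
Your proposal is correct. It takes a slightly different route from the paper: you bound the average by the pointwise estimate $g(z)\le g(\ell)+M(z-\ell)$, where $M:=\max_{z\in[0,1]}g'(z)$ is finite because $g$ is $C^1$ on the compact interval $[0,1]$ (the denominator being bounded below by $p+(1-p)c^{-1}>0$), and then choose $\tau=\min\{1,\varepsilon/M\}$; this is a soft Lipschitz-plus-compactness argument with a non-explicit constant. The paper instead first uses monotonicity to bound the average by $g(\ell+y)$ and then bounds the increment $g(\ell+y)-g(\ell)$ by a direct algebraic computation,
\[
g(\ell+y)-g(\ell)\;\le\;\frac{\ln c}{p}\cdot\frac{1-p}{\bigl(p+\tfrac{1-p}{c}\bigr)^2}\cdot\bigl(c^{\tau}-1\bigr),
\]
which yields an explicit admissible threshold $\tau\le\log_c\bigl(1+\tfrac{p(p+(1-p)/c)^2}{2(1-p)\ln c}\,\varepsilon\bigr)$. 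Your closing remark (uniform continuity of $g$ on $[0,1]$ combined with monotonicity) is essentially the paper's argument in disguise. What each approach buys: yours is shorter and applies verbatim to any increasing $C^1$ density without computing anything about $g$; the paper's gives a concrete value of $\tau$, which is mildly relevant here because the preprocessing that splits edges into pieces of value at most $\tau$ blows up the instance by a factor $O(1/\tau)$, so an explicit $\tau$ quantifies that blow-up, although the claim as stated only requires existence.
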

\begin{proof}
    Since $g$ is increasing, we have $ \frac{1}{y} \cdot \int_\ell^{\ell+y} g(z)dz \leq g(\ell+y)$. Therefore, it suffices to find $\tau$ that satisfies
    $
    g(\ell+y) - g(\ell) \leq \frac{\varepsilon}{2}
    $
    for every $y \in (0, \tau]$.
    Observe that
    \begin{align*}
        g(\ell+y) - g(\ell)
        &
        = \frac{\ln c}{p} \cdot \frac{(1 - p) (c^{-\ell} - c^{-(\ell + y)})}{(p + (1-p) c^{-(\ell + y)})(p + (1-p) c^{-\ell})} 
        \\ &
        \leq \frac{\ln c}{p} \cdot \frac{1-p}{(p + \frac{1-p}{c})^2} \cdot (c^\tau - 1).
    \end{align*}
    Therefore, by choosing any $\tau > 0$ such that
    \[
    \tau \leq \log_c \left( 1 + \frac{p(p + \frac{1-p}{c})^2}{2(1-p) \ln c} \cdot \varepsilon \right),
    \]
    the claim follows.
\end{proof}

\begin{claim} \label{lem:ec:factor}
    For any $\ell \in [0, 1]$, we have
    \[
    g(\ell) \cdot (p + (1-p) \cdot e^{-p \cdot g(0) \cdot \ell}) = \frac{\ln c}{p}.
    \]
\end{claim}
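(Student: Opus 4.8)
The plan is to unwind the definitions and observe that the exponential factor $e^{-p\,g(0)\,\ell}$ is secretly equal to $c^{-\ell}$, which makes the bracketed term coincide exactly with the denominator appearing in $g(\ell)$.

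First I would evaluate $g(0)$. Plugging $z=0$ into the definition of $g$, the denominator is $p + (1-p)\cdot c^{0} = p + (1-p) = 1$, so $g(0) = \frac{\ln c}{p}$. Consequently $p\,g(0) = \ln c$, and therefore for any $\ell \in [0,1]$ we have $e^{-p\,g(0)\,\ell} = e^{-\ell\ln c} = c^{-\ell}$.

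Next I would substitute this identity into the left-hand side of the claim:
\[
    g(\ell)\cdot\bigl(p + (1-p)\cdot e^{-p\,g(0)\,\ell}\bigr)
    = \frac{\ln c}{p}\cdot\frac{1}{p + (1-p)\cdot c^{-\ell}}\cdot\bigl(p + (1-p)\cdot c^{-\ell}\bigr)
    = \frac{\ln c}{p},
\]
where the first equality uses the definition of $g(\ell)$ together with $e^{-p\,g(0)\,\ell} = c^{-\ell}$, and the second equality cancels the matching factor $p + (1-p)\,c^{-\ell}$ (which is strictly positive since $p \in (0,1]$ and $c > 1$). This completes the argument.

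There is essentially no obstacle here; the only thing to be careful about is noting that $g(0)$ simplifies cleanly because $c^{0}=1$, and that the cancellation is legitimate since the denominator never vanishes on $[0,1]$. The specific value $c = \frac{e^{p^2}-1}{p}+1$ plays no role in this identity — it only matters later when verifying $\int_0^1 g = 1$ — so I would not invoke it at all in this proof.
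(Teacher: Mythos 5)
Your proof is correct and follows essentially the same route as the paper's: both compute $g(0) = \frac{\ln c}{p}$, rewrite $e^{-p\,g(0)\,\ell}$ as $c^{-\ell}$, and cancel the factor $p + (1-p)\,c^{-\ell}$ against the denominator of $g(\ell)$. Your version just spells out the intermediate steps (and the nonvanishing of the denominator) a bit more explicitly.
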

\begin{proof}
    Note that $g(0) = (\ln c)/p$, and hence,
    \[
        p + (1-p) e^{-p \cdot g(0) \cdot \ell} = p + (1-p) \cdot c^{-\ell}.
    \]
    Multiplying $g(\ell)$ at both sides completes the proof.
\end{proof}

By choosing $\tau > 0$ due to \Cref{lem:ec:eps}, Inequality~\eqref{eq:ec:factor} implies
\begin{align*}
    \Pr[v \text{ marks } e]
    &
    \leq x_e \cdot \left( \frac{1}{x_e} \cdot \int_{\ell_v}^{\ell_v + x_e} g(z) dz \right) \cdot \left( p + (1-p) \cdot e^{-p \, g(0) \, \ell_v} \right) 
    \\ & 
    \leq x_e \cdot \left( g(\ell_v) + \frac{\epsilon}{2} \right) \cdot \left( p + (1-p) \cdot e^{-p \, g(0) \, \ell_v} \right) 
    \\ &
    \leq \left( \frac{\ln c}{p} + \frac{\epsilon}{2} \right) \, x_e,
\end{align*}
where the second inequality is due to \Cref{lem:ec:eps}, and the last is due to \Cref{lem:ec:factor}.
This completes the proof of \Cref{lem:ec:main}.
\end{proof}

\section{Conclusion} \label{sec:concl}
This paper presents an $O(\log^2 s)$-competitive rounding scheme for set cover under the subset arrival model with one mild assumption that the maximum subset size $s$ is known upfront.
This rounding scheme immediately implies an $O(\log^2 s)$-approximation algorithm for multi-stage stochastic set cover.
For edge cover, we present a $1.8$-competitive rounding scheme under the edge arrival model.

A trivial lower bound on competitive factor is $\Omega(\log s)$ due to the integrality gap of the standard LP relaxation for set cover.
Hence, a natural open question is to close the gap of competitive factor under the subset arrival model.
To this end, it is interesting to consider the edge cover problem where the solution to be rounded is \emph{half-integral}.
Observe that a cycle of length 3 yields an optimal half-integral solution of integrality gap $\frac{4}{3}$.
Meanwhile, due to \Cref{lem:off:halfint}, we have rounding schemes with the matching factors under the offline and element arrival models, respectively.
If we find a rounding scheme with the same factor under the edge arrival model, there may exist an $O(\log s)$-competitive rounding scheme in general; on the other hand, if it is provably impossible to obtain such a rounding scheme, it may be an evidence that the subset arrival model is strictly harder than the other two models.

\section*{Acknowledgment}
We thank Marek Adamczyk for valuable discussion 
on the existing results for contention resolution problems.

\bibliographystyle{plain}
\bibliography{ref}

\appendix
\section{Proof of \Cref{lem:off:reduce}} \label{app:def:off}

We restate \Cref{lem:off:reduce} below.
\lemoffreduce*

\noindent
Note that $G = (U \cup V, E)$ is irreducible if and only if $\sum_{v' \in V \setminus \{v\}} (|N_G(v')| - 1) = 0$.
We can therefore see that \Cref{lem:off:reduce} can be proved by repeated applications of the next lemma.

\begin{lemma} \label{lem:off:reduce:step}
    For a reducible $v$-complete $G = (U \cup V, E)$ and $x \in \R^V_{\geq 0}$ feasible to the LP relaxation with respect to $G$, there exists a $v$-complete $\Ghat = (U \cup \Vhat, \Ehat)$ and $\xhat \in \R^{\Vhat}_{\geq 0}$ feasible to the LP relaxation with respect to $\Ghat$ such that
    \begin{enumerate}
        \item \label{prop:off:reduce:step01} $\xhat_v = x_v$, 
        \item \label{prop:off:reduce:step02} $\Pr[v \in C(G, x)] \leq \Pr[v \in C(\Ghat, \xhat)]$, and 
        \item \label{prop:off:reduce:step03} $\sum_{v' \in V \setminus \{v\}} (|N_G(v')| - 1) > \sum_{v' \in V \setminus \{v\}} (|N_{\Ghat}(v')| - 1)$.
    \end{enumerate}    
\end{lemma}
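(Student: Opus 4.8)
The plan is to prove \Cref{lem:off:reduce:step} by a single ``split'' step that mirrors, in the simpler non-adaptive offline setting, the reduction used for \Cref{lem:set:irre:main}. First I would pick any reducible subset vertex $\vcirc \in V \setminus \{v\}$ (so $|N_G(\vcirc)| \geq 2$) and any $\ucirc \in N_G(\vcirc)$, and build $(\Ghat, \xhat)$ by detaching $\ucirc$ from $\vcirc$ onto a fresh singleton: set $\Vhat := V \cup \{\vhatcirc\}$, $N_{\Ghat}(\vcirc) := N_G(\vcirc) \setminus \{\ucirc\}$, $N_{\Ghat}(\vhatcirc) := \{\ucirc\}$, $N_{\Ghat}(v') := N_G(v')$ for the rest, and $\xhat_{v'} := x_{v'}$ for $v' \in V$, $\xhat_{\vhatcirc} := x_{\vcirc}$. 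Then $\Ghat$ is $v$-complete (the vertex $v$ is untouched), Property~\ref{prop:off:reduce:step01} is immediate, and Property~\ref{prop:off:reduce:step03} holds because $\vcirc$'s contribution drops by exactly $1$ while $\vhatcirc$ contributes $0$; feasibility of $\xhat$ is preserved since the only covering constraint that changes is the one for $\ucirc$, whose left-hand side stays $\sum_{v' \in N_G(\ucirc)} x_{v'} \geq 1$ after replacing $x_{\vcirc}$ by $\xhat_{\vhatcirc} = x_{\vcirc}$.

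For Property~\ref{prop:off:reduce:step02} I would couple the two runs of \Cref{alg:off} by identifying $Z_{v'}$ for every $v' \in V \setminus \{\vcirc\}$ and conditioning on all of them; the remaining randomness is $Z_{\vcirc} \sim \Exp(x_{\vcirc})$ in the $G$-run and two independent copies $\Zhat_{\vcirc}, \Zhat_{\vhatcirc} \sim \Exp(x_{\vcirc})$ in the $\Ghat$-run. Every element $u \notin N_G(\vcirc)$ has the identical competition in both runs (it contains neither $\vcirc$ nor $\vhatcirc$), so ``$v$ wins at $u$'' is a fixed event equal across the two runs; if it holds then $v$ is selected in both and there is nothing to prove, so assume it fails. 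With all conditioned values fixed, put $M_u := \min_{v' \in N_G(u) \setminus \{v, \vcirc\}} Z_{v'}$ for $u \in N_G(\vcirc)$, $M^{\star} := \max_{u \in N_G(\vcirc)} M_u$, and $M^{\star\star} := \max_{u \in N_G(\vcirc) \setminus \{\ucirc\}} M_u$ (defined because $|N_G(\vcirc)| \geq 2$), so $M^{\star} = \max\{M_{\ucirc}, M^{\star\star}\}$. Unwinding the $\argmin$ rule then gives $\{v \in C(G)\} = \{Z_{\vcirc} > Z_v\} \cap \{Z_v < M^{\star}\}$ and $\{v \in C(\Ghat)\} = \big(\{\Zhat_{\vhatcirc} > Z_v\} \cap \{Z_v < M_{\ucirc}\}\big) \cup \big(\{\Zhat_{\vcirc} > Z_v\} \cap \{Z_v < M^{\star\star}\}\big)$, where $Z_v$ is now a fixed constant.

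It then remains to compare the conditional probabilities. Writing $q := e^{-x_{\vcirc} Z_v} \in (0, 1]$, the $G$-side has probability $\mathbf{1}[Z_v < M^{\star}] \cdot q$. For the $\Ghat$-side, assume without loss of generality that $M_{\ucirc} = M^{\star}$ (the case $M^{\star\star} = M^{\star}$ is symmetric) and split on $Z_v$: if $Z_v \geq M^{\star}$ both sides are $0$; if $Z_v < M^{\star\star}$ the $\Ghat$-event is $\{\Zhat_{\vhatcirc} > Z_v\} \cup \{\Zhat_{\vcirc} > Z_v\}$ with probability $1 - (1-q)^2 = q(2-q) \geq q$; and if $M^{\star\star} \leq Z_v < M^{\star}$ it is $\{\Zhat_{\vhatcirc} > Z_v\}$ with probability $q$. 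In every case the $\Ghat$-side is $\geq$ the $G$-side, so unconditioning gives Property~\ref{prop:off:reduce:step02}. (Alternatively, one can run verbatim the ``swap the two i.i.d.\ exponentials $\Zhat_{\vcirc}, \Zhat_{\vhatcirc}$'' bijection of \Cref{lem:set:irre:main}.) I expect the only mildly delicate points to be the bookkeeping in unwinding $\argmin$ to the threshold form and the degenerate subcases ($M_u = +\infty$ when $N_G(u) = \{v, \vcirc\}$, or $x_{\vcirc} = 0$ so the exponentials are $\infty$), but the adaptivity-driven ``bad event'' analysis of the subset-arrival proof collapses here to the one-line inequality $q(2-q) \geq q$.
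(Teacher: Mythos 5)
Your proposal is correct, and it uses exactly the paper's construction: detach $\ucirc$ from $\vcirc$ onto a fresh singleton $\vhatcirc$ carrying the same value $x_{\vcirc}$, check Properties~1 and~3 and feasibility directly, and condition on the shared clocks $\{Z_{v'}\}_{v' \in V\setminus\{\vcirc\}}$ so that only $Z_{\vcirc}$ (resp.\ $\Zhat_{\vcirc},\Zhat_{\vhatcirc}$) remains random. The only point where you diverge is the verification of Property~2: you unwind the $\argmin$ rule into the threshold form with $M_u$, $M^{\star}$, $M^{\star\star}$ and compare conditional probabilities directly, which reduces to $\mathbf{1}[Z_v<M^{\star}]\,q$ versus $q$ or $q(2-q)\ge q$ in each regime of $Z_v$; the paper instead couples $Z_{\vcirc}=\Zhat_{\vcirc}$ and runs a swap argument on the i.i.d.\ pair $(Z_{\vcirc},\Zhat_{\vhatcirc})$, mapping each bad realization $(z,\zhat)$ (with $v\in C$, $v\notin\Chat$) to the good realization $(\zhat,z)$. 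Both arguments are valid and of comparable length; your explicit computation is arguably more transparent here (and, as you note, only needs the two remaining clocks to be i.i.d., not exponential, since everything is phrased through $q=\Pr[\cdot>Z_v]$), while the paper's swap is the formulation that transfers verbatim to the adaptive subset-arrival analogue (\Cref{lem:set:irre:main}), where an explicit conditional computation would be messier. The degenerate cases you flag ($M_u=+\infty$, zero rates, ties) are harmless and are glossed over in the paper as well.
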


\begin{proof}
    Let $\vcirc \in V \setminus \{v\}$ be any subset vertex with $|N_G(\vcirc)| \geq 2$, and let $\ucirc \in N_G(\vcirc)$ be an arbitrary neighbor of $\vcirc$.
    We construct $\Ghat = (U \cup \Vhat, \Ehat)$ and $\xhat$ by removing $(\ucirc, \vcirc)$ from $E$ and instead adding a new subset vertex $\vhatcirc$ that has the same solution value as $x_{\vcirc}$ and is adjacent with $\ucirc$.
    That is to say, we define $\Vhat := V \cup \{\vhatcirc\}$, $\Ehat := E \setminus \{(\ucirc, \vcirc)\} \cup \{(\ucirc, \vhatcirc)\}$, and 
    \begin{equation} \label{eq:off:reduce:xhat}
        \xhat_{v'} := \begin{cases}
            x_{v'}, & \text{if $v' \in V$,}\\
            x_{\vcirc}, & \text{if $v' = \vhatcirc$}.
        \end{cases}
    \end{equation}
    Note that $\Ghat$ is still $v$-complete, and $\xhat$ is indeed feasible to the LP relaxation with respect to $\Ghat$.
    Moreover, Properties~\ref{prop:off:reduce:step01} and~\ref{prop:off:reduce:step03} are immediately satisfied.

    To prove Property~\ref{prop:off:reduce:step02}, we consider the executions of \Cref{alg:off} given $(G, x)$ and $(\Ghat, \xhat)$, respectively.
    Let us use $\Chat$ and $\Zhat$ to denote $C$ and $Z$, respectively, in the execution given $(\Ghat, \xhat)$.
    Due to \Cref{eq:off:reduce:xhat}, we can easily see that $Z_{v'}$ and $\Zhat_{v'}$ are sampled from the same distribution $\Exp(x_{v'}) = \Exp(\xhat_{v'})$ for $v' \in \Vhat \setminus \{\vhatcirc\} = V$.

    Let us fix and condition on $Z_{v'} = \Zhat_{v'}$ for every $v' \in \Vhat \setminus \{\vcirc, \vhatcirc\}$.
    Hence, only $Z_{\vcirc}$, $\Zhat_{\vcirc}$, and $\Zhat_{\vhatcirc}$ remain random in both executions.
    For $Z_{\vcirc}$ and $\Zhat_{\vcirc}$, we couple these random variables to have the same realizations in both executions.
    Let $B \subseteq \R^2_{\geq 0}$ denote the set of ``bad'' events $(z, \zhat)$ such that $v \in C$ and $v \not\in \Chat$ when $Z_{\vcirc} = \Zhat_{\vcirc} = z$ and $\Zhat_{\vhatcirc} = \zhat$ (recall that we couple $Z_{\vcirc}$ and $\Zhat_{\vcirc}$).
    Let $\Bhat := \{ (\zhat, z) \mid (z, \zhat) \in B \}$.
    We claim that $\Bhat$ is a set of ``good'' events.

    \begin{claim}
        For every $(\zhat, z) \in \Bhat$, when $Z_{\vcirc} = \Zhat_{\vcirc} = \zhat$ and $\Zhat_{\vhatcirc} = z$, we have $v \not\in C$ and $v \in \Chat$.
    \end{claim}
    \begin{proof}
        We say that an element vertex $u$ \emph{marks} a subset vertex $v'$ if $Z_{v'}$ is the minimum among those from the neighbors of $u$ in the respective execution.
        Note that $v'$ is inserted into the solution if one neighboring element vertex marks itself.

        Observe that, if any element vertex $u$ other than $\ucirc$ marks $v$ in the execution given $(G, x)$, $u$ must also mark $v$ in the execution given $(\Ghat, \xhat)$ since $N_G(u) = N_{\Ghat}(u)$ and $Z_{v'} = \Zhat_{v'}$ for every $v' \in V$.
        Therefore, to have $v \in C$ and $v \not\in \Chat$, only $\ucirc$ marks $v$ in the execution given $(G, x)$ while $\ucirc$ does not mark $v$ in the execution given $(\Ghat, \xhat)$.
        That is, $(z, \zhat) \in B$ only if $\min_{v' \in N_G(\ucirc)} \{Z_{v'}\} = Z_v$ and $\min_{v' \in N_{\Ghat}(\ucirc)} \{\Zhat_{v'}\} = \Zhat_{\vhatcirc} = \zhat$.
        We can thus easily see that $\zhat < Z_v \leq z$.

        Consider now the executions when $Z_{\vcirc} = \Zhat_{\vcirc} = \zhat$ and $\Zhat_{\vhatcirc} = z$.
        Since $\zhat < Z_v$, we can see that no element vertices would not mark $v$ in the execution given $(G, x)$, implying that $v \not\in C$.
        On the other hand, in the execution given $(\Ghat, \xhat)$, $\ucirc$ would mark $v$ because now $\min_{v' \in N_{\Ghat}(\ucirc)}\{\Zhat_{v'}\} =
        Z_v \leq z$, yielding that $v \in \Chat$.
    \end{proof}

    Due to the claim, we can obtain
    \begin{align*}
        \Pr[v \in \Chat] - \Pr[v \in C] 
        & = \Pr[v \not\in C, v \in \Chat] - \Pr[v \in C, v \not\in \Chat]
        \\ &
        \geq \Pr[(\zhat, z) \in \Bhat] - \Pr[(z, \zhat) \in B]
        \\ &
        = 0,
    \end{align*}
    where the probabilities are under the condition, and the last equality comes from the fact that $Z_{\vcirc} = \Zhat_{\vcirc}$ and $\Zhat_{\vhatcirc}$ are independent and identically distributed.
    Unconditioning the above equation immediately implies Property~\ref{prop:off:reduce:step02}.
\end{proof}

\end{document}